\def\eprintversion{1}
\ifnum\eprintversion=0
	\documentclass{llncs}
\else
	\documentclass{llncs}[11pt]
	\usepackage{fullpage}
\fi

\newif\ifsubmission
\submissionfalse

\usepackage[dvips]{graphicx} 
\usepackage{enumerate}
\usepackage{amsmath}
\usepackage{amsfonts}
\usepackage{amssymb}
\usepackage{color} 
\usepackage{multicol}
\usepackage{mathabx}
\usepackage{calc}
\usepackage[
	pdftitle={Quantum Key-length Extension},
    pdfauthor={Joseph Jaeger, Fang Song, Stefano Tessaro},colorlinks=true,linkcolor=blue,citecolor=blue]{hyperref}
\usepackage{colonequals}  
\usepackage{cleveref}
\usepackage{xcolor}
\usepackage{wrapfig}

\definecolor{rvone}{RGB}{139,34,82}
\definecolor{rvtwo}{rgb}{0.00,0.62,0.22}

\renewcommand{\epsilon}{\varepsilon}



\renewcommand{\Pr}{\mathsf{Pr}}

\newcommand{\N}{\mathbb{N}}

\newcommand{\bigoh}{\mathcal{O}}
\newcommand{\Adv}{\mathsf{Adv}}

\newcommand{\sG}{{\normalfont \textsf{G}}}
\newcommand{\sH}{{\normalfont \textsf{H}}}

\newcommand{\bD}{\mathbf{D}}

\newcommand{\advA}{\mathcal{A}}
\newcommand{\advD}{\mathcal{B}}
\newcommand{\distD}{\bD}

\newcommand{\Exp}[1]{\mathsf{E}\left[#1\right]}

\makeatletter
\newcommand\parag{%
   \@startsection{paragraph}{4}{\z@}%
       {-3\p@ \@plus -1\p@ \@minus -1\p@}%
       {-0.2em \@plus -0.12em \@minus -0.05em}%
       {\normalfont\normalsize\scshape}}
\newcommand{\heading}[1]{\parag{#1}}
\makeatother



\newcommand{\ind}{\hspace*{10pt}}

\newcommand{\true}{\texttt{true}}

\newcommand{\concat}{\,\|\,}

\newcommand{\sett}[1]{\{#1\}}
\newcommand{\getsr}{{\:{\leftarrow{\hspace*{-3pt}\raisebox{.75pt}{$\scriptscriptstyle\$$}}}\:}}
\newcommand{\bits}{\{0,1\}}
\newcommand{\xor}{\oplus}

\newcommand{\setI}{\mathcal{I}}
\newcommand{\setO}{\mathcal{O}}

\newcommand{\commentt}[1]{\hspace{2pt}{\small /$\!$/ \textbf{#1}}}
\newcommand{\tH}{\widetilde{H}}



\definecolor{highlight-gray}{gray}{0.90}
\definecolor{highlight-yellow}{cmyk}{0,0,0.90,0}
\definecolor{highlight-cyan}{cmyk}{0.40,0,0,0}
\definecolor{highlight-magenta}{cmyk}{0,0.90,0,0}

\newcommand{\gamechange}[2][highlight-gray]{{\setlength{\fboxsep}{0pt}\colorbox{#1}{\ifmmode$\displaystyle#2$\else#2\fi}}}
\newcommand{\gamechangey}[2][highlight-yellow]{{\setlength{\fboxsep}{0pt}\colorbox{#1}{\ifmmode$\displaystyle#2$\else#2\fi}}}
\newcommand{\gamechangec}[2][highlight-cyan]{{\setlength{\fboxsep}{0pt}\colorbox{#1}{\ifmmode$\displaystyle#2$\else#2\fi}}}
\newcommand{\gamechangem}[2][highlight-magenta]{{\setlength{\fboxsep}{0pt}\colorbox{#1}{\ifmmode$\displaystyle#2$\else#2\fi}}}

\newcommand{\hlg}[1]{\gamechange{#1}}

\newcommand{\gamechanger}[2][highlight-green]{{\setlength{\fboxsep}{0pt}\colorbox{#1}{\ifmmode$\displaystyle#2$\else#2\fi}}}
\colorlet{highlight-green}{green!40!yellow}


\newcommand{\Func}{\mathsf{Fcs}}

\newcommand{\Icm}{\mathsf{Ics}}
\newcommand{\Inj}{\mathsf{Inj}}

\newcommand{\ro}{\mathbf{H}}
\newcommand{\ic}{\mathbf{E}}


\newcommand{\FF}{\mathsf{F}}
\newcommand{\EF}{\mathsf{E}}

\newcommand{\FX}{\mathsf{FX}}
\newcommand{\FFX}{\mathsf{FFX}}
\newcommand{\DE}{\mathsf{DE}}

\newcommand{\Dom}[1]{\mathsf{Dom}(#1)}

\newcommand{\il}[1]{#1.\mathsf{il}}
\newcommand{\ol}[1]{#1.\mathsf{ol}}
\newcommand{\bl}[1]{#1.\mathsf{bl}}
\newcommand{\kl}[1]{#1.\mathsf{kl}}


\newcommand{\prf}{\mathsf{prf}}

\newcommand{\sprp}{\mathsf{sprp}}

\newcommand{\kpa}{\mbox{-}\mathsf{na}}

\newcommand{\dist}{\mathsf{dist}}

\newcommand{\guess}{\mathsf{guess}}
\newcommand{\ldis}{\mathsf{ld}}

\newcommand{\dc}{\mbox{-}\mathsf{d}}
\newcommand{\srch}{\mbox{-}\mathsf{s}}

\newcommand{\figref}[1]{Fig.~\ref{#1}}
\newcommand{\secref}[1]{Section~\ref{#1}}
\newcommand{\apref}[1]{Appendix~\ref{#1}}
\newcommand{\thref}[1]{Theorem~\ref{#1}}
\newcommand{\lemref}[1]{Lemma~\ref{#1}}


\newcommand{\procEv}{\textsc{Ev}}
\newcommand{\procInv}{\textsc{Inv}}
\newcommand{\procRo}{\textsc{Ro}}
\newcommand{\procOrac}{\textsc{O}}
\newcommand{\procIc}{\textsc{Ic}}
\newcommand{\procIcInv}{\textsc{Inv}}

\newcommand{\procFEv}{\textsc{FEv}}
\newcommand{\procFRo}{\textsc{FRo}}

\newcommand{\had}{\mathcal{H}}
\newcommand{\dec}{\mathcal{T}}

\newcommand{\lang}{\mathcal{L}}
\newcommand{\rel}{\mathcal{R}}
\newcommand{\bigi}{\mathcal{I}}

\newcommand{\dom}{D}
\newcommand{\rng}{R}

\newcommand{\pp}{\textsf{PB}}
\newcommand{\ld}{\textsf{1LD}}
\newcommand{\ed}{\textsf{ED}}
\newcommand{\oed}{\textsf{1ED}}

\newcommand{\suchthat}{{\mbox{\ s.t.\ }}}

\newcommand{\gamesfontsize}{\small}

\newcommand{\oneCol}[2]{
\begin{center}
        \framebox{
        \begin{tabular}{c@{\hspace*{.4em}}}
        \begin{minipage}[t]{#1\textwidth}
        	\gamesfontsize #2 \end{minipage}
        \end{tabular}
        }
\end{center}
}

\newcommand{\oneColNoBox}[1]{
\begin{center}\begin{tabular}{c}
\begin{minipage}[t]{1in}\begin{tabbing}
123\=123\=123\=\kill
#1
\end{tabbing}\end{minipage} 
\end{tabular}\end{center}}

\newcommand{\twoColsNoDivide}[4]{
\begin{center}
        \framebox{
        \begin{tabular}{c@{\hspace*{.4em}}c@{\hspace*{.4em}}c}
        \begin{minipage}[t]{#1\textwidth}
        	\gamesfontsize #3 \end{minipage}
        &
        \begin{minipage}[t]{#2\textwidth}
        	\gamesfontsize #4 \end{minipage}
        \end{tabular}
        }
\end{center}
}

\newcommand{\threeColsNoDivide}[6]{
\begin{center}
        \framebox{
        \begin{tabular}{c@{\hspace*{.4em}}c@{\hspace*{.4em}}c}
        \begin{minipage}[t]{#1\textwidth}
        	\gamesfontsize #4 \end{minipage}
        &
        \begin{minipage}[t]{#2\textwidth}
        	\gamesfontsize #5 \end{minipage}
        &
        \begin{minipage}[t]{#3\textwidth}
        	\gamesfontsize #6 \end{minipage}
        \end{tabular}
        }
\end{center}
}

\usepackage{thm-restate}

\usepackage{physics}
\newif\ifnotes
\notesfalse
\ifnotes
\newcommand{\js}[1]{$\ll$\textsf{\color{blue} JS: { #1}}$\gg$}
\newcommand{\st}[1]{$\ll$\textsf{\color{red} ST: { #1}}$\gg$}
\newcommand{\fs}[1]{$\ll$\textsf{\color{cyan} FS: { #1}}$\gg$}

\else
\newcommand{\js}[1]{}
\newcommand{\st}[1]{}
\newcommand{\fs}[1]{}
\fi

\pagestyle{plain}
\begin{document}

\ifnum\eprintversion=0
	\title{Quantum Key-length Extension}
\else
	\title{Quantum Key-length Extension}
\fi

\ifnum\eprintversion=0
	\ifsubmission
		\author{Joseph Jaeger\inst{1} \and Fang Song\inst{2} \and Stefano Tessaro\inst{1}}
		\institute{
			Paul G.\ Allen School of Computer Science \& Engineering\\
			University of Washington, Seattle, US\\
			\email{\{jsjaeger,tessaro\}@cs.washington.edu}
			\and
			Portland State University, Portland, Oregon, US\\
			\email{fang.song@pdx.edu}}
	\else
		\author{Joseph Jaeger\inst{1} \and Fang Song\inst{2} \and Stefano Tessaro\inst{3}}
		\institute{
			Georgia Institute of Technology, Atlanta, Georgia, US\\
`			\email{josephjaeger@gatech.edu}
			\and
			Portland State University, Portland, Oregon, US\\
			\email{fang.song@pdx.edu}
			\and
			Paul G.\ Allen School of Computer Science \& Engineering\\
			University of Washington, Seattle, Washington, US\\
			\email{tessaro@cs.washington.edu}	
		}
	\fi
\else
		\author{Joseph Jaeger\inst{1} \and Fang Song\inst{2} \and Stefano Tessaro\inst{3}}
		\institute{
			Georgia Institute of Technology, Atlanta, Georgia, US\\
			\email{josephjaeger@gatech.edu}
			\and
			Portland State University, Portland, Oregon, US\\
			\email{fang.song@pdx.edu}
			\and
			Paul G.\ Allen School of Computer Science \& Engineering\\
			University of Washington, Seattle, Washington, US\\
			\email{tessaro@cs.washington.edu}	
			}
\fi

\maketitle

\begin{abstract}

  Should quantum computers become available, they will reduce the
  effective key length of basic secret-key primitives, such as
  blockciphers. To address this we will either need to use
  blockciphers with inherently longer keys or develop key-length
  extension techniques to amplify the security of a
  blockcipher to use longer keys.

  We consider the latter approach and revisit the FX and double encryption
  constructions. 
  Classically, FX was proven to be a secure key-length extension
  technique, while double encryption fails to be more secure than
  single encryption due to a meet-in-the-middle attack. In this work
  we provide positive results, with concrete and tight bounds, for the
  security of \emph{both} of these constructions against quantum
  attackers in ideal models.

  For FX, we consider a partially-quantum model, where the attacker
  has quantum access to the ideal primitive, but only classical access
  to FX. This is a natural model and also the strongest possible,
  since effective quantum attacks against FX exist in the
  fully-quantum model when quantum access is granted to both
  oracles. We provide two results for FX in this model. The first
  establishes the security of FX against non-adaptive attackers. The
  second establishes security against general adaptive attackers for a
  variant of FX using a random oracle in place of an ideal
  cipher. This result relies on the techniques of Zhandry (CRYPTO '19)
  for lazily sampling a quantum random oracle. An extension to
  perfectly lazily sampling a quantum random permutation, which would
  help resolve the adaptive security of standard FX, is an important
  but challenging open question. We introduce techniques for
  partially-quantum proofs without relying on analyzing the classical
  and quantum oracles separately, which is common in existing
  work. This may be of broader interest.
  
  For double encryption, we show that it amplifies strong
  pseudorandom permutation security in the fully-quantum model,
  strengthening a known result in the weaker sense of key-recovery
  security. This is done by adapting a technique of Tessaro and
  Thiruvengadam (TCC '18) to reduce the security to the difficulty of
  solving the list disjointness problem and then showing its hardness
  via a chain of reductions to the known quantum difficulty of the
  element distinctness problem.

\end{abstract}

\section{Introduction}

The looming threat of quantum computers has inspired significant
efforts to design and analyze post-quantum cryptographic schemes. In
the public-key setting, polynomial-time quantum algorithms for
factoring and computing discrete logarithms essentially break all
practically deployed primitives~\cite{FOCS:Shor94}. 

In the secret-key setting, Grover's quantum search
algorithm~\cite{grover} will reduce the effective key length of
secret-key primitives by half. Thus, a primitive like the AES-128
blockcipher which may be thought to have 128 bits of security against
classical computers may provide no more than 64 bits of security
against a quantum computer, which would be considered significantly
lacking.  Even more worrisome, it was shown relatively recent that
quantum computers can break several secret-key constructions
completely such as the Even-Mansour blockcipher~\cite{KM12} and
CBC-MAC~\cite{ToSC:KLLN16} if we grant the attacker fully quantum
access to the cryptosystem.

This would not be the first time that we find ourselves using too
short of a key.  A similar issue had to be addressed when the DES
blockcipher was widely used and its 56 bit keylength was considered
insufficient.  Following approaches considered at that time, we can
either transition to using basic primitives which have longer keys
(e.g. replacing AES-128 with AES-256) or design key-length extension
techniques to address the loss of concrete security due to quantum
computers.  In this paper we analyze the latter approach.  We consider
two key-length extension techniques, FX~\cite{JC:KilRog01} and double
encryption, and provide provable bounds against quantum attackers in
{\em ideal} models.

Of broader and independent interest, our study of FX focuses on a
hybrid quantum model which only allows for {\em classical} online
access to the encrypted data, whereas offline computation is
quantum. This model is sometimes referred to as the ``Q1 model'' in
the cryptanalysis literature~\cite{ToSC:KLLN16,BHNPS19}, in contrast
to the fully-quantum, so-called ``Q2 model'', which allows for quantum
online access. This is necessary in view of existing attacks in the Q2
model showing that FX is no more secure than the underlying
cipher~\cite{AC:LeaMay17}, but {\em also}, Q1 is arguably more
realistic and less controversial than Q2. We observe that (as opposed
to the plain model) ideal-model proofs in the Q1 model can be {\em
  harder} than those in the Q2 model, as we need to explicitly account
for {\em measuring} the online queries to obtain improved bounds. {In
  many prior ideal-model Q1 proofs,
  e.g.~\cite{EC:KilLyuSch18,KSS+20,HKSU20,BHH+19,KYY21}, this
  interaction is handled essentially {for free} because the effects of
  online and offline queries on an attacker's advantage can largely be
  analyzed separately.} Our work introduces techniques to handle the 
  interaction
between classical online queries and quantum offline ideal-model
queries in Q1 proofs that cannot be analyzed separately. On the other hand, our result on double
encryption considers the full Q2 model -- and interestingly,
restricting adversaries to the Q1 model does not improve the bound. To
be self-explanatory we will often refer to the Q1 and Q2 models as the
partially-quantum and fully-quantum models, respectively.

The remainder of this introduction provides a detailed overview of our
results for these two constructions, and of the underlying challenges
and techniques.

\subsection{The FX Construction}
The FX construction was originally introduced by Kilian and Rogaway~\cite{JC:KilRog01} as a generalization of Rivest's DESX construction.
Consider a blockcipher $\EF$ which uses a key $K\in\bits^k$ to encrypt messages $M\in\bits^n$.
Then the FX construction introduces ``whitening'' key $K_2\in\bits^n$ which is xor-ed into the input and output of the blockcipher.
Formally, this construction is defined by $\FX[\EF](K \concat K_2, M)= \EF_{K}(M \xor K_2) \xor K_2$.
(Note that the Even-Mansour blockcipher~\cite{JC:EveMan97} may be
considered to be a special case of this construction where $k=0$,
i.e., the blockcipher is a single permutation.)
This construction has negligible efficiency overhead as compared to using $\EF$ directly.

Kilian and Rogaway proved this scheme secure against classical attacks in the ideal cipher model.
In particular, they established that
\begin{displaymath}
  \Adv^{\sprp}_{\FX}(\advA)\leq pq/2^{k+n-1}\, . 
\end{displaymath}
Here $\Adv^{\sprp}$ measures the advantage of $\advA$ in breaking the strong pseudorandom permutation (SPRP) security of $\FX$ while making at most $p$ queries to the ideal cipher and at most $q$ queries to the $\FX$ construction.
Compared to the $p/2^{k}$ bound achieved by $\EF$ alone, this is a clear improvement so $\FX$ can be considered a successful key-length extension technique again classical attackers. 

Is this construction equally effective in the face of quantum
attackers? The answer is unfortunately negative. Leander and
May~\cite{AC:LeaMay17}, inspired by a quantum attack due to Kuwakado
and Morii~\cite{KM12} that completely breaks Even-Mansour blockcipher,
gave a quantum attack against FX, which shows that the whitening keys
provide essentially no additional security over that achieved by $\EF$
in isolation. Bonnetain, et al.~\cite{BHNPS19} further reduced
the number of \emph{online} quantum queries in the attack. Roughly
speaking, $O(n)$ quantum queries to FX construction and $O(n 2^{k/2})$
local quantum computations of the blockcipher suffice to recover the secret
encryption key.
Note that, however, such attacks require full quantum access to both
the ideal primitive and to the instance FX that is under attack, i.e.
they are attacks in the fully-quantum model. The latter is rather
strong and may be considered unrealistic. While we cannot prevent a
quantum attacker from locally evaluating a blockcipher in quantum
superposition, honest implementations of encryption will likely
continue to be classical.\footnote{Some may argue that maintaining
  purely classical states, e.g., enforcing perfect measurements is
  also non-trivial physically. However, we deem maintaining coherent
  quantum superposition significantly more
  challenging.}

\heading{Partially-quantum model.} Because of the realistic concern of
the fully-quantum model and the attacks therein that void key
extension in FX, we turn to the {partially-quantum} model in which
the attacker makes quantum queries to ideal primitives, but only
\emph{classical} queries to the cryptographic constructions.

In this model there has been extensive quantum cryptanalysis on FX and
related constructions~\cite{HS18,BHNPS19}. The best existing
attack~\cite{BHNPS19} recovers the key of FX using roughly
$2^{(k+n)/3}$ classical queries to the construction and $2^{(k+n)/3}$
quantum queries to the ideal cipher. However, to date, despite the
active development in provable quantum security, we are not aware of
SPRP or just PRF security analysis, which gives stronger security
guarantees. Namely, it should not just be infeasible to retrieve a
key, but also to merely distinguish the system from a truly random
permutation (or function). We note that in the special case where the
primitives are plain-model instantiations (e.g., non-random-oracle
hash functions), with a bit of care many security reductions carry
over to the quantum setting~\cite{PQCRYPTO:MenSze17}. This is because
the underlying primitives are hidden from the adversary, and hence the
difficulty arising from the interaction of classical and quantum
queries to two correlated oracles becomes irrelevant.

Our main contribution on FX is to prove, for the first time,
indistinguishability security in the partially-quantum model, in two
restricted ways. Although they do not establish the complete security,
our security bounds are tight in their respective
settings.\footnote{Throughout, when mentioning tightness, we mean it
  with respect to the resources required to achieve advantage around
  one. The roots in our bounds make them weaker for lower resource
  regimes. Removing these is an interesting future direction.}

\heading{Non-adaptive security.}
We first consider non-adaptive security where we restrict the adversary such that
its classical queries to the FX construction (but not to the underlying ideal cipher) must be specified before execution has begun.
We emphasize that non-adaptive security of a blockcipher suffices to prove adaptive security for many practical uses of blockciphers such as the various randomized or stateful encryption schemes (e.g. those based on counter mode or output feedback mode) in which an attacker would have no control over the inputs to the blockcipher.

In this setting the bound we prove is of the form 
\begin{displaymath}
\Adv^{\sprp\kpa}_{\FX}(\advA)\leq O\left(\sqrt{p^2q/2^{k+n}}\right).
\end{displaymath}
Supposing $k=n=128$ (as with AES-128), an attacker able to make
$p\approx 2^{64}$ queries to the ideal cipher could break security of
$\EF$ in isolation.  But to attack FX, such an attacker with access to
$q\approx 2^{64}$ encryptions would need to make $p\approx 2^{96}$
queries to the ideal cipher. In fact we can see from our bound that
breaking the security with constant probability would require the
order of $\Omega(2^{(k+n)/3})$ queries in total, matching the bound
given in the attacks mentioned above~\cite{BHNPS19}. Hence our bound
is tight.

To prove this bound we apply a one-way to hiding (O2H) theorem of
Ambainis, Hamburg, and Unruh~\cite{C:AmbHamUnr19}, an improved version
of the original one in~\cite{unruh}. This result provides a clean
methodology for bounding the probability that an attacker can
distinguish between two functions drawn from closely related
distributions
given quantum access.  The non-adaptive setting allows us to apply
this result by sampling the outputs of FX ahead of time and then
considering the ideal world in which the ideal cipher is chosen
independently of these outputs and the real world in which we very
carefully reprogram this ideal cipher to be consistent with the
outputs chosen for FX.  These two ideal ciphers differ only in the
$O(q)$ places where we need to reprogram.

\heading{Adaptive security of FFX.}
As a second approach towards understanding fully adaptive security of
FX, we consider a variant construction (which we call FFX for
``function FX'') that replaces the random permutation with a random
function.  In particular, suppose $\FF$ is a function family which uses a
key $K\in\bits^k$ on input messages $M\in\bits^n$ to produce outputs
$C\in\bits^m$. Then we define
$\FFX[\FF](K\concat K_2, M)= \FF_{K}(M \xor K_2)$.\footnote{Note we
  have removed the external xor with $K_2$. In FX this xor is necessary, but in our analysis it would not provide any benefit for FFX.}  For this
construction we prove a bound of the form
\begin{displaymath}
	\Adv^{\prf}_{\FFX}(\advA)\leq O\left(\sqrt{{p^2q}/{2^{k+n}}}\right).
\end{displaymath}
in the partially-quantum random oracle model. Note that this matches
the bound we obtained for the non-adaptive security of FX. Since the
same key-recovery attack~\cite{BHNPS19} also applies here, it follows
that our bound is tight as well. Our proof combines two techniques of
analyzing a quantum random oracle, the O2H theorem above and a
simulation technique by Zhandry~\cite{C:Zhandry19}. The two techniques
usually serve distinct purposes. O2H is helpful to program a random
oracle, whereas Zhandry's technique is typically convenient for
(compactly) maintaining a random oracle and providing some notion of
``recording'' the queries. In essence, in the two function
distributions of O2H for which we aim to argue indistinguishability,
we apply Zhandry's technique to simulate the functions in a compact
representation. As a result, analyzing the guessing game in O2H, which
implies indistinguishability, becomes intuitive and much
simplified. This way of combining them could also be useful elsewhere.

To build intuition for the approach of our proof, let us first consider one way to prove the security of this construction classically.
The core idea is to use lazy sampling.
In the ideal world, we can independently lazily sample a random
function $F:\bits^k\cross\bits^n\to\bits^m$ to respond to $\FF$
queries and a random function $T:\bits^n\to\bits^m$ to respond to
$\FFX$ queries. 
These lazily random functions  are stored in tables.

The real world can similarly be modeled by lazily sampling $F$ and $T$ to respond to the separate oracles.
However, these oracles need to be kept consistent.
So if the adversary ever queries $M$ to $\FFX$ and $(K, M\xor K_2)$ to $\FF$, then the game should copy values between the two tables such that the same value is returned by both oracles.
(Here $K$ and $K_2$ are the keys honestly sampled by the game.)
Alternatively, we can think of the return value being stored only in the $T$ table when such queries occur (rather than being copied into both tables) as long as we remember that this has happened.
When represented in this manner, the two games only differ if the adversary makes such a pair of queries, where we think of the latter one as being ``bad''.
Thus a simple $O(pq/2^{k+n})$ bound on the probability of making such a query bounds the  advantage of the adversary.

In our quantum security proof we wish to proceed analogously.
First, we find a way to represent the responses to oracle queries with two (superpositions over) tables that are independent in the ideal world and dependent in the real world (the dependency occurs only for particular ``bad'' inputs).
Then (using the O2H theorem of Ambainis, Hamburg, and Unruh) we can bound the distinguishing advantage by the probability of an attacker finding a ``bad'' input.
In applying this theorem we will jointly think of the security game and its adversary $\advA$ as a combined adversary $\advA'$ making queries to an oracle which takes in both the input of $\advA$ \emph{and} the tables being stored by the game -- processing them appropriately.

The required representation of the oracles via two tables is a highly
non-trivial step in the quantum setting.  For starters, the no-cloning
theorem prevents us from simply recording queries made by the
adversary.  This has been a recurring source of difficulty for
numerous prior papers such as~\cite{FOCS:Zhandry12,TCC:TarUnr16,AC:DagFisGag13}.  We make use of
the recent elegant techniques of Zhandry~\cite{C:Zhandry19} which
established that, by changing the perspective (e.g., to the Fourier
domain), a random function can be represented by a table which is
initialized to all zeros and then xor-ed into with each oracle query
made by the adversary. This makes it straightforward to represent the
ideal world as two separate tables. To represent the real world
similarly, we exploit the fact that the queries to FFX are classical.
To check if an input to FFX is ``bad'' we simply check if the corresponding
entry of the random oracle's table is non-zero. To check if an input to the 
random oracle is ``bad'' we check if it overlaps with prior queries to FFX
which we were able to record because they were classical.
For a ``bad'' input we then share the storage of the two tables and
this is the only case where the behavior of the real world differs from that
of the ideal world. These ``bad'' inputs may of course be part of a superposition
query and it is only for the bad components of the superposition that the 
games differ.

\heading{Difficulty of extending to FX.}  It is possible that this
proof could be extended to work for normal FX given an analogous way
to lazily represent a random permutation.  Unfortunately, no such
representation is known.

Czajkowski, et al.~\cite{EPRINT:CMSZ19} extended Zhandry's lazy
sampling technique to a more general class of random functions, but
this does not include permutations because of the correlation between
the different outputs of a random permutation.  Chevalier, et
al.~\cite{CEV20} provided a framework for recording
queries to quantum oracles, which enables succinctly recording queries
made to an externally provided function for purposes of later
responding to inverse queries. This is distinct from the lazy
sampling of a permutation that we require.
Rosmanis~\cite{rosmanis2021tight} introduced a new technique for
analyzing random permutations in a compressed manner and applied it to
the question of inverting a permutation (given only forward access to
it).
Additional ideas seem needed to support actions based on the oracle
queries that have been performed so far. This is essential in order to
extend our proof for the function variant of FX to maintain
consistency for the real world in the face of ``bad'' queries.

Recent work of Czajkowski~\cite{EPRINT:Czajkowski21} provided an \emph{imperfect} lazy sampling technique for permutations and used it to prove indifferentiability of SHA3. They claim that their lazy sampling strategy cannot be distinguished from a random permutation with advantage better than $O(q^{2} / 2^n)$. Unfortunately, this bound is too weak to be useful to our FX proof. For example, if $k\geq n$ we already have $O(q^2 / 2^n)$ security without key-length extension.
Determining if it is possible to \emph{perfectly} lazily sample a
random permutation remains an interesting future direction.

\subsection{Double Encryption}
The other key-extension technique we consider is double encryption.
Given a blockcipher $\EF:\bits^k\cross\bits^n\to\bits^n$ this is
defined by $\DE[\EF](K_1\concat K_2, M)= \EF_{K_2}(\EF_{K_1}(M))$.
This construction requires more computational overhead than FX
because it requires two separate application of the blockcipher with
different keys.  Classically, this construction is not considered to
be a successful key-length extension technique because the
meet-in-the-middle
attack~\cite{diffie1977exhaustive,merkle1981security} shows that it
can be broken in essentially the same amount of time as $\EF$ alone.

However, this does not rule out that double encryption is
an effective key-length extension method in the quantum setting, as
it is not clear that the Grover search algorithm~\cite{grover} used to
halve the effective keylength of blockciphers can be composed with the
meet-in-the middle attack to unify their savings.  The security of
double encryption in the quantum setting was previously considered by
Kaplan~\cite{kaplan2014quantum}. They related the key-recovery problem
in double encryption to the claw-finding problem, and gave the tight
quantum query bound $\Theta(N^{2/3})$ for solving key recovery (here
$N=2^k$ is the length of the lists in the claw-finding problem). This
indicates that in the quantum setting double encryption is in fact
useful (compare to $N^{1/2}$), although key-recovery security is
fairly weak.

We strengthen their security result by proving the SPRP security,
further confirming double encryption as an effective key-extension
scheme against quantum attacks. This is proven in the
\emph{fully-quantum} model, and the bound we obtain matches the attack
in~\cite{kaplan2014quantum} which works in the partially-quantum
model. Namely restricting to the weaker partially-quantum model would
not improve the bound. Our result is obtained by a reduction to list
disjointness. This is a worst-case decision problem measuring how well
an algorithm can distinguish between a pair of lists with zero or
\emph{exactly} one element in common, which can be viewed as a
decision version of the claw-finding problem. This reduction technique
was originally used by Tessaro and Thiruvengadam~\cite{TCC:TesThi18}
to establish a classical time-memory trade-off for double
encryption. We observe that their technique works for a quantum
adversary.

We then construct a chain of reductions to show that the
known quantum hardness of element distinctness~\cite{AS04,Zhandry15}
(deciding if a list of $N$ elements are all distinct) can be used to
establish the quantum hardness of solving list disjointness.  Our
result (ignoring log factors) implies that a highly successful
attacker must make $\Omega(2^{2k/3})$ oracle queries which is more
than the $\Omega(2^{k/2})$ queries needed to attack $\EF$ used in
isolation.

Our proof starts by observing that Zhandry's~\cite{Zhandry15} proof of
the hardness of the search version of element distinctness (finding a
collision in a list) in fact implies that a promise version of element
distinctness (promising that there is exactly one collision) is also
hard. Then a simple reduction (randomly splitting the element
distinctness list into two lists) shows the hardness of the search
version of list disjointness. Next we provide a binary-search inspired
algorithm showing that the decision version of list disjointness can
be used to solve the search version, implying that the decision
version must be hard.  During our binary search we pad the lists we
are considering with random elements to ensure that our lists maintain
a fixed size which is necessary for our proof to go through.

The final bound we obtain for double encryption is of the form
\begin{displaymath}
		\Adv^{\sprp}_{\DE}(\advA)\leq O\left(\sqrt[6]{(q\cdot k\lg k)^3/2^{2k}}\right).
\end{displaymath}
The sixth root arises in this bound from the final step in our chain of results analyzing list disjointness.
The binary search algorithm requires its underlying decision list disjointness algorithm to have relatively high advantage.
To obtain this from a given algorithm with advantage $\delta$ we need to amplify its advantage by running in on the order of $1/\delta^2$ times.
The number of queries depending on the square of
$\delta$ causes the root to arise in the proof. 

\subsection{Overview}
In \secref{sec:prelims}, we introduce preliminaries such as notation, basic cryptographic definitions, and some background on quantum computation that we will use throughout the paper.
Following this, in \secref{sec:fx} we consider the security of FX in the partially quantum setting.
Non-adaptive SPRP security of FX is proven in \secref{sec:fx-kpa} and adaptive PRF security of {FFX} is proven in \secref{sec:ffx-adaptive}.
We conclude with \secref{sec:double-enc} in which we prove the SPRP security of double encryption against fully quantum adaptive attacks.


\section{Preliminaries}\label{sec:prelims}
For $n,m\in\N$, we let $[n]=\sett{1,\dots,n}$ and $[n..m]=\sett{n,n+1,\dots,m}$.
The set of length $n$ bit strings is denoted $\bits^n$.
We use $\concat$ to denote string concatenation.
We let $\Inj(n,m)$ denote the set of injections $f:[n]\to[m]$.

We let $y\getsr\advA[O_1,\dots](x_1,\dots)$ denote the (randomized) execution of algorithm $\advA$ with input $x_1,\dots$ and oracle access to $O_1,\dots$ which produces output $y$.
For different $\advA$ we will specify whether it can access its oracles in quantum superposition or only classically.
If $\mathcal{S}$ is a set, then $y\getsr\mathcal{S}$ denotes randomly sampling $y$ from $\mathcal{S}$.

We express security notions via pseudocode games.
See \figref{fig:ex-game} for some example games.
In the definition of games, oracles will sometimes be specified by pseudocode with the following form.
\oneColNoBox{
\underline{Oracle $\procOrac(X_1,\dots : Z_1,\dots)$}\\
//Code defining $X'_1,\dots$ and $Z'_1,\dots$\\
Return $(X'_1,\dots : Z'_1,\dots)$\smallskip
}
This notation indicates that $X_1,\dots$ are variables controlled by the adversary prior to the oracle query and $Z_1,\dots$ are variables controlled by the game itself which the adversary cannot access.
At the end of the execution of the oracle, these variables are overwritten with the values indicated in the return statement. 
Looking ahead, we will be focusing on quantum computation so this notation will be useful to make it explicit that $\procOrac$ can be interpreted as a unitary acting on the registers $X_1,\dots$ and $Z_1,\dots$ (because $\procOrac$ will be an efficiently computable and invertible permutation over these values).
If $\ro$ is a function stored by the game, then oracle access to $\ro$ represents access to the oracle that on input $(X,Y:\ro)$ returns $(X,\ro(X)\xor Y : \ro)$.

We define games as outputting boolean values and let $\Pr[\sG]$ denote the probability that game $\sG$ returns $\true$.
When not otherwise indicated, variables are implicitly initialized to store all 0's.

If $\advA$ is an adversary expecting access to multiple oracles we say
that it is \emph{order consistent} if the order it will alternate
between queries to these different oracles is a priori fixed before
execution.  Note that order consistency is immediate if, e.g., $\advA$
is represented by a circuit where each oracle is modeled by a separate
oracle gate, but is not immediate for other possible representations
of an adversary.

\heading{Ideal Models.}
In this work we will work in ideal models -- specifically, the random oracle model or the ideal cipher model.
Fix $k,n,m\in\N$ (throughout this paper we will treat these parameters as having been fixed already). We let $\Func(k,n,m)$ be the set of all functions $\ro:\bits^k\times\bits^n\to\bits^m$ and $\Icm(k,n)\subset\Func(k,n,n)$ be the set of all functions $\ic:\bits^k\times\bits^n\to\bits^n$ such that $\ic(K,\cdot)$ is a permutation on $\bits^n$.
When convenient, we will write $\ro_K(x)$ in place of $\ro(K, x)$ for $\ro\in\Func(k,n,m)$.
Similarly, we will write $\ic_K(x)$ for $\ic(K,x)$ and $\ic^{-1}_K(\cdot)$ for the inverse of $\ic_K(\cdot)$ when $\ic\in\Icm(k,n)$.
When $K=\varepsilon$ we omit the subscript to $\ro$ or $\ic$.

In the random oracle model, honest algorithms and the adversary are given oracle access to a randomly chosen $\ro\in\Func(k,n,m)$.
In the ideal cipher model, they are given oracle access to $\ic$ and $\ic^{-1}$ for $\ic$ chosen at random from $\Icm(k,n)$.
We refer to queries to these oracles as \emph{primitive queries} and queries to all other oracles as \emph{construction queries}.

\ifnum\eprintversion=0
        \newcommand{\lwid}{0.3}
        \newcommand{\mwid}{}
        \newcommand{\rwid}{0.33}
\else
        \newcommand{\lwid}{0.22}
        \newcommand{\mwid}{0.21}
        \newcommand{\rwid}{0.25}
\fi
\begin{figure}[t]
\twoColsNoDivide{\lwid}{\rwid}
{
\underline{Game $\sG^{\prf}_{\FF,b}(\advA)$}\\
$\ro\getsr\Func(k,n,m)$\\
$K\getsr\bits^{\kl{\FF}}$\\
$F\getsr\Func(0,\il{\FF},\ol{\FF})$\\
$b'\getsr\advA[{\procEv,\ro}]$\\
Return $b'=1$\smallskip
}
{
\underline{$\procEv(X, Y : \ro,K,F)$}\\
$Y_1\gets\FF[{\ro}](K,X)$\\
$Y_0\gets F(X)$\\
Return $(X, Y_b\xor Y : \ro,K,F)$\smallskip
}
\twoColsNoDivide{\lwid}{\rwid}
{
\underline{Game $\sG^{\sprp}_{\EF,b}(\advA)$}\\
$\ic\getsr\Icm(k,n)$\\
$K\getsr\bits^{\kl{\EF}}$\\
$P\getsr\Icm(0,\bl{\EF})$\\
$b'\getsr\advA[{\procEv,\procInv,\ic,\ic^{-1}}]$\\
Return $b'=1$\smallskip
}
{
\underline{$\procEv(X,Y:\ic,K,P)$}\\
$Y_1\gets\EF[{\ic}](K,X)$\\
$Y_0\gets P(X)$\\
Return $(X,Y_b\xor Y : \ic,K,P)$\\[4pt]
\underline{$\procInv(X,Y:\ic,K,P)$}\\
$Y_1\gets\EF^{-1}[\ic](K,X)$\\
$Y_0\gets P^{-1}(X)$\\
Return $(X,Y_b\xor Y: \ic,K,P)$\smallskip
}
\vspace{-2ex}
\caption{Security games measuring PRF security of a family of functions $\FF$ and SPRP security of a blockcipher $\EF$.}
\label{fig:ex-game}
\label{fig:prf}\label{fig:prp}\label{fig:sprp}
\hrulefill
\end{figure}

\heading{Function family and pseudorandomness.}
A function family $\FF$ is an efficiently computable element of $\Func(\kl{\FF},\il{\FF},\ol{\FF})$.
If, furthermore, $\FF\in\Icm(\kl{\FF},\il{\FF})$ and $\FF^{-1}$ is efficiently computable then we say $\FF$ is a blockcipher and let $\bl{\FF}=\il{\FF}$.

If $\FF$ is a function family (constructed using oracle access to a
function $\ro\in\Func(k,n,m)$), then its security (in the random
oracle model) as a \emph{pseudorandom function} (PRF) is measured by
the game $\sG^{\prf}$ shown in \figref{fig:prf}.  In it, the adversary
$\advA$ attempts to distinguish between a \emph{real} world ($b=1$)
where it is given oracle access to $\FF$ with a random key $K$ and an
\emph{ideal} world ($b=0$) where it is given access to a random
function.  We define the advantage function
$\Adv^{\prf}_\FF(\advA)=\Pr[\sG^{\prf}_{\FF,1}(\advA)]-\Pr[\sG^{\prf}_{\FF,0}(\advA)]$.

If $\EF$ is a blockcipher (constructed using oracle access to a function $\ic\in\Icm(k,n)$ and its inverse), then its security (in the ideal cipher model) as a \emph{strong pseudorandom permutation} (SPRP) is measured by the game $\sG^{\sprp}$ shown in \figref{fig:sprp}.
In it, the adversary $\advA$ attempts to distinguish between a real world ($b=1$) where it is given oracle access to $\EF$, $\EF^{-1}$ with a random key $K$ and an ideal world ($b=0$) where it is given access to a random permutation.
We define the advantage function $\Adv^{\sprp}_\FF(\advA)=\Pr[\sG^{\sprp}_{\FF,1}(\advA)]-\Pr[\sG^{\sprp}_{\FF,0}(\advA)]$.

In some examples, we will restrict attention to \emph{non-adaptive} SPRP security.
In such cases our attention is restricted to attackers whose queries to $\procEv$ and $\procInv$ when relevant are a priori fixed before execution.
That is, $\advA$ is a non-adaptive attacker which makes at most $q$ classical, non-adaptive queries to $\procEv,\procInv$ if there exists $M_1,\dots,M_{q'},Y_{q'+1},\dots,Y_q\in\bits^n$ such that $\advA$ only ever queries $\procEv$ on $M_i$ for $1\leq i\leq q'$ and $\procInv$ on $Y_i$ for $q'+1\leq i \leq q$.
Then we write $\Adv^{\sprp\kpa}(\advA)$ in place of $\Adv^{\sprp}(\advA)$.

\subsection{Quantum Background}\label{sec:quant-back}
We assume the reader has basic familiarity with quantum computation.
Quantum computation proceeds by performing unitary operations on registers which each contain a fixed number of qubits.
We sometimes use $\circ$ to denote composition of unitaries.
Additionally, qubits may be measured in the computational basis.
We will typically use the principle of deferred measurements to without loss of generality think of such measurements as being deferred until the end of computation.

The Hadamard transform $\had$ acts on a bitstring $x\in\bits^n$ (for some $n\in\N$) via $\had\ket{x} = 1/\sqrt{2^n} \cdot \sum_{x'} (-1)^{x\cdot x'} \ket{x'}$.
Here $\cdot$ denotes inner product modulo 2 and the summation is over $x'\in\bits^n$.
The Hadamard transform is its own inverse.
We sometimes use the notation $\had^{X_1,X_2,\dots}$ to denote the Hadamard transform applied to registers $X_1,X_2,\dots$.

We make use of the fact that if $P$ is a permutation for which both $P$ and $P^{-1}$ can be efficiently implemented classically, then there is a comparable efficient quantumly computable unitary $U_P$ which maps according to $U_P\ket{x} = \ket{P(x)}$ for $x\in\bits^n$.
For simplicity, we often write $P$ in place of $U_P$.
If $f:\bits^n\to\bits^m$ is a function, we define the permutation $f[{\xor}](x,y)=(x,f(x)\xor y)$.

\ifnum\eprintversion=0
        \renewcommand{\lwid}{0.4}
        \renewcommand{\mwid}{}
        \renewcommand{\rwid}{0.45}
\else
        \renewcommand{\lwid}{0.26}
        \renewcommand{\mwid}{}
        \renewcommand{\rwid}{0.29}
\fi
\begin{figure}[t]
\twoColsNoDivide{\lwid}{\rwid}
{
\underline{Game $\sG^{\dist}_{\distD,b}(\advA)$}\\
$(S,S',P_0,P'_0,P_1,P'_1,z)\getsr\distD$\\
$b'\getsr\advA[{P_b},{P'_b}](z)$\\
Return $b'=1$\smallskip
}
{
\underline{Game $\sG^{\guess}_{\distD}(\advA)$}\\
$(S,S',P_0,P'_0,P_1,P'_1,z)\getsr\distD$\\
$i\getsr\sett{1,\dots,q}$\\
Run $\advA[P_0,P'_0]$ until its $i$-th query\\
Measure the input $x$ to this query\\
If the query is to $P_0$ then\\
\ind Return $x\in S$\\
Else (the query is to $P'_0$)\\
\ind Return $x\in S'$\smallskip
}
\vspace{-2ex}
\caption{Games used for O2H \thref{thm:reg-ahu}.}
\label{fig:reg-ahu}
\hrulefill
\end{figure}

\heading{One-way to hiding.}
We will make use of (a slight variant of) a one-way to hiding (O2H) theorem of Ambainis, Hamburg, and Unruh~\cite{C:AmbHamUnr19}.
The theorem will consider an adversary given oracle access either to permutations $(P_0,P'_0)$ or permutations $(P_1,P'_1)$.
It relates the advantage of the adversary in distinguishing between these two cases to the probability that the adversary can be used to find one of those points on which $P_0$ differs from $P_1$ or $P'_0$ differs from $P'_1$.
The result considers a distribution $\distD$ over $(S,S',P_0,P'_0,P_1,P'_1,z)$ where $S,S'$ are sets, $P_0,P_1$ are permutations on the same domain, $P'_0,P'_1$ are permutations on the same domain, and $z\in\bits^\ast$ is some auxiliary information.
Such a $\distD$ is \emph{valid} if $P_0(x)=P_1(x)$ for all $x\not\in S$ and $P'_0(x)=P'_1(x)$ for all $x\not\in S'$.
Now consider the game $\sG^{\dist}_{\distD,b}$ shown in \figref{fig:reg-ahu}.
In it, an adversary $\advA$ is given $z$ and tries to determine which of the oracle pairs it has access to.
We define $\Adv^{\dist}_{\distD}(\advA)=\Pr[\sG^{\dist}_{\distD,1}(\advA)]-\Pr[\sG^{\dist}_{\distD,0}(\advA)]$.

The game $\sG^{\guess}_{\distD}(\advA)$ in the same figure measures
the ability of $\advA$ to query its oracles on inputs at which $P_0$
and $P_1$ (or $P'_0$ and $P'_1$) differ.  It assumes
that the adversary makes at most $q$ oracle queries.  The adversary is
halted in its execution on making a random one of these queries and
the input to this query is measured.  If the input falls in the
appropriate set $S$ or $S'$, then the game returns $\true$.  Thus
we can roughly think of this as a game in which $\advA$ is trying to guess
a point on which the two oracles differ.  We define
$\Adv^{\guess}_{\distD}(\advA)=\Pr[\sG^{\guess}_{\distD}(\advA)]$,
which leads to a bound on $\Adv^{\dist}_{\distD}(\advA)$.

\begin{theorem}[\cite{C:AmbHamUnr19}, Thm.3]\label{thm:reg-ahu}
	Let $\distD$ be a valid distribution and $\advA$ be an adversary making at most $q$ oracle queries.
	Then $\Adv^{\dist}_{\distD}(\advA)\leq 2q\sqrt{\Adv^{\guess}_{\distD}(\advA)}$.
\end{theorem}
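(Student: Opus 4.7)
The plan is to follow the standard hybrid argument underlying one-way to hiding proofs, adapted to the two-oracle setting. Without loss of generality, I would assume $\advA$ is purified, deferring all measurements to the end, and view the adversary's computation as an alternating sequence of query unitaries and local unitaries acting on a single composite register (which also stores the initial input $z$). Since $\advA$ may adaptively interleave queries to its two oracles, I model its $i$-th query as a single controlled unitary $U_i^b$: a control register selects whether $P_b$ or $P'_b$ is applied to the input register, and $b \in \{0,1\}$ indexes which of the two worlds we are in. Let $|\psi_b^i\rangle$ denote the adversary's state after its $i$-th query in world $b$; by construction $|\psi_0^0\rangle = |\psi_1^0\rangle$ since the starting state depends only on the (common) sample $(S,S',P_0,P'_0,P_1,P'_1,z)$ from $\distD$.

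The first step is to reduce to bounding a Euclidean distance: for every fixed sample from $\distD$, the distinguishing advantage between two pure final states is at most $\| |\psi_1^q\rangle - |\psi_0^q\rangle\|$, so $\Adv^{\dist}_{\distD}(\advA) \leq \mathbb{E}_{\distD}\| |\psi_1^q\rangle - |\psi_0^q\rangle\|$. Next, let $\Pi_i$ be the projector onto the ``differing'' subspace at query $i$ --- projection onto inputs in $S$ when the control register says $P_b$, and onto inputs in $S'$ when it says $P'_b$. Validity of $\distD$ implies $(U_i^1 - U_i^0)(I - \Pi_i) = 0$, and since $U_i^0, U_i^1$ are unitary this gives $\|(U_i^1 - U_i^0)|\varphi\rangle\| \leq 2\|\Pi_i |\varphi\rangle\|$ for any $|\varphi\rangle$.

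Now I would do the telescoping hybrid. Writing $|\psi_1^q\rangle - |\psi_0^q\rangle$ as a sum over $i$ of terms in which the first $i-1$ queries use world $0$, the $i$-th query contributes $U_i^1 - U_i^0$, and the remaining queries use world $1$, the triangle inequality and unitary invariance give
\begin{align*}
\| |\psi_1^q\rangle - |\psi_0^q\rangle\| \;\leq\; 2\sum_{i=1}^q \|\Pi_i |\psi_0^{i-1}\rangle\| \;\leq\; 2\sqrt{q\cdot\sum_{i=1}^q \|\Pi_i |\psi_0^{i-1}\rangle\|^2}\,,
\end{align*}
where the second inequality is Cauchy--Schwarz. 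The key observation is that $\tfrac{1}{q}\sum_i \|\Pi_i |\psi_0^{i-1}\rangle\|^2$ is exactly the probability that $\sG^{\guess}_{\distD}(\advA)$ returns $\true$ conditioned on this sample from $\distD$: the game picks a uniformly random $i \in [q]$, halts the adversary right before its $i$-th query in the $0$-world, measures the input register, and accepts iff the measurement lies in $S$ (or $S'$, according to which oracle was queried), which is precisely $\|\Pi_i|\psi_0^{i-1}\rangle\|^2$.

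Finally I would take expectation over $\distD$ and apply Jensen's inequality to pull it inside the square root, yielding $\Adv^{\dist}_{\distD}(\advA) \leq 2q\sqrt{\Adv^{\guess}_{\distD}(\advA)}$. The main obstacle is the two-oracle aspect: since $\advA$ adaptively chooses at each step whether to query $P_b$ or $P'_b$, one must set up the projectors $\Pi_i$ to act coherently on both the input register and a control register, and verify that validity of $\distD$ still implies the factor-of-$2$ bound on the per-step difference. Once that controlled formulation is in place, the rest of the argument is the telescoping/Cauchy--Schwarz/Jensen pattern above.
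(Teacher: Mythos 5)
Your proof is correct, and it is worth flagging up front that the paper does not actually give a proof of this statement: it cites \cite{C:AmbHamUnr19}, Theorem~3, and remarks that the proof there carries over to the general-permutation, two-oracle variant ``without requiring any meaningful modifications.'' What you have written is a self-contained direct derivation in the classical Unruh/AHU O2H style, and it does establish exactly the variant the paper needs. The chain is sound: bound the distinguishing advantage per sample by the Euclidean norm $\lVert\ket{\psi_1^q}-\ket{\psi_0^q}\rVert$ (via trace distance of pure states), telescope the difference so that each term isolates one query, observe that validity of $\distD$ gives $(U_i^1-U_i^0)(I-\Pi_i)=0$ and hence $\lVert(U_i^1-U_i^0)\ket{\varphi}\rVert\le 2\lVert\Pi_i\ket{\varphi}\rVert$, apply Cauchy--Schwarz over $i$, identify $\frac1q\sum_i\lVert\Pi_i\cdot\rVert^2$ with the conditional success probability of $\sG^{\guess}_\distD$, and finish with Jensen over the sample. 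Your treatment of the two-oracle aspect via a control register is the right fix: $\Pi_i$ must project onto $\{(c,x): c=\textsf{prim}\Rightarrow x\in S,\ c=\textsf{prim}'\Rightarrow x\in S'\}$, and validity of $\distD$ still kills the complementary subspace, so the factor $2$ is unchanged. Two cosmetic points: (i) you apply $\Pi_i$ to $\ket{\psi_0^{i-1}}$, the state \emph{after} the $(i-1)$-st query, but the adversary's local unitary between queries has to be absorbed before projecting --- what you actually want is the state immediately \emph{before} the $i$-th query, which is also what $\sG^{\guess}_\distD$ measures; this is just bookkeeping. (ii) $\Adv^{\dist}_\distD$ is defined without absolute value, but since your bound is on $|\cdot|$ and the right-hand side is non-negative, the signed statement follows. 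As a minor comparative remark: AHU actually derive their Theorem~3 from their semi-classical-oracle machinery (their Theorem~1), whereas your direct hybrid is closer to the original Unruh O2H proof and is arguably more elementary and more transparent for the permutation/two-oracle variant that this paper needs.
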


Our statement of the theorem differs from the result as given in~\cite{C:AmbHamUnr19} in that we consider arbitrary permutations, rather than permutations of the form $f[{\xor}]$ for some function $f$, and we provide the attacker with access to two oracles rather than one.\footnote{Their result additionally allows the adversary to make oracle queries in parallel and bounds its advantage in terms of the ``depth'' of its oracle queries rather than the total number of queries. We omit this for simplicity.}
These are simply notational conveniences to match how we will be applying the theorem.
The proof given in~\cite{C:AmbHamUnr19} suffices to establish this variant without requiring any meaningful modifications.

The most natural applications of this theorem would apply it to distributions $\distD$ for which the guessing advantage $\Adv^{\guess}_{\distD}(\advA)$ is small for \emph{any} efficient adversary $\advA$.
This will indeed be the case for our use of it in our \thref{thm:fx-kpa}.
However, note that it can also be applied more broadly with a distribution $\distD$ where it is not necessarily difficult to guess inputs on which the oracles differ.
We will do so at the end of our proof of \thref{thm:ffx}.
Here we will use a \emph{deterministic} $\distD$ so, in particular, the sets $S$ and $S'$ are a priori fixed and not hard to query.
The trick we will use to profitably apply the O2H result is to exploit knowledge of the particular form that $\advA$ will take (it will be a reduction adversary internally simulating the view of another adversary) to provide a useful bound on its guessing advantage $\Adv^{\guess}_{\distD}(\advA)$.

\section{The FX Construction}
\label{sec:fx}

The FX construction (originally introduced by Kilian and Rogaway~\cite{JC:KilRog01} as a generalization of Rivest's DESX construction) is a keylength extension for blockciphers.
In this construction, an additional key is used which is xor-ed with input and the output of the blockcipher.\footnote{Technically, the original definition of FX~\cite{JC:KilRog01} uses distinct keys for xor-ing with the input and the output, but this would not provide any benefit in our concrete security analysis so we focus on the simplified construction.}
Formally, given a blockcipher $\EF\in\Icm(\kl{\EF},\bl{\EF})$, the blockcipher $\FX[\EF]$ is defined by $\FX[\EF](K_1\concat K_2, x)=\allowbreak \EF_{K_1}(x \xor K_2) \xor K_2$.
Here $|K_1|=\kl{\EF}$ and $|K_2|=\bl{\EF}$ so $\kl{\FX[\EF]}=\kl{\EF}+\bl{\EF}$ and $\bl{\FX[\EF]}=\bl{\EF}$.
Its inverse can similarly be computed as $\FX[\EF]^{-1}(K_1\concat K_2, x)= \EF^{-1}_{K_1}(x \xor K_2) \xor K_2$.
Let $k=\kl{\EF}$ and $n=\bl{\EF}$.

Kilian and Rogaway~\cite{C:KilRog96} analyzed the PRP security of FX against classical attacks, showing that $\Adv^{\sprp}_{\FX}(\advA) \leq 2pq/2^{k+n}$ where $q$ is the number of $\procEv,\procInv$ queries and $p$ is the number of $\ic,\ic^{-1}$ queries made by $\advA$ (with $\EF$ modeled as an ideal cipher).
In~\cite{AC:LeaMay17}, Leander and May showed a quantum attack against the FX construction -- establishing that the added whitening keys did not provide additionally security.
This attack uses a clever combination of the quantum algorithms of Grover~\cite{grover} and Simon~\cite{simon1997power}.
It was inspired by an attack by Kuwakado and Morii~\cite{KM12} showing that the Even-Mansour blockcipher~\cite{JC:EveMan97} provides no quantum security.
Thus, it seems that $\FX[\EF]$ does not provide meaningfully more security than $\EF$ against quantum attackers.

However, the attack of Leander and May requires quantum access to both the FX construction and the underlying blockcipher $\EF$.
This raises the question of whether the FX is actually an effective key-length extension technique in the partially-quantum setting where the adversary performs only classical queries to the construction oracles.
In this section, we approach this question from two directions.
First, in \secref{sec:fx-kpa} we apply \thref{thm:reg-ahu} with a careful representation of the real and ideal worlds to show that FX does indeed achieve improved security against non-adaptive attacks.

Analyzing the full adaptive security of FX against classical
construction queries seems beyond the capabilities of current proof
techniques.  Accordingly, in \secref{sec:ffx-adaptive}, we consider a
variant of FX in which a random oracle is used in place of the ideal
cipher and prove its quantum PRF security.  Here we apply a new
reduction technique (built on the ``sparse'' quantum representation of
a random function introduced by Zhandry~\cite{C:Zhandry19} and
\thref{thm:reg-ahu}, the O2H theorem from Ambainis, Hamburg, and
Unruh~\cite{C:AmbHamUnr19}) to prove that this serves as an effective
key-length extension technique in our setting.  It seems likely that
our technique could be extended to the normal FX construction, should
an appropriate sparse quantum representation of random permutations be
discovered.

\subsection{Security of FX Against Non-Adaptive Attacks}\label{sec:fx-kpa}

The following theorem bounds the security of the FX construction against non-adaptive attacks (in which the non-adaptive queries are all classical).
This result is proven via a careful use of \thref{thm:reg-ahu} in which the distribution $\distD$ is defined in terms of the non-adaptive queries that the adversary will make and defined so as to perfectly match the two worlds that $\advA$ is attempting to distinguish between. 
\begin{theorem}\label{thm:fx-kpa}
	Let $\advA$ be a quantum adversary which makes at most $q$ classical, non-adaptive queries to $\procEv,\procInv$ and consider $\FX[\cdot]$ with the underlying blockcipher modeled by an ideal cipher drawn from $\Icm(k,n)$.
	Then 
	\begin{displaymath}
		\Adv^{\sprp\kpa}_{\FX}(\advA)\leq \sqrt{8p^2q/2^{k+n}},
	\end{displaymath}
	where $p$ is the number of quantum oracle queries that $\advA$ makes to the ideal cipher. 
\end{theorem}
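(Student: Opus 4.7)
The plan is to apply the one-way to hiding theorem (\thref{thm:reg-ahu}) to a carefully chosen distribution $\distD$ whose two modes correspond exactly to the real and ideal worlds of the non-adaptive SPRP game for $\FX$. Non-adaptivity is the crucial leverage: the classical query set is a fixed, publicly known collection of points before any randomness is sampled, so I can first sample the answers to these queries from an independent random permutation and then, in the ``real'' mode, manually reprogram the ideal cipher so that $\FX$ agrees with the pre-sampled answers on the queried points, while in the ``ideal'' mode I leave the cipher untouched. Because the reprogramming affects at most $q$ points of $\ic$, the O2H theorem reduces distinguishing to a simple guessing problem about those points.

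Concretely, given the non-adaptive forward queries $M_1,\dots,M_{q'}$ and inverse queries $Y_{q'+1},\dots,Y_q$, I would sample $K_1 \getsr \bits^k$, $K_2 \getsr \bits^n$, $\ic \getsr \Icm(k,n)$, and an independent random permutation $P \getsr \Icm(0,n)$; set the auxiliary information $z = (P(M_1),\dots,P(M_{q'}), P^{-1}(Y_{q'+1}),\dots,P^{-1}(Y_q))$, which represents the answers to the non-adaptive queries; and define a reprogrammed cipher $\ic^{*}$ that agrees with $\ic$ everywhere except at the key $K_1$ where $\ic^{*}_{K_1}(M_i \xor K_2) = P(M_i) \xor K_2$ for $i \leq q'$ and $\ic^{*}_{K_1}(P^{-1}(Y_i) \xor K_2) = Y_i \xor K_2$ for $i > q'$. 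Because the $M_i$ and the $P^{-1}(Y_i)$ are distinct and $P$ is a permutation, the assignments are consistent on distinct inputs with distinct outputs, so $\ic^{*}$ is a valid ideal cipher. Setting $(P_0,P'_0) = (\ic^{*},\ic^{*,-1})$ and $(P_1,P'_1) = (\ic,\ic^{-1})$ and taking $S$ (respectively $S'$) to be the at-most-$q$ points where $\ic$ and $\ic^{*}$ (respectively their inverses) differ yields a valid distribution, and the distinguishing advantage of the ``meta-adversary'' that simulates the SPRP game internally matches $\Adv^{\sprp\kpa}_{\FX}(\advA)$: in the real mode, $\ic^{*}$ together with $z$ perfectly simulates $\FX$; in the ideal mode, $z$ is an independent random permutation, exactly as required.

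To bound the guessing advantage I would exploit that the meta-adversary interacts with the ideal-mode oracles $(\ic,\ic^{-1})$, which are independent of $(K_1,K_2,P)$, so the state of the input register at any fixed query index is independent of these hidden quantities. For any fixed candidate input $(K^{*},w^{*})$, each element of $S$ has the form $(K_1, M_i \xor K_2)$ or $(K_1, P^{-1}(Y_i) \xor K_2)$, which is uniform over $\bits^k \times \bits^n$ because $K_1$ and $K_2$ are uniform and independent; hence $\Pr[(K^{*},w^{*}) \in S] \leq q/2^{k+n}$ by a union bound, and similarly for $S'$. Averaging over the random choice of query index and measured outcome yields $\Adv^{\guess}_{\distD}(\advA) \leq 2q/2^{k+n}$, with the factor $2$ absorbing the worst case between the two oracle types. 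Plugging into \thref{thm:reg-ahu} with $p$ playing the role of its query bound gives $\Adv^{\sprp\kpa}_{\FX}(\advA) \leq 2p\sqrt{2q/2^{k+n}} = \sqrt{8p^2 q / 2^{k+n}}$.

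The main obstacle I anticipate is the bookkeeping required to verify that the two modes of $\distD$ produce views statistically identical to the real and ideal SPRP games, particularly making sure the reprogramming simultaneously accommodates forward and inverse queries while preserving the permutation structure. Once that setup is clean, the quantum content is entirely encapsulated by \thref{thm:reg-ahu} and the remaining counting is elementary since non-adaptivity freezes the set $S$ before the quantum computation begins.
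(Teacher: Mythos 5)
Your high-level strategy is the same as the paper's: reduce to the O2H theorem by pre-sampling the answers to the fixed non-adaptive queries, then reprogram the ideal cipher in the ``real'' branch to make $\FX$ agree with those answers, and bound the guessing game using independence from $(K_1,K_2)$. The setup and the final arithmetic match the paper exactly. However, there is a genuine gap in the central reprogramming step.

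You assert that $\ic^{*}$, obtained from $\ic$ by forcing $\ic^{*}_{K_1}(M_i\xor K_2)=P(M_i)\xor K_2$ and $\ic^{*}_{K_1}(P^{-1}(Y_i)\xor K_2)=Y_i\xor K_2$, is a valid ideal cipher because ``the assignments are consistent on distinct inputs with distinct outputs.'' That observation only shows the $q$ new assignments are mutually consistent; it does not make $\ic^{*}_{K_1}$ a bijection. After forcing $M_i\xor K_2 \mapsto P(M_i)\xor K_2$, the old image $\ic_{K_1}(M_i\xor K_2)$ is orphaned and the old preimage $\ic_{K_1}^{-1}(P(M_i)\xor K_2)$ now collides with $M_i\xor K_2$. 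To repair this you must also re-route those orphaned points, which the paper does explicitly via the sets $\setI'$ and $\setO'$ and a random bijection between $\setI'\setminus\setI$ and $\setO'\setminus\setO$. This is exactly where the technical content of the proof lives, and it is also why the resulting difference set $S$ has size up to $2q$ (the $q$ reprogrammed inputs plus up to $q$ re-routed ones), not $q$. Your factor of~$2$ is therefore attributed to the wrong source (``the worst case between the two oracle types''), and with the construction as written the union bound over $S$ would be off. Relatedly, once the orphans are re-routed you must also verify that the resulting $\ic^{*}$ together with $(K_1,K_2)$ has exactly the distribution of a uniformly random ideal cipher with uniform keys given the auxiliary table $z$; this is the distributional claim the paper checks carefully in \apref{app:na-proof-extra}, and a raw overwrite would not give a uniformly random permutation on $\bits^n$.

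A smaller issue: you set $(P_0,P'_0)=(\ic^{*},\ic^{*,-1})$ but then argue independence from $(K_1,K_2)$ ``because the meta-adversary interacts with the ideal-mode oracles $(\ic,\ic^{-1})$.'' In $\sG^{\guess}_{\distD}$ the adversary is run with $P_0,P'_0$, so for that independence argument to apply you need $P_0$ to be the cipher that was sampled independently of $(K_1,K_2)$, i.e.\ the assignment should be $(P_0,P'_0)=(\ic,\ic^{-1})$ and $(P_1,P'_1)=(\ic^{*},\ic^{*,-1})$, as in the paper. This is an easy fix, but as stated the two sentences contradict each other.
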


\begin{proof}
	We will use \thref{thm:reg-ahu}	to prove this result, so first we define a distribution $\distD$.
	Suppose that $M_1,\dots,M_{q'}\in\bits^n$ are the distinct queries $\advA$ will make to $\procEv$ and $Y_{q'+1},\dots,Y_q\in\bits^n$ are the distinct queries that $\advA$ will make to $\procInv$.
	The order in which these queries will be made does not matter.
	Then we define $\distD$ as shown in \figref{fig:fx-dist}.
	This distribution is valid (as required for Theorem~\ref{thm:reg-ahu}) because $G_1$ is reprogrammed to differ from $G_0$ by making inputs in $S$ map to different values in $S'$.
	
	\ifnum\eprintversion=0
        \renewcommand{\lwid}{}
        \renewcommand{\mwid}{}
        \renewcommand{\rwid}{0.9}
	\else
        \renewcommand{\lwid}{}
        \renewcommand{\mwid}{}
        \renewcommand{\rwid}{0.63}
	\fi
	\begin{figure}[t]
	\oneCol{\rwid}
	{
		\underline{Distribution $\distD$}\\
		\commentt{Step 1: Sample responses to construction queries}\\
		For $i=1,\dots,q'$ do\\
		\ind $Y_i\getsr\bits^n\setminus\sett{Y_1,\dots,Y_{i-1}}$\\
		\ind $T[M_i]\gets Y_i$; $T^{-1}[Y_i]\gets M_i$\\
		For $i=q'+1,\dots,q$ do\\
		\ind If $T^{-1}[Y_i]\neq \bot$ then $M_i\gets T^{-1}[Y_i]$\\
		\ind Else $M_i\getsr\bits^n\setminus\sett{M_1,\dots,M_{i-1}}$\\
		\ind $T[M_i]\gets Y_i$; $T^{-1}[Y_i]\gets M_i$\\
		$z\gets (T,T^{-1})$\\
		\commentt{Step 2: Sample $f_0$ as independent ideal cipher}\\
		$f_0\getsr\Icm(k,n)$\\
		\commentt{Step 3: Reprogram $f_1$ for consistency with construction queries}\\
		$K_1\getsr\bits^k$; $K_2\getsr\bits^n$\\
		$\setI\gets\sett{M_i\xor K_2 : 1\leq i \leq q}$;
		$\setO\gets\sett{Y_i\xor K_2 : 1\leq i \leq q}$\\
		$\setI'\gets\sett{f_0^{-1}(K_1,y) : y\in\setO}$;
		$\setO'\gets\sett{f_0(K_1,x) : x\in\setI}$\\
		$S=\sett{(K_1,x):x\in\setI\cup\setI'}$;
		$S'=\sett{(K_1,y):y\in\setO\cup\setO'}$\\
		For $(K,x)\not\in S$ do\\
		\ind $f_1(K,x)\gets f_0(K,x)$\\
		For $i=1,\dots,q$ do\\
		\ind $f_1(K_1,M_i\xor K_2)\gets Y_i \xor K_2$\\
		For $x\in\setI'\setminus\setI$ do\\
		\ind $f_1(K_1,x)\getsr \setO'\setminus\sett{f_1(K_1,x) : x \in \setI\cup\setI', f_1(K_1,x)\neq \bot}$\\
		Return $(S,S',f_0[{\xor}],f_0^{-1}[{\xor}],f_1[{\xor}],f_1^{-1}[{\xor}],z)$\smallskip
	}
	\vspace{-2ex}
	\caption{Distribution of oracles used in proof of \thref{thm:fx-kpa}.}
	\label{fig:fx-dist}
	\hrulefill
	\end{figure}
	
	We will show that the oracles output by this distribution (described in words momentarily) can be used to perfectly simulate the views expected by $\advA$.
	In particular, let $\advA'$ be an adversary (for $\sG^{\dist}_{\distD}$) which runs $\advA$, responding to $\procEv(M_i)$ queries with $T[M_i]$,
	responding to $\procInv(Y_i)$ queries with $T^{-1}[Y_i]$,
	and simulating $\ic,\ic^{-1}$ with its own oracles $f_b[{\xor}],f_b^{-1}[{\xor}]$.
	When $\advA$ halts with output $b$, this adversary halts with the same output.
	We claim that (i) $\Pr[\sG^{\sprp}_{\FF,1}(\advA)]=\Pr[\sG^{\dist}_{\distD,1}(\advA')]$ and (ii) $\Pr[\sG^{\sprp}_{\FF,0}(\advA)]=\Pr[\sG^{\dist}_{\distD,0}(\advA')]$.
	This gives $\Adv^{\sprp\kpa}_\FF(\advA)=\Adv^{\dist}_{\distD}(\advA')$.
	
	Claim (ii) follows by noting that the view of $\advA$ is identical when run by $\sG^{\sprp}_0$ or by $\advA'$ in $\sG^{\dist}_{\distD,0}$.
	In $\sG^{\sprp}_0$, its construction queries are answered with the random permutation $F$.
	When it is run by $\advA'$ in $\sG^{\dist}_{\distD,0}$, these queries are answered with the tables $T$ and $T^{-1}$ which can be viewed as having just lazily sampled enough of a random permutation to respond to the given queries.
	In both cases, its primitive oracle is an independently chosen ideal cipher.
	
	Claim (i), follows by noting that the view of $\advA$ is identical when run by $\sG^{\sprp}_1$ or by $\advA'$ in $\sG^{\dist}_{\distD,1}$.
	In $\sG^{\sprp}_1$, construction queries are answered by the $\FX$ construction using the ideal cipher and keys $K_1$, $K_2$. 
	In the distribution, we first sample the responses to the construction queries and then construct the ideal cipher $f_1$ by picking $K_1$ and $K_2$ and setting $f_1$ to equal $f_0$ except for places where we reprogram it to be consistent with these construction queries.
	The construction will map $M_i$ to $Y_i$ for each $i$, which means that the condition $f_1(K_1,M_i\xor K_2)=Y_i\xor K_2$ should hold.
	These inputs and outputs are stored in the sets $\setI$ and $\setO$.
	The sets $\setI'$ and $\setO'$ store the inputs mapping to $\setO$ and outputs mapped to $\setI$ by $f_0(K_1,\cdot)$, respectively.
	Thus while making the above condition hold, we additionally reprogram $f_1$ so that elements of $\setI'\setminus\setI$ map to (random, non-repeating) elements of $\setO'\setminus\setO$.
	
	In particular, the uniformity of $f_0$ ensures that the map induced by $f_1(K_1,\cdot)$ between $\bits^n\setminus(\setI\cup\setI')$ and $\bits^n\setminus(\setO\cup\setO')$ is a random bijection.
	The last for loop samples a random bijection between $\setI'\setminus\setI$ and $\setO'\setminus\setO$.
	Because there are no biases in which values fall into these two cases among those of $\bits^n\setminus\setI$ and $\bits^n\setminus\setO$, this means the map between these two sets is a uniform bijection as desired.
	A more detailed probability analysis of Claim (i) is given 
	\ifnum\eprintversion=0
	\ifsubmission
		in \apref{app:na-proof-extra}.
	\else
		in the full version~\cite{fullversion}.
	\fi
	\else
		in \apref{app:na-proof-extra}.
	\fi
	
	Applying the bound on $\Adv^{\dist}_{\distD}(\advA')$ from \thref{thm:reg-ahu} gives us $$\Adv^{\sprp\kpa}_\FF(\advA)\leq 2p\sqrt{\Pr[\sG^{\guess}_\distD(\advA')]}$$ so we complete the proof by bounding this probability.
	Let $\mathsf{fw}$ denote the event that the $i$-th query of $\advA'$ in $\sH^{\distD}_0$ is to $f_0$ and let $(K,x)$ denote the measured value of this query so that
	\begin{displaymath}
		\Pr[\sH^{\guess}_\distD(\advA')]
		=
		\Pr[\mathsf{fw}]\cdot \Pr[(K,x)\in S \mid \mathsf{fw}]
		+
		\Pr[\neg\mathsf{fw}]\cdot \Pr[(K,x)\in S' \mid \neg\mathsf{fw}].
	\end{displaymath}

	A union bound over the different elements of $S$ gives
	\begin{align*}
		\Pr[(K,x)\in S \mid \mathsf{fw}]
		&\leq
		\sum_{j=1}^q
		\Pr[K=K_1 \land x \xor M_j = K_2 \mid \mathsf{fw}]\\
		&+
		\sum_{j=1}^q
		\Pr[K=K_1 \land G_0(K_1,x) \xor Y_j = K_2 \mid \mathsf{fw}].
	\end{align*}
	However, note that the view of $\advA$ in $\sH^{\distD}_0$ is independent of $K_1$ and $K_2$ so we get that
	\begin{displaymath}
		\Pr[(K,x)\in S \mid \mathsf{fw}]\leq 2q/2^{k+n}.
	\end{displaymath}
	Applying analogous analysis to the $\neg\mathsf{fw}$ case gives
	\begin{displaymath}
		\Pr[(K,x)\in S' \mid \neg\mathsf{fw}]\leq 2q/2^{k+n}
	\end{displaymath}
	and hence $\Pr[\sH^{\distD}_0(\advA')]\leq 2q/2^{k+n}$.
	Plugging this into our earlier inequality gives the stated bound.\qed
\end{proof}

\subsection{Adaptive Security of FFX}\label{sec:ffx-adaptive}
In this section we will prove the security of FFX (a variant of FX using a random oracle in place of the ideal cipher) against quantum adversaries making strictly classical queries to the construction.

\ifnum\eprintversion=0
	Formally, given a function family $\FF\in\Func(\kl{\FF},\il{\FF},\ol{\FF})$, we define the function family $\FFX[\FF]$ by $\FFX[\FF](K_1\concat K_2, x)= \FF_{K_1}(x \xor K_2)$.\footnote{The outer xor by $K_2$ used in $\FX$ is omitted because it is unnecessary for our analysis.}
	Here $|K_1|=\kl{\FF}$ and $|K_2|=\il{\FF}$ so $\kl{\FFX[\FF]}=\kl{\FF}+\il{\FF}$, $\il{\FFX[\FF]}=\il{\FF}$, and $\ol{\FFX[\FF]}=\ol{\FF}$.
	Let $k=\kl{\FF}$, $n=\il{\FF}$, and $m=\ol{\FF}$.
\else
	Formally, given a function family $\FF\in\Func(\kl{\FF},\il{\FF},\ol{\FF})$, we model the FFX construction by the function family $\FFX[\FF]$ by $\FFX[\FF](K_1\concat K_2, x)= \FF_{K_1}(x \xor K_2)$.\footnote{The outer xor by $K_2$ used in $\FX$ is omitted because it is unnecessary for our analysis.}
	Here $|K_1|=\kl{\FF}$ and $|K_2|=\il{\FF}$ so $\kl{\FFX[\FF]}=\kl{\FF}+\il{\FF}$, $\il{\FFX[\FF]}=\il{\FF}$, and $\ol{\FFX[\FF]}=\ol{\FF}$.
	Let $k=\kl{\FF}$, $n=\il{\FF}$, and $m=\ol{\FF}$.
\fi

\begin{theorem}\label{thm:ffx}
	Let $\advA$ be an order consistent quantum adversary which makes classical queries to $\procEv$ and consider $\FFX[\cdot]$ with the underlying function family modeled by a random oracle drawn from $\Func(k,n,m)$.
	Then 
	\begin{displaymath}
		\Adv^{\prf}_{\FFX}(\advA)\leq \sqrt{\frac{8(p+q)pq}{2^{k+n}}},
	\end{displaymath}
	where $p$ is the number of quantum oracle queries that $\advA$ makes to the random oracle and $q$ is the number of queries it makes to $\procEv$.
\end{theorem}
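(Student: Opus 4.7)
The plan is to execute the blueprint sketched in the introduction: use Zhandry's compressed (Fourier-basis) representation of a random function together with the O2H theorem (Theorem~\ref{thm:reg-ahu}). First I would represent each random function as an initially all-zero table $D[\cdot,\cdot]$ such that, after Hadamard conjugation on the adversary's output register, a quantum query acts by XOR-ing into $D$. In the ideal world, maintain two independent such tables $D_\ro$ (for the random oracle queried via $\ro$) and $D_F$ (for the truly random function answering $\procEv$ queries). In the real world, maintain $D_\ro$ and a classical list $L$ recording the $\procEv$ inputs; answers to $\procEv(M)$ are obtained from $D_\ro[K_1, M\xor K_2]$, copying a value from there when the slot has been touched and otherwise lazily sampling fresh randomness that is written into both $D_\ro[K_1, M\xor K_2]$ and the response. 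The classical nature of $\procEv$ queries is crucial here: it lets us record them in plaintext and detect overlaps on the RO side without disturbing superpositions on non-overlapping slots.

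Next I would fold the game and adversary together into a single O2H adversary $\mathcal{A}'$ that makes at most $p+q$ queries to a combined oracle acting on the input registers together with the game's internal state $(D_\ro, D_F \text{ or } L, K_1, K_2)$. Let $P_0$ implement the ideal-world joint unitary and $P_1$ the real-world one. By construction the two unitaries agree outside the bad set $S$ consisting of state-input pairs where a $\procEv$ query $M$ satisfies $(K_1, M\xor K_2) \in \mathrm{supp}(D_\ro)$ or where a $\ro$ query has input $(K_1, M_j \xor K_2)$ for some recorded $M_j \in L$. A careful definition of $P_1$ on $S$ encodes the shared storage that enforces consistency between $\ro$ and $\FFX[\ro]$, and makes $P_0, P_1$ a valid pair for Theorem~\ref{thm:reg-ahu} that perfectly simulates $\sG^{\prf}_{\FFX,0}$ and $\sG^{\prf}_{\FFX,1}$ respectively.

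Finally I would bound the guessing probability $\Adv^{\guess}_{\distD}(\mathcal{A}')$ under $P_0$. In the ideal world the keys $K_1, K_2$ are information-theoretically hidden from $\mathcal{A}$, the support of $D_\ro$ after any prefix of the execution has size at most $p$, and $|L|\le q$. For any fixed query index, a union bound over the at most $pq$ aligning pairs, together with the uniformity of $(K_1, K_2)\in \bits^{k+n}$, yields probability at most $pq/2^{k+n}$ of landing in $S$; averaging over the uniform choice of the halted query among the $p+q$ queries multiplies by $1/(p+q)$. Theorem~\ref{thm:reg-ahu} then gives
\begin{displaymath}
\Adv^{\prf}_{\FFX}(\mathcal{A}) \leq 2(p+q)\sqrt{\tfrac{2pq}{(p+q)2^{k+n}}} = \sqrt{\tfrac{8(p+q)pq}{2^{k+n}}}.
\end{displaymath}

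The main obstacle, and the only non-routine step, is defining the real-world joint unitary $P_1$ so that (i) it coincides with $P_0$ outside $S$, (ii) its action on $S$ correctly entangles $D_\ro$ with the $\procEv$ response to enforce $\FFX$-consistency, and (iii) the classical-query interface to $\procEv$ can be implemented alongside the quantum $\ro$ interface without collapsing the Fourier-basis superposition on non-bad slots. This is precisely the place where the partially-quantum setting and the classicality of $\procEv$ queries pay off: detection of support membership in $D_\ro$ can be implemented as a controlled phase-free test against the recorded list $L$, so no non-bad superposition components are disturbed. Once this representation is in place, the O2H invocation and the union-bound computation are essentially mechanical.
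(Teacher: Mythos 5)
Your high-level strategy matches the paper exactly: conjugate by Hadamards to get Zhandry's Fourier-basis compressed representation, fold the game state into a single O2H adversary with oracles $P_0$ (ideal) and $P_1$ (real), exploit the classical list of $\procEv$ inputs to define a small bad set $S$ on which the two unitaries disagree, and then bound the guessing advantage by observing that $(K_1,K_2)$ stay information-theoretically hidden in the ideal world. The final plug-in to Theorem~\ref{thm:reg-ahu} and the algebra also match.

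However, the step you yourself flag as ``the main obstacle'' -- constructing $P_1$ so that (i) it agrees with $P_0$ off $S$, (ii) it enforces $\FFX$-consistency on $S$, and (iii) the whole thing is a well-defined unitary -- is precisely where the paper does its real work, and your description of the real world does not supply a construction that satisfies all three. You describe answering $\procEv(M)$ by ``copying a value from there when the slot has been touched and otherwise lazily sampling fresh randomness,'' which is classical-lazy-sampling language that doesn't directly lift to a reversible operation in the compressed-oracle picture; there is no unitary ``check if this slot was touched and branch.'' The paper instead keeps \emph{two} additive tables $H$ and $F$ in the real world as well, maintains the invariant that $F$ holds all mass for recorded $\procEv$ inputs and $H$ holds everything else, and realizes the transition between the single-table and two-table representations by an explicit self-inverse swap unitary $\dec$ that is controlled on the classically recorded inputs $\vec X$. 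Hybrids $\sH_6\to\sH_5\to\sH_4$ then rewrite $\dec\circ\procFEv\circ\dec$ and $\dec\circ\procFRo\circ\dec$ into oracles $\procFEv',\procFRo'$ that are literally bitwise identical to the ideal-world $\procFEv,\procFRo$ except on a syntactically defined set of bad inputs -- which is what makes the deterministic $\distD$ valid for O2H. Without something equivalent to $\dec$ and the two-table invariant, your $P_1$ is not specified and (iii) is unaddressed. A secondary issue: your guessing-probability union bound claims ``probability at most $pq/2^{k+n}$ of landing in $S$'' per halted query, but the bound you then plug in requires $2pq/2^{k+n}$ across all $p+q$ indices (the paper gets this by bounding $\procFRo$ queries by $q/2^{k+n}$ each and $\procFEv$ queries by $p/2^{k+n}$ each and summing); your phrasing loses this factor of two even though the displayed formula silently recovers it.
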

We can reasonably assume that $p>q$ so the dominant behavior of the above expression is $O\left(\sqrt{p^2q/2^{k+n}}\right)$.

The proof of this result proceeds via a sequence of hybrids which
gradually transition from the real world of $\sG^{\prf}_{\FFX}$ to the
ideal world.  Crucial to this sequence of hybrids are the technique of
Zhandry~\cite{C:Zhandry19} which, by viewing a space under dual bases,
allows one to simulate a random function using a sparse representation
table and to ``record'' the queries to the
function.
For the ideal
world, we can represent the random oracle and the random function
underlying $\procEv$ independently using such sparse representation
tables. With some careful modification, we are also able to represent
the real world's random oracle using a similar pair of sparse
representation tables as if it were two separate functions.  However,
in this case, the tables will be slightly non-independent in that if
the adversary queries $\procEv$ on an input $x$ and the random oracle
on $(K_1,x\xor K_2)$ then the results of the latter query is stored in
the $\procEv$ table, rather than the random oracle table.  Beyond this
minor consistency check (which we are only able to implement because
the queries to $\procEv$ are classical and so can be stored by
simulation), the corresponding games are identical.  Having done this
rewriting, we can carefully apply \thref{thm:reg-ahu} to bound the
ability of an adversary to distinguish between the two worlds by its
ability to trigger this consistency check.

As mentioned in \secref{sec:quant-back}, our application of \thref{thm:reg-ahu} here is somewhat atypical.
Our distribution over functions $\distD$ will be deterministic, but we are able to still extract a meaningful bound from this by taking advantage of our knowledge of the particular behavior of the adversary we apply \thref{thm:reg-ahu} with.

\begin{proof}
	In this proof we will consider a sequence of hybrid games $\sH_0$ through $\sH_9$.
	Of these games we will establish the following claims.
	\begin{enumerate}
		\item $\Pr[\sG^{\prf}_{\FFX,0}(\advA)]=\Pr[\sH_0]=\Pr[\sH_1]=\Pr[\sH_2]=\Pr[\sH_3]$
		\item $\Pr[\sG^{\prf}_{\FFX,1}(\advA)]=\Pr[\sH_9]=\Pr[\sH_8]=\Pr[\sH_7]=\Pr[\sH_6]=\Pr[\sH_5]=\Pr[\sH_4]$
		\item $\Pr[\sH_4]-\Pr[\sH_3]\leq	 \sqrt{8(p+q)pq/2^{\kl{\FF}+\il{\FF}}}$
	\end{enumerate}
	Combining these claims gives the desired result.
	
	In formally defining our hybrids we write the computation to be performed using the following quantum registers.
	\begin{itemize}
		\item $W$: The workspace of $\advA$. The adversary's final output is written into $W[1]$.
		\item $K$: The $k$-qubit register (representing the
                  function \emph{key/index}) $\advA$ uses when making
                  oracle queries to the random oracle.
		\item $X$: The $n$-qubit register (representing
                  function inputs) used when making oracle queries to the
                  random oracle or $\procEv$.
		\item $Y$: The $m$-qubit register (representing
                  function outputs) into which the results of oracle
                  queries are written.
		\item $H$: The $2^{k+n}\cdot m$-qubit register which stores the function defining the random oracle (initially via its truth table).
		\item $F$: The $2^n \cdot m$-qubit register which stores the function defining $\procEv$.
		\item $K_1$: The $k$-qubit register which stores the first key of the construction.
		\item $K_2$: The $n$-qubit register which stores the second key of the construction.
		\item $I$: The $\lceil\log p\rceil$-qubit register which tracks how many $\procEv$ queries $\advA$ has made.
		\item $\vec{X}=(\vec{X}_1,...,\vec{X}_p)$: The $p$ $n$-qubit registers used to store the classical queries that $\advA$ makes to $\procEv$.
	\end{itemize}
	
	We start by changing our perspective. 
	A quantum algorithm that makes classical queries to $\procEv$ can be modeled by thinking of a quantum algorithm that measures its $X$ register immediately before the query.
	(Because the behavior of $\procEv$ is completely classical at this point, we do not need to measure the $Y$ register as well.)
	Measuring the register $X$ is indistinguishable from using a CNOT operation to copy it into a separate register (i.e. xor-ing $X$ into the previously empty register $\vec{X}_I$ that will never again be modified).
	By incorporating this CNOT operation into the behavior of our hybrid game, we treat $\advA$ as an attacker that makes fully quantum queries to its oracles in the hybrid game.
	We think of $\advA$ as deferring all of measurements until the end of its computation.
	Because all that matters is its final output $W[1]$ we can have the game measure just that register and assume that $\advA$ does not internally make any measurements.
	The principle of deferred measurement ensures that the various changes discussed here do not change the behavior of $\advA$.
	This perspective change lets us use purely quantum analysis, rather than mixing quantum and classical.
	
	\heading{Claim 1.}  We start by considering the hybrids
        $\sH_0$ through $\sH_3$, defined in \figref{fig:ideal-hyb}
        which are all identical to the ideal world of
        $\sG^{\prf}_{\FFX}$. In these transitions we are applying the
        ideas of Zhandry~\cite{C:Zhandry19} to transition to
        representing the random functions stored in $H$ and $F$ by an
        all zeros table which is updated whenever the adversary makes
        a query.

        The hybrid $\sH_0$ is mostly just $\sG^{\prf}_{\FFX}$
        rewritten to use the registers indicated above.  
        So
        $\Pr[\sG^{\prf}_{\FFX,0}(\advA)]=\Pr[\sH_0]$ holds.
	
	\begin{figure}[t]
	\ifnum\eprintversion=0
        \renewcommand{\lwid}{0.5}
        \renewcommand{\mwid}{}
        \renewcommand{\rwid}{0.4}
	\else
        \renewcommand{\lwid}{0.45}
        \renewcommand{\mwid}{}
        \renewcommand{\rwid}{0.21}
	\fi
	\twoColsNoDivide{\lwid}{\rwid}
	{
	\underline{Games $\sH_0$, \hlg{$\sH_1$}}\\
	$\ro\getsr\Func(k,n,m)$\\
	$\textbf{F}\getsr\Func(0,n,m)$\\
	$\ket{H,F}\gets\ket{\ro,\mathbf{F}}$\\
	\hlg{$\ket{H,F}\gets \had\ket{0^{2^{k+n}\cdot m}, 0^{2^n\cdot m}}$}\\
	Run $\advA[{\procEv,\procRo}]$\\
	Measure $W[1]$, \hlg{$H$, $F$}\\
	Return $W[1]=1$\smallskip
	}
	{
	\underline{$\procEv(X,Y:I,\vec{X},F)$}\\
	$I\gets I+1 \mod 2^{|I|}$\\
	$\vec{X}_I \gets \vec{X}_I \xor X$\\
	$Y\gets F(X) \xor Y$\\
	Return $(X,Y : I,\vec{X},F)$\\[4pt]
	\underline{$\procRo(K,X,Y:H)$}\\
	$Y\gets H_K(X)\xor Y$\\
	Return $(K,X,Y : H)$\smallskip
	}
	\ifnum\eprintversion=0
        \renewcommand{\lwid}{0.6}
        \renewcommand{\mwid}{}
        \renewcommand{\rwid}{0.3}
	\else
        \renewcommand{\lwid}{0.45}
        \renewcommand{\mwid}{}
        \renewcommand{\rwid}{0.21}
	\fi
	\twoColsNoDivide{\lwid}{\rwid}
	{
	\underline{Games \fbox{$\sH_2$}, \hlg{$\sH_3$}}\\
	\fbox{$\ket{H,F}\gets \had\ket{0^{2^{k+n}\cdot m}, 0^{2^n\cdot m}}$}\\
	\fbox{Run $\advA[{\had^{Y,F}\circ \procFEv \circ \had^{Y,F}, \had^{Y,H}\circ \procFRo\circ \had^{Y,H}}]$}\\
	\hlg{$\ket{H,F}\gets \ket{0^{2^{k+n}\cdot m}, 0^{2^n\cdot m}}$}\\
	\hlg{Run $\advA[{\had^{Y}\circ \procFEv \circ \had^{Y},\had^{Y}\circ \procFRo\circ \had^{Y}}]$}\\
	\hlg{$\ket{H,F}\gets \had \ket{H,F}$}\\
	Measure $W[1]$, $H$, $F$\\
	Return $W[1]=1$\smallskip
	}
	{
	\underline{$\procFEv(X,Y:I,\vec{X},F)$}\\
	$I\gets I+1 \mod 2^{|I|}$\\
	$\vec{X}_I \gets \vec{X}_I \xor X$\\
	$F(X)\gets F(X)\xor Y$\\
	Return $(X,Y:I,\vec{X},F)$\\[4pt]
	\underline{$\procFRo(K,X,Y:H)$}\\
	$H_K(X)\gets H_K(X)\xor Y$\\
	Return $(K,X,Y:H)$\smallskip
	}
	\vspace{-2ex}
	\caption{Hybrid games $\sH_0$ through $\sH_3$ for the proof of \thref{thm:ffx} which are equivalent to the ideal world of $\sG^{\prf}_{\FFX}$. Highlighted or boxed code is only included in the correspondingly highlighted or boxed game.}
	\label{fig:ideal-hyb}
	\hrulefill
	\end{figure}
	
	Next consider $\sH_1$ which differs from $\sH_0$ only in the grey highlighted code which initializes $H$ and $F$ in the uniform superposition and then measures them at the end of execution. (Recall that the Hadamard transform applied to the all zeros state gives the uniform superposition.)
	Note that these register control, but are unaffected by the oracles
        $\procEv$ and $\procRo$.
	Because they are never modified while $\advA$ is
        executing, the principle of deferred measurement tells us that
        this modification is undetectable by
        $\advA$, giving $\Pr[\sH_0]=\Pr[\sH_1]$
	 
	Next consider $\sH_2$ which contains the boxed, but not the highlighted, code.
	This game uses the oracles $\procFEv$ and $\procFRo$, the Fourier versions of $\procEv$ and $\procRo$, which xor the $Y$ value of $\advA$'s query into the the register $F$ or $H$.
	Note that $\advA$'s access to these oracles is mitigated by $\had^{Y,F}$ on each query.
	The superscript here indicate that the Hadamard transform is being applied to the registers $Y$ and $F$.
	We have that ${\had^{Y,F}\circ \procFEv \circ \had^{Y,F}} = \procEv$ and $\had^{Y,H}\circ \procFRo\circ \had^{Y,H}=\procRo$ both hold.\footnote{
	This follows as a consequence of the following.
	Let $U_{\xor}$ and $U'_{\xor}$ be the unitaries which for $y,z\in\bits$ are defined by $U_{\xor}\ket{y,z}=\ket{y\xor z, z}$ and $U'_{\xor}\ket{y,z}=\ket{y,y\xor z}$.
	Then $\had \circ U_{\xor} \circ \had = U'_{\xor}$.}
	So $\Pr[\sH_1]=\Pr[\sH_2]$ because the adversary's oracles are identical.
	
	Next consider $\sH_3$ which contains the highlighted, but not
        the boxed, code.  For this transition, recall that
        $\had\circ\had$ is the identity operator.  So to transition to
        this game we can cancel the $\had$ operator used to initialize
        $H$ with the $\had^H$ operator applied before $\advA$'s first
        $\procFRo$ oracle query.  Similarly, we can cancel the
        $\had^H$ operation performed after any (non-final) $\procFRo$
        query with the $\had^H$ operation performed before the next
        $\procFRo$ query.  Finally, the $\had^H$ operation that would
        be performed after the final $\procFRo$ query is instead
        delayed to be performed immediately before $H$ is measured.
        (We could have omitted this operation and measurement entirely
        because all that matters at that point is the measurement of $W[1]$.)
        The $\had$ operators on $F$ are similarly changed.  Because
        $\advA$ does not have access to the $H$ and $F$ registers, we
        can indeed commute the $\had$ operators with $\advA$ in this manner
        without changing behavior.  Hence $\Pr[\sH_2]=\Pr[\sH_3]$, as
        desired.  Note that $H$ and $F$ independently store tables
        which are initialized to all zeros and then written into by
        the adversary's queries.
	
	\begin{figure}[t]
	\ifnum\eprintversion=0
        \renewcommand{\lwid}{0.5}
        \renewcommand{\mwid}{}
        \renewcommand{\rwid}{0.4}
	\else
        \renewcommand{\lwid}{0.36}
        \renewcommand{\mwid}{}
        \renewcommand{\rwid}{0.3}
	\fi
	\twoColsNoDivide{\lwid}{\rwid}
	{
	\underline{Games $\sH_9$, \hlg{$\sH_{8}$}}\\
	$\ro\getsr\Func(k,n,m)$\\
	$\mathbf{K_1}\getsr\bits^k$\\
	$\mathbf{K_2}\getsr\bits^n$\\
	$\ket{H,K_1,K_2}\gets\ket{\ro,\mathbf{K_1},\mathbf{K_2}}$\\
	\hlg{$\ket{H,K_1,K_2}\gets \had\ket{0^{2^{k+n}\cdot m}, 0^k, 0^n}$}\\
	Run $\advA[{\procEv,\procRo}]$\\
	Measure $W[1]$, \hlg{$H$, $K_1$, $K_2$}\\
	Return $W[1]=1$\smallskip
	}
	{
	\underline{$\procEv(X,Y:H,I,\vec{X},K_1,K_2)$}\\
	$I\gets I+1 \mod 2^{|I|}$\\
	$\vec{X}_I \gets \vec{X}_I \xor X$\\
	$Y\gets H_{K_1}(X \xor K_2)\xor Y$\\
	Return $(X,Y : H,I,\vec{X},K_1,K_2)$\\[4pt]
	\underline{$\procRo(K,X,Y : H)$}\\
	$Y\gets H_K(X)\xor Y$\\
	Return $(K,X,Y : H)$\smallskip
	}
	\twoColsNoDivide{\lwid}{\rwid}
	{
	\underline{Games \fbox{$\sH_{7}$}, \hlg{$\sH_{6}$}}\\
	\fbox{$\ket{H,K_1,K_2}\gets \had\ket{0^{2^{k+n}\cdot m}, 0^k, 0^n}$}\\
	\fbox{$O\gets {\had^{Y,H}\circ \procFEv \circ \had^{Y,H}}$}\\
	\fbox{$O'\gets \had^{Y,H}\circ \procFRo\circ \had^{Y,H}$}\\
	\fbox{Run $\advA[O,O']$}\\
	\hlg{$\ket{K_1,K_2}\gets \had \ket{0^k, 0^n}$}\\
	\hlg{$\ket{H}\gets \ket{0^{2^{k+n}\cdot m}}$}\\
	\hlg{Run $\advA[{\had^{Y}\circ \procFEv \circ \had^{Y},\had^{Y}\circ \procFRo\circ \had^{Y}}]$}\\
	\hlg{$\ket{H}\gets \had \ket{H}$}\\
	Measure $W[1]$, $H$, $K_1$, $K_2$\\
	Return $W[1]=1$\smallskip
	}
	{
	\underline{$\procFEv(X,Y:H,I,\vec{X},K_1,K_2)$}\\
	$I\gets I+1 \mod 2^{|I|}$\\
	$\vec{X}_I \gets \vec{X}_I \xor X$\\
	$H_{K_1}(X \xor K_2)\gets H_{K_1}(X \xor K_2)\xor Y$\\
	Return $(X,Y : H,I,\vec{X},K_1,K_2)$\\[4pt]
	\underline{$\procFRo(K,X,Y : H)$}\\
	$H_K(X)\gets H_K(X)\xor Y$\\
	Return $(K,X,Y : H)$
	}
	\vspace{-2ex}
	\caption{Hybrid games $\sH_9$ through $\sH_6$ for the proof of \thref{thm:ffx} which are equivalent to the real world of $\sG^{\prf}_{\FFX}$. Highlighted or boxed code is only included in the correspondingly highlighted or boxed game.}
	\label{fig:real-hyb}
	\hrulefill
	\end{figure}

	\heading{Claim 2.}  We now consider the hybrids $\sH_4$
        through $\sH_9$ (starting from $\sH_9$), which are defined in
        \figref{fig:real-hyb} and \figref{fig:real-hyb2} using similar
        ideas as in the transition from $\sH_0$ through $\sH_3$.  As
        we will justify, these games are all equivalent to the real
        world of $\sG^{\prf}_{\FFX}$.
	
	First $\sH_9$ rewrote the real world of $\sG^{\prf}_{\FFX}$ to
        use our specified registers and to record queries into
        $\vec{X}$ in $\procEv$.  Then in $\sH_8$, rather than
        sampling $H$, $K_1$, and $K_2$ uniformly at the beginning of
        the game we put them in the uniform superposition and measure
        them at the end of the game.  For $\sH_7$ we replace our
        oracles that xor into the adversary's $Y$ register with
        oracles that xor into the $H$ register using Hadamard
        operations, some of which we then cancel out to transition to
        $\sH_6$.  The same arguments from Claim 1 of why these sorts
        of modifications do not change the behavior of the game apply
        here and so
        $\Pr[\sG^{\prf}_{\FFX,1}(\advA)]=\Pr[\sH_9]=\Pr[\sH_8]=\Pr[\sH_7]=\Pr[\sH_6]$.
	
	\begin{figure}[t]
	\ifnum\eprintversion=0
        \renewcommand{\lwid}{0.54}
        \renewcommand{\mwid}{}
        \renewcommand{\rwid}{0.39}
	\else
        \renewcommand{\lwid}{0.43}
        \renewcommand{\mwid}{}
        \renewcommand{\rwid}{0.3}
	\fi
	\twoColsNoDivide{\lwid}{\rwid}
	{
	\underline{Games \fbox{$\sH_{5}$}, \hlg{$\sH_{4}$}}\\
	{$\ket{K_1,K_2}\gets \had \ket{0^k, 0^n}$}\\
	{$\ket{H,F}\gets \ket{0^{2^{k+n}\cdot m}, 0^{2^n\cdot m}}$}\\
	\fbox{$O\gets  \had^{Y}\circ \dec\circ \procFEv \circ \dec \circ \had^{Y}$}\\
	\fbox{$O'\gets  \had^{Y}\circ \dec \circ \procFRo \circ  \dec \circ \had^{Y}$}\\
	\fbox{Run $\advA[{O,O'}]$}\\
	\hlg{Run $\advA[{ \had^{Y}\circ \procFEv' \circ \had^{Y}, \had^{Y}\circ \procFRo'\circ \had^{Y}}]$}\\
	$\ket{H,F,I,\vec{X},K_1,K_2}\gets \dec\ket{H,F,I,\vec{X},K_1,K_2}$\\
	{$\ket{H}\gets \had \ket{H}$}\\
	Measure $W[1]$, $H$, $K_1$, $K_2$\\
	Return $W[1]=1$\\[4pt]
	\underline{$\procFRo'(K,X,Y : H,F,I,\vec{X},K_1,K_2)$}\\
	If $K=K_1$ and $X\xor K_2 \in \sett{\vec{X}_1,\dots,\vec{X}_I}$ then\\
	\ind \commentt{Input is bad}\\
	\ind $F(X\xor K_2)\gets F(X\xor K_2) \xor Y$\\
	Else\\
	\ind $H_K(X)\gets H_K(X)\xor Y$\\
	Return $(K,X,Y : H,F,I,\vec{X},K_1,K_2)$\smallskip
	}
	{
	\underline{$\procFEv(X,Y:H,I,\vec{X},K_1,K_2)$}\\
	$I\gets I+1 \mod 2^{|I|}$\\
	$\vec{X}_I \gets \vec{X}_I \xor X$\\
	$H_{K_1}(X \xor K_2)\gets H_{K_1}(X \xor K_2)\xor Y$\\
	Return $(X,Y : H,I,\vec{X},K_1,K_2)$\\[4pt]
	\underline{$\procFRo(K,X,Y : H)$}\\
	$H_K(X)\gets H_K(X)\xor Y$\\
	Return $(K,X,Y : H)$\\[23pt]
	\underline{$\procFEv'(X,Y:H,F,I,\vec{X},K_1,K_2)$}\\
	$I\gets I+1 \mod 2^{|I|}$\\
	$\vec{X}_I \gets \vec{X}_I \xor X$\\
	$\mathsf{bool}_1\gets (X\not\in\{\vec{X}_1,\dots,\vec{X}_{I-1}\})$\\
	$\mathsf{bool}_2\gets (H_{K_1}(X \xor K_2)\neq 0^m)$\\ 
	$\mathsf{bool}_3\gets (F(X)\neq 0^m)$\\
	If $\mathsf{bool}_1$ and ($\mathsf{bool}_2$ or $\mathsf{bool}_3$) then\\
	\ind \commentt{Input is bad}\\
	\ind $F'(X)\gets F(X)$\\
	\ind $F(X)\gets H_{K_1}(X \xor K_2)$\\
	\ind $H_{K_1}(X \xor K_2) \gets F'(X)$\\
	$F(X) \gets F(X)\xor Y$\\
	Return $(X,Y : H,I,\vec{X},K_1,K_2)$\smallskip
	}
	\vspace{-2ex}
	\caption{Hybrid games $\sH_5$ and $\sH_4$ for the proof of \thref{thm:ffx} which are equivalent to the real world of $\sG^{\prf}_{\FFX}$. Highlighted or boxed code is only included in the correspondingly highlighted or boxed game. Unitary $\dec$ is define  in the text.}
	\label{fig:real-hyb2}
	\hrulefill
	\end{figure}

	Our next transitions are designed to make the current game's oracles identical with those of $\sH_3$, except on some ``bad'' inputs.
	In $\sH_6$ we have a single all zeros table $H$ which gets written into by queries that $\advA$ makes to either of its oracles, while in $\sH_3$ the oracles separately wrote into either $H$ or $F$. 
	For $\sH_5$ we will similarly separate the single table $H$ into separate tables $H$ and $F$.
	However, we cannot keep them completely independent, because if the adversary queries $\procFEv$ with $X=x$ and $\procFRo$ with $(K,X)=(K_1, x\xor K_2)$ then both of these operations would be writing into the same table location in $\sH_6$.
	Consider the following unitary $\dec$ which acts on registers $H$ and $F$ and is controlled by the registers $I$, $\vec{X}$, $K_1$, and $K_2$.
	We will think of this unitary as transitioning us between a representation of $H$ as a single table (with an all-zero $F$ table) and a representation of it divided between $H$ and $F$.
	\oneColNoBox{
	\underline{$\dec(H,F,I,\vec{X},K_1,K_2)$}\\
	For $x\in \sett{\vec{X}_1,\dots,\vec{X}_I}$ do\\
	\ind $F'(x)\gets F(x)$\\
	\ind $F(x)\gets H_{K_1}(x \xor K_2)$\\
	\ind $H_{K_1}(x \xor K_2) \gets F'(x)$\\
	Return $(H,F,I,\vec{X},K_1,K_2)$\smallskip
	}
	In words, $\dec$ swaps $F(x)$ and $H_{K_1}(x\xor K_2)$ for each $x$ that has been previous queried to $\procFEv$ (as stored by $\vec{X}$ and $I$).
	Note that $\dec$ is its own inverse.
	In $\sH_5$ we (i) initialize the table $F$ as all zeros, (ii) perform $\dec$ before and after each oracle query, and (iii) perform $\dec$ after $\advA$ has executed.
	We verify that $H$ has the same value in $\sH_5$ that it would have had in $\sH_6$ during each oracle query and at the end before measurement.
	The application of $\dec$ before the first oracle query does nothing (because $I=0$) so $H$ is all zeros for this query as required.
	As we've seen previously with $\had$, we can commute $\dec$ with the operations of $\advA$ because $\dec$ only acts on registers outside of the adversary's control.
	We can similarly commute $\dec$ with $\had^Y$.
	Hence the $\dec$ operation after every non-final oracle query can be seen to cancel with the $\dec$ operation before the following oracle query.
	The $\dec$ operation after the final oracle query cancels with the $\dec$ operation performed after $\advA$ halts execution.
	Hence, $\Pr[\sH_6]=\Pr[\sH_5]$ as claimed.
	
	For the transition to $\sH_4$ let us dig into how our
        two-table representation in $\sH_5$ works so that we can
        incorporate the behavior of $\dec$ directly into the oracles. For simplicity of
        notation in discussion, let $\tH(x)$ denote $H_{K_1}(x \xor K_2)$.  First
        note that in between oracle queries the two tables
        representation of $\sH_5$ will satisfy the property that for each
        $x\in \sett{\vec{X}_1,\dots,\vec{X}_I}$ we will have
        $\tH(x)=0^{\ol{\FF}}$ and for all other $x$ we will have that
        $F(x)=0^{\ol{\FF}}$.\footnote{More precisely, the
          corresponding registers hold superpositions over tables
          satisfying the properties we discuss.}  This is the case
        because after each query we have applied $\dec$ to an $H$ which
        contains the same values it would have in $\sH_6$ and an $F$
        which is all zeros.
	
	Now consider when a $\dec\circ\procFEv\circ\dec$ query is executed with some $X$ and $Y$.
	If $X\in \sett{\vec{X}_1,\dots,\vec{X}_I}$, then $F(X)$ and $\tH(X)$ are swapped, $Y$ is xored into $\tH(X)$, then finally $F(X)$ and $\tH(X)$ are swapped back.
	Equivalently, we could have xored $Y$ directly into $F(X)$ and skipped the swapping around.
	If $X\not\in \sett{\vec{X}_1,\dots,\vec{X}_I}$, then $Y$ is xored into $\tH(X)$ before $F(X)$ and $\tH(X)$ are swapped.
	If $\tH(X)=0^{\ol{F}}$ beforehand, then we could equivalently have xored $Y$ directly into $F(X)$ and skipped the swapping around (because $F(X)=0^{\ol{\FF}}$ must have held from our assumption on $X$).
	If $\tH(X)\neq0^{\ol{F}}$ beforehand, then we could equivalently could have swapped $\tH(X)$ and $F(X)$ first, then xored $Y$ into $F(X)$.
	The equivalent behavior we described is exactly the behavior captured by the oracle $\procFEv'$ which is used in $\sH_4$ in place of $\dec\circ\procFEv\circ\dec$.
	It checks if $X\not\in \sett{\vec{X}_1,\dots,\vec{X}_I}$ ($\mathsf{bool}_1$) and $\tH(X)\neq 0^m$ ($\mathsf{bool_2}$), performing a swap if so.
	Then $Y$ is xored into $F(X)$.
	A swap is also performed if $X\not\in \sett{\vec{X}_1,\dots,\vec{X}_{I-1}}$ and $F(X)\neq 0^m$, however this case is impossible from our earlier observation that $F(x)=0^{m}$ when $X$ is not in $\vec{X}$.
	This second case was added only to ensure that $\procFEv'$ is a permutation.
	
	Similarly, consider when a $\dec\circ\procFRo\circ \dec$ query is executed with some $K$, $X$, $Y$.
	Any swapping done by $\dec$ uses $H_{K_1}$, so when $K\neq K_1$ this just xors $Y$ into $H_K(X)$.
	If $X\xor K_2$ is not in $\sett{\vec{X}_1,\dots,\vec{X}_I}$, then $H_K(X)$ is unaffected by the swapping so again this just xors $Y$ into $H_K(X)$.
	When $K= K_1$ and $X\xor K_2\in\sett{\vec{X}_1,\dots,\vec{X}_I}$, first $H_K(X)$ would have been swapped with $F(X\xor K_2)$, then $Y$ would be xored into $H_K(X)$, then $H_K(X)$ and $F(X\xor K_2)$ would be swapped back.
	Equivalently, we could just have xored $Y$ into $F(X\xor K_2)$ and skipped the swapping.
	This behavior we described is exactly the behavior captured by the oracle $\procFRo'$ which is used in $\sH_4$ in place of $\dec\circ\procFRo\circ\dec$.	
	
	We have just described that on the inputs we care about $\procFEv'$ behaves identically to $\dec\circ\procFEv\circ\dec$ and $\procFRo'$ behaves identically to $\dec\circ\procFRo\circ\dec$. Hence $\Pr[\sH_5]=\Pr[\sH_4]$, completing this claim.
	
	\begin{figure}[t]
	\ifnum\eprintversion=0
        \renewcommand{\lwid}{0.55}
        \renewcommand{\mwid}{}
        \renewcommand{\rwid}{0.37}
	\else
        \renewcommand{\lwid}{0.41}
        \renewcommand{\mwid}{}
        \renewcommand{\rwid}{0.28}
	\fi
	\twoColsNoDivide{\lwid}{\rwid}
	{
	\underline{Games \fbox{$\widetilde{\sH}_3$}, \hlg{$\widetilde{\sH}_{4}$}}\\
	{$\ket{K_1,K_2}\gets \had \ket{0^k, 0^n}$}\\
	{$\ket{H,F}\gets \ket{0^{2^{k+n}\cdot m}, 0^{2^n\cdot m}}$}\\
	{Run $\advA[{ \had^{Y}\circ \procFEv \circ \had^{Y}, \had^{Y}\circ \procFRo\circ \had^{Y}}]$}\\
	Measure $W[1]$\\
	Return $W[1]=1$\\[4pt]
	\underline{$\procFRo(K,X,Y : H,F,I,\vec{X},K_1,K_2)$}\\
	If $K=K_1$ and $X\xor K_2 \in \sett{\vec{X}_1,\dots,\vec{X}_I}$ then\\
	\ind \commentt{Input is bad}\\
	\ind \hlg{$F(X\xor K_2)\gets F(X\xor K_2) \xor Y$}\\
	\ind \fbox{$H_K(X)\gets H_K(X)\xor Y$}\\
	Else\\
	\ind $H_K(X)\gets H_K(X)\xor Y$\\
	Return $(K,X,Y : H,F,I,\vec{X},K_1,K_2)$\smallskip
	}
	{
	\underline{$\procFEv(X,Y:H,F,I,\vec{X},K_1,K_2)$}\\
	$I\gets I+1 \mod 2^{|I|}$\\
	$\vec{X}_I \gets \vec{X}_I \xor X$\\
	$\mathsf{bool}_1\gets (X\not\in\{\vec{X}_1,\dots,\vec{X}_{I-1}\})$\\
	$\mathsf{bool}_2\gets (H_{K_1}(X \xor K_2)\neq 0^m)$\\ 
	$\mathsf{bool}_3\gets (F(X)\neq 0^m)$\\
	If $\mathsf{bool}_1$ and ($\mathsf{bool}_2$ or $\mathsf{bool}_3$)\\
	\ind \commentt{Input is bad}\\
	\ind \hlg{$F'(X)\gets F(X)$}\\
	\ind \hlg{$F(X)\gets H_{K_1}(X \xor K_2)$}\\
	\ind \hlg{$H_{K_1}(X \xor K_2) \gets F'(X)$}\\
	$F(X) \gets F(X)\xor Y$\\
	Return $(X,Y : H,I,\vec{X},K_1,K_2)$\smallskip
	}
	\vspace{-2ex}
	\caption{Hybrid games $\widetilde{\sH}_3$ and $\widetilde{\sH}_4$ for the proof of \thref{thm:ffx} which are rewritten versions of $\sH_3$ and $\sH_4$ to emphasize that their oracles are identical-until-bad. Highlighted or boxed code is only included in the correspondingly highlighted or boxed game.}
	\label{fig:show-iub}
	\hrulefill
	\end{figure}
	
	\heading{Claim 3.}
	To compare hybrids $\sH_3$ and $\sH_4$ we will note their oracles only differ on a small number of inputs (in particular those labelled as bad by comments in our code) and then apply \thref{thm:reg-ahu} to bound the difference between them.
	To aid in this we have rewritten them as $\widetilde{\sH}_3$ and $\widetilde{\sH}_4$ in \figref{fig:show-iub}.
	For both we have removed some operations performed after $\advA$ halted its execution for cleanliness because these  operations on registers other than $W[1]$ cannot affect the probability that it is measured to equal $1$.
	So we have $\Pr[\widetilde{\sH}_3]=\Pr[\sH_3]$ and $\Pr[\widetilde{\sH}_4]=\Pr[\sH_4]$.
	
	Let $\procFEv_3$ and $\procFRo_3$ be the permutations defining the corresponding oracle in $\widetilde{\sH}_3$.
	Define $\procFEv_4$ and $\procFRo_4$ analogously.
	These permutations differ only on the inputs we referred to as bad. 
	So let $S$ denote the set of bad $X,H,F,I,\vec{X},K_1,K_2$ for $\procFEv$ (i.e. those for which $\mathsf{bool}_1$ and either $\mathsf{bool}_2$ or $\mathsf{bool}_3$ hold).
	Let $S'$ denote the set of bad $K,X,H,F,I,\vec{X},K_1,K_2$ for $\procFRo$  (i.e. those for which $K=K_1$ and $X\xor K\in\sett{\vec{X}_1,\dots,\vec{X}_I}$).
	Let $\distD$ denote the distribution which always outputs $(S,S',\procFEv_3,\procFRo_3,\procFEv_4,\procFRo_4,\varepsilon)$.
	Clearly this is a valid distribution for \thref{thm:reg-ahu} by our choice of $S$ and $S'$.
	
	Now we can define an adversary $\advA'$ for $\sG^{\dist}_{\distD,b}$ which simulates the view of $\advA$ in $\widetilde{\sH}_{3+b}$ by locally running the code of that hybrid except for during oracle queries when it uses its $f_b$ oracle to simulate $\procFEv$ and $f'_b$ oracle to simulate $\procFRo$.
	Because the simulation of these views are perfect we have that
	\begin{align*}
		\Pr[\widetilde{\sH}_3]-\Pr[\widetilde{\sH}_4]
		=\Adv^{\dist}_{\distD}(\advA')
		\leq 2(p+q)\sqrt{\Adv^{\guess}_{\distD}(\advA')}
	\end{align*}
 	where the inequality follows from \thref{thm:reg-ahu}, noting that $\advA'$ makes $p+q$ oracle queries.
	
	To complete the proof we bound $\Adv^{\guess}_{\distD}(\advA')$.
	In the following probability calculation we use $x$ and $i$ to denote random variables taking on the values the corresponding variables have at the end of an execution of $\sG^{\guess}_{\distD}(\advA)$.
	Let $\mathcal{S}$ denote a random variable which equals $S$ if the measured query is to $f_0$ and equals $S'$ otherwise.
	Then conditioning over each possible value of $i$ gives
	\begin{align*}
		\Adv^{\guess}_{\distD}(\advA')
		=	\Pr[x \in \mathcal{S}]
		&= \sum_{j=1}^{p+q} \Pr[x \in \mathcal{S} \mid i=j] \Pr[i=j]\\
		&= (p+q)^{-1}\sum_{j=1}^{p+q} \Pr[x \in \mathcal{S} \mid i=j].
	\end{align*}
	
	Because $\advA$ is order consistent we can pick disjoint sets $E$ and $R$ with $E\cup R = \sett{1,\dots,p+q}$ such that $i\in E$ means the $i$-th query is to $\advA$'s $\procFEv$ oracle and $i\in R$ means that the $i$-th query is to its $\procFRo$ oracle.
	Note that $|E|=q$ and $|R|=p$.
	The view of $\advA$ (when run by $\advA'$) in
        $\sG^{\guess}_{\distD}$ matches its view in
        $\widetilde{\sH}_3$ so, in particular, it is independent of
        $K_1$ and $K_2$. 
	Hence we can think of these keys being chosen at random at the end of execution when analyzing the probability of $x\in \mathcal{S}$.
	
	For a $\procFRo$ query, the check for bad inputs is if $K_1=K$ and $K_2\in\sett{X\xor \vec{X}_1,\dots,X\xor \vec{X}_I}$.
	The variable $I$ is counting the number of $\procFEv$ queries made so far, so $I<q$.
	By a union bound,
	\begin{displaymath}
		\Pr[x\in S' \mid i=j] \leq q/2^{\kl{\FF}+\il{\FF}}
	\end{displaymath}
	when $j\in R$.
	
	For a $\procFEv$ query, the check for bad inputs is if $X\not\in\{\vec{X}_1,\dots,\vec{X}_{I-1}\}$ and $H_{K_1}(X\xor K_2)$ is non-zero.\footnote{Note that the other bad inputs, for which $X\not\in\{\vec{X}_1,\dots,\vec{X}_{I-1}\}$ and $F(X)$ is non-zero, will never occur.}
	In $\widetilde{\sH}_3$, each query to $\procFRo$ can make a single entry of $H$ non-zero so it will never have more than $p$ non-zero entries.
	By a union bound, 
	\begin{displaymath}
		\Pr[x\in S \mid i=j] \leq p/2^{\kl{\FF}+\il{\FF}}
	\end{displaymath}
	when $j\in E$.
	
	The proof is then completed by noting
	\begin{align*}
	\sum_{j=1}^{p+q} \Pr[x\in \mathcal{S} \mid i=j]
	&=\sum_{j\in R} \Pr[x\in S' \mid i=j] + \sum_{j\in E} \Pr[x\in S \mid i=j]\\
	&\leq p(q/2^{\kl{\FF}+\il{\FF}}) + q(p/2^{\kl{\FF}+\il{\FF}})
	=2pq/2^{\kl{\FF}+\il{\FF}}
	\end{align*}
	and plugging in to our earlier expression.\qed
\end{proof}


\section{Double Encryption}\label{sec:double-enc}
In this section we prove the security of the double encryption key-length extension technique against fully quantum attacks.
Our proof first reduces this to the ability of a quantum algorithm to solve the list disjointness problem and then extends known query lower bounds for element distinctness to list disjointness (with some modifications).

The double encryption blockcipher is constructed via two sequential application of an underlying blockcipher.
Formally, given a blockcipher $\EF\in\Icm(\kl{\EF},\bl{\EF})$, we define the double encryption blockcipher $\DE[\EF]$ by $\DE[\EF](K_1\concat K_2, x)= \EF_{K_2}(\EF_{K_1}(x))$.
Here $|K_1|=|K_2|=\kl{\EF}$ so $\kl{\DE[\EF]}=2\kl{\EF}$ and $\bl{\DE[\EF]}=\bl{\EF}$.
Its inverse can be computed as $\DE[\EF]^{-1}(K_1\concat K_2, x)= \EF^{-1}_{K_1}(\EF^{-1}_{K_2}(x))$.

Classically, the meet-in-the-middle attack~\cite{diffie1977exhaustive,merkle1981security} shows that this construction achieves essentially the same security a single encryption.
In the quantum setting, this construction was recently considered by Kaplan~\cite{kaplan2014quantum}.
They gave an attack and matching security bound for the key-recovery security of double encryption.
This leaves the question of whether their security result can be extended to cover full SPRP security, which we resolve by the main theorem of this section.
This theorem is proven via a reduction technique of Tessaro and Thiruvengadam~\cite{TCC:TesThi18} which they used to establish a (classical) time-memory tradeoff for the security of double encryption, by reducing its security to the list disjointness problem and conjecturing a time-memory tradeoff for that problem.

\heading{Problems and languages.}
In addition to the list disjointness problem ($\ld$), we will also consider two versions of the element distinctness problem ($\ed$, $\oed$).
In general, a problem $\pp$ specifies a relation $\rel$ on set on instances $\bigi$ (i.e. $\rel$ is function which maps an instance $L\in\bigi$ and witness $w$ to a decision $\rel(L,w)\in\bits$).
This relation induces a language $\lang=\{L\in\bigi : \exists w, \rel(L,w)=1\}$.
Rather than think of instances as bit strings, we will think of them as functions (to which decision and search algorithms are given oracle access).
To restrict attention to functions of specific sizes we let $\lang({\dom,\rng})=\lang\cap\bigi({\dom,\rng})$ where $\bigi({\dom,\rng})$ denotes the restriction of $\bigi$ to functions $L:[\dom]\to[\rng]$, where $D\leq R$.
To discuss instances not in the language we let $\lang'=\bigi\setminus\lang$ and ${\lang'}({\dom,\rng})=\bigi({\dom,\rng})\setminus\lang$.

Problems have decision and search versions.
The goal of a decision algorithm is to output 1 (representing ``acceptance'') on instances in the language and 0 (representing ``rejection'') otherwise.
Relevant quantities are the minimum probability $P^1$ of accepting an instance in the language, the maximum probability $P^0$ of accepting an instance not in the language, and the error rater $E$ which are formally defined by
\begin{align*}
	P_{\dom,\rng}^1(\advA) &= \min_{L\in \lang({\dom,\rng})}\Pr[\advA[L]=1],\ 
	P_{\dom,\rng}^0(\advA) = \max_{L\in {\lang}'({\dom,\rng})}\Pr[\advA[L]=1]\\
	E_{\dom,\rng}(\advA) &= \max \{1- P_{\dom,\rng}^1(\advA), P_{\dom,\rng}^0(\advA)\}.
\end{align*}
We define the decision $\pp$ advantage of $\advA$ by $\Adv^{\pp}_{\dom,\rng}(\advA)=P_{\dom,\rng}^1(\advA)-P_{\dom,\rng}^0(\advA)$.
In non-cryptographic contexts, instead of looking at the difference in probability that inputs that are in or out of the language are accepted, one often looks at how far these are each from $1/2$.
This motivates the definition $\Adv^{\pp\dc}_{\dom,\rng}(\advA)=\min \{2P_{\dom,\rng}^1(\advA)-1, 1-2P_{\dom,\rng}^0(\advA)\}=1-2E_{\dom,\rng}(\advA)$.

The goal of a search algorithm is to output a witness for the instance.
We define its advantage to be the minimum probability it succeeds, i.e., $\Adv^{\pp\srch}_{\dom,\rng}(\advA)=\min_{L\in\lang(D,R)}\Pr[\rel(L,\advA[L])=1]$.

\begin{figure*}[t]
        \begin{center}
        {%
        \begin{tabular}{|c|c|c|}
        \hline
        	Problem & Witness & Promise\\
        \hline\hline
    		$\ed$ & $x\neq y \suchthat L(x)=L(y)$ & - \\[2pt]
    		$\oed$ & $x\neq y \suchthat L(x)=L(y)$ & At most one witness. \\[2pt]
    		$\ld$ & $x,y\suchthat L_0(x)=L_1(y)$ & At most one witness. Injective $L_0,L_1$.\\[2pt]
    	\hline
        \end{tabular}}	
        \end{center}
\vspace{-1.5em}
\caption{Summary of the element distinctness and list disjointness problems we consider.}
\label{fig:languagese}
\hrulefill
\end{figure*}

\heading{Example problems.}
The \emph{list disjointness} problem asks how well an algorithm can distinguish between the case that is give (oracle access to) two lists which are disjoint or have one element in common.
In particular, we interpret an instance $L$ as the two functions $L_0,L_1:[\lfloor\dom/2\rfloor]\to[\rng]$ defined by $L_b(x)=L(x + b\lfloor \dom/2\rfloor)$.
Let $\mathcal{S}_n$ denote the set of $L$ for which $L_0$ and $L_1$ are injective and which have $n$ elements in common, i.e. for which $|\sett{L_0(1),\dots,L_0(\lfloor\dom/2\rfloor)} \cap \sett{L_1(1),\dots,L_1(\lfloor\dom/2\rfloor)} | = n$.
Then $\ld$ is defined by the relation $\rel$ which on input $(L,(x,y))$ returns 1 iff $L_0(x)=L_1(y)$ and the instance set $\bigi = \mathcal{S}_0 \cup \mathcal{S}_1$ (i.e., the promise that there is at most one element in common and that the lists are individually injective).
The search version of list disjointness is sometimes referred to as claw-finding.

The \emph{element distinctness} problem asks how well an algorithm can detect whether all the elements in a list are distinct.
Let $\mathcal{S}'_n$ denote the set of $L$ which have $n$ collision pairs, i.e. for which $|\sett{\sett{x,y} : x\neq y, L(x)=L(y)} | = n$.
Then $\ed$ is defined by the relation $\rel$ which on input $(L,(x,y))$ returns 1 iff $x\neq y$ and $L(x)=L(y)$ with the instance set $\bigi=\bigcup_{n=0}^\infty S'_n$ consisting of all functions.
 We let $\oed$ denote restricting $\ed$ to $\bigi=\mathcal{S}'_0 \cup \mathcal{S}'_1$ (i.e., the promise that there is at most one repetition in the list).

\subsection{Security result}
The following theorem shows that an attacker achieving constant
advantage must make $\Omega(2^{2k/3})$ oracle queries (ignoring log
terms).  Our bound is not tight for a lower parameter regimes, though
future work may establish better bounds for list disjointness in these
regimes.

\begin{theorem}\label{thm:de-security}
	Consider $\DE[\cdot]$ with the underlying blockcipher modeled by an ideal cipher drawn from $\Icm(k,n)$.
	Let $\advA$ be a quantum adversary which makes at most $q$ queries to the ideal cipher.
	Then 
	\begin{displaymath}
		\Adv^{\sprp}_{\DE}(\advA)\leq 11\sqrt[6]{(q\cdot k\lg k)^3/2^{2k}} + 1/2^k.
	\end{displaymath} 
\end{theorem}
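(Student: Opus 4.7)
My plan is to prove the bound in two stages. First, reduce the SPRP security of $\DE$ to the decision list disjointness problem $\ld$ using the Tessaro--Thiruvengadam technique. Second, establish the quantum query hardness of $\ld$ through a chain of reductions starting from the known $\Omega(2^{2k/3})$ lower bound for element distinctness~\cite{Zhandry15}.

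For the first stage, given an SPRP adversary $\advA$ making $q$ queries, I would construct a decision $\ld$ solver $\advB$ that, on input lists $L_0, L_1 : [2^{k-1}] \to [2^n]$ (and with a pivot pair $(x^*,y^*)$ chosen either ahead of time or via one construction-oracle query), simulates $\advA$'s view by programming the ideal cipher so that $\ic_K(x^*)$ is determined from $L_0(K)$ and $\ic_K^{-1}(y^*)$ from $L_1(K)$ on half of the key space, and answers all other primitive queries with a consistently-maintained random permutation. A collision $L_0(K_1)=L_1(K_2)$ corresponds exactly to the existence of a valid key pair $(K_1,K_2)$ with $\DE[\ic](K_1\|K_2,x^*)=y^*$, so $L\in\mathcal{S}_1$ corresponds to the real SPRP world and $L\in\mathcal{S}_0$ to the ideal world, up to the $1/2^k$ probability that a random key pair accidentally maps $x^*$ to $y^*$ in the ideal world. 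The classical reduction carries over verbatim to the quantum setting because $\advB$ only needs oracle access to $L_0,L_1$, which can be answered in superposition.

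For the second stage, the chain proceeds as $\ed \to \oed \to$ (search $\ld$) $\to$ (decision $\ld$). First, Zhandry's compressed-oracle lower bound for search $\ed$ in fact establishes hardness of the promise version $\oed$ by direct inspection of the argument, since the hard instances in his proof contain only a single collision pair. Second, a random partition of an $\oed$ instance's domain into two halves places the lone collision pair across the two halves with probability $1/2$, yielding a search $\ld$ instance that automatically satisfies the injectivity promise; this transfers the $\Omega(N^{2/3})$ lower bound to search $\ld$. Third, and most delicate, reduce search $\ld$ to decision $\ld$ via binary search: at each level, use the decision algorithm to determine the quadrant of the index pair in which the collision lies, and recurse on sub-lists padded with fresh random elements so that each decision call receives a list of the original size (required for the decision bound to apply).

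The main obstacle and the source of the sixth root is the third step. The binary search invokes the decision algorithm $O(\log N) = O(k)$ times, and each call must succeed with high probability, so a decision algorithm with advantage $\delta$ must first be amplified to constant advantage by $O(1/\delta^2)$ majority-vote repetitions. A decision $\ld$ algorithm with advantage $\delta$ and $q$ queries therefore yields a search $\ld$ algorithm using $O(q \cdot k \log k / \delta^2)$ queries, which must be $\Omega(2^{2k/3})$; solving for $\delta$ gives $\delta = O(\sqrt{q \cdot k \log k}/2^{k/3}) = O(\sqrt[6]{(q\cdot k\lg k)^3/2^{2k}})$. Combining this with the first-stage reduction and the additive $1/2^k$ term from accidental key collisions yields the claimed bound.
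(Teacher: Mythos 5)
Your Stage~2 chain ($\oed$ hard $\Rightarrow$ $\ld\srch$ hard $\Rightarrow$ $\ld\dc$ hard, plus amplification) tracks the paper's Lemmas~1--4 essentially step by step, including the random-partition idea, the binary search with size-preserving padding, and the $O(1/\delta^2)$ amplification that produces the sixth root. That part is fine.

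Stage~1 is where there is a genuine gap. Pinning down the ideal cipher at a single pivot---$\ic_K(x^*)$ from $L_0(K)$ and $\ic_K^{-1}(y^*)$ from $L_1(K)$---is the right \emph{attack} intuition (meet-in-the-middle), but it does not give a reduction: the $\ld$ solver $\advB$ must present $\advA$ with a consistent SPRP game, and you never explain how $\advB$ answers construction queries other than on the one pivot. In the real world the construction oracle equals $\ic_{K_2^*}\circ\ic_{K_1^*}$ at \emph{every} input for the secret key pair $(K_1^*,K_2^*)$; $\advB$ does not know which pair of keys corresponds to the lists' shared element (finding it is exactly the problem it is trying to solve), so it cannot compute these answers from the cipher it has programmed. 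Nor can it answer construction queries with an independent random permutation $P$ satisfying only $P(x^*)=y^*$, since that $P$ will not be consistent with the planted ideal-cipher values on $\advA$'s other queries. The paper's reduction is specifically engineered to dodge this: the construction oracle is \emph{always} a freshly sampled permutation $\pi$, and the primitive oracle is built via an auxiliary cipher $F$ indexed by list values so that each $\ic_K$ is a full permutation (not just a single pinned point). For keys $K,K'$ that hit the same list element $v$, the construction forces $\ic_{K'}\circ\ic_K = \pi\circ F_v^{-1}\circ F_v=\pi$, so $\pi$ automatically ``is'' the double-encryption oracle for that key pair; when the lists are disjoint, all the $\ic_K$ are independent of $\pi$ and of each other, giving the ideal world. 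This way $\advB$ never needs to know where the collision sits.

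Your diagnosis of the additive $1/2^k$ term is also off. It does not come from a random key pair accidentally mapping $x^*$ to $y^*$ in the ideal world---indeed, with $2^{2k}$ key pairs and $2^n$ outputs, such an accidental pair almost surely exists, which is another symptom of the pivot-based simulation not being sound. In the paper it comes from modifying the real game so that $K_2$ is drawn from $\bits^k\setminus\{K_1\}$, i.e.\ conditioning on the two halves of the DE key being distinct so that the two keys can be mapped to different lists, which costs exactly $1/2^k$ in advantage.
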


As mentioned earlier, our proof works by first reducing the security of double encryption against quantum queries to the security of the list disjointness problem against quantum queries.
This is captured in the following theorem which we prove now.

\begin{theorem}\label{thm:de-reduction}
	Consider $\DE[\cdot]$ with the underlying blockcipher modeled by an ideal cipher drawn from $\Icm(k,n)$.
	Let $\advA$ be a quantum adversary which makes at most $q$ queries to the ideal cipher.
	Then for any $\rng\geq 2^k$ we can construct $\advA'$ making at most $q$ oracle queries such that
	\begin{displaymath}
		\Adv^{\sprp}_{\DE}(\advA)\leq\Adv^{\ld\dc}_{2^k,\rng}(\advA') + 1/2^k.
	\end{displaymath} 
\end{theorem}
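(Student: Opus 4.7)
The plan is to lift the classical reduction of Tessaro and Thiruvengadam~\cite{TCC:TesThi18} to the quantum query model. The guiding intuition is that an SPRP distinguisher against $\DE$ is implicitly answering the question ``does there exist a key pair $(K_1,K_2)$ consistent with my observations?'', and this is precisely the decision question posed by list disjointness on a naturally defined pair of lists.

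Concretely, the reduction $\advA'$ will work as follows. Given quantum oracle access to an instance $L$ with sublists $L_0, L_1$, $\advA'$ samples fresh anchors $M^*, C^* \in \{0,1\}^n$ and simulates $\advA$'s view of the SPRP game. The ideal cipher $\ic$ is defined so that $\ic_K(M^*) = L_0(K)$ for each $K$ and $\ic^{-1}_K(C^*) = L_1(K)$, with every remaining input-output pair of each $\ic_K$ filled in by an independent uniform completion to a permutation sampled without further queries to $L$. The construction oracle is simulated by a random permutation $P$ with $P(M^*)$ forced to equal $C^*$, similarly completed with fresh randomness. Each quantum query to $\ic$ or $\ic^{-1}$ is answered with at most one controlled query to $L$ (controlled on whether the input equals $M^*$ or $C^*$), so $\advA'$'s query count to $L$ is bounded by $\advA$'s budget $q$. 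At the end, $\advA'$ outputs the same bit as $\advA$.

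The correctness claim splits into two cases. If $L \in \mathcal{S}_1$, the unique common value $V = L_0(K_1^\ast) = L_1(K_2^\ast)$ forces $\DE[\ic](K_1^\ast \| K_2^\ast, M^\ast) = \ic_{K_2^\ast}(V) = C^\ast = P(M^\ast)$, so marginalizing over the independent completion makes the simulated view identically distributed to $\sG^{\sprp}_{\DE,1}$ under uniformly random keys $(K_1^\ast, K_2^\ast)$. If $L \in \mathcal{S}_0$, no key pair realizes the anchor $(M^\ast, C^\ast)$ under the simulated $\ic$, so the simulated view matches $\sG^{\sprp}_{\DE,0}$ except on the ``bad'' event that in the true ideal world a random permutation happens to coincide at $M^\ast$ with $\DE[\ic]$ instantiated under some key pair; a direct union-bound analysis over key pairs (or, more efficiently, over the values $L_0(K)$) gives the additive $1/2^k$ term.

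The main obstacle will be verifying that the simulated $\ic$ is faithfully distributed as a uniform ideal cipher conditioned on its values at the anchors, while answering each quantum query with a single query to $L$. The injectivity promise on $L_0$ and $L_1$ is indispensable here: it ensures the forced assignments $\ic_K(M^\ast) = L_0(K)$ (varying $K$) can be extended to valid permutations, and similarly for $C^\ast$. A related subtlety is that the random completion of $\ic$ must be independent of $L$ so that it contributes zero queries to $L$; this is achieved by a standard lazily-sampled quantumly-accessible random permutation (no query-complexity guarantee is needed on this piece, only statistical correctness). Once these points are in place, $\Adv^{\sprp}_{\DE}(\advA) \le \Adv^{\ld\dc}_{2^k, R}(\advA') + 1/2^k$ follows by subtracting and adding the ideal-world simulation probability.
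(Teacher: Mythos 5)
Your approach does not simulate the real world $b=1$ faithfully, and the step that fails is the claimed identity of distributions. You force consistency between the construction oracle $P$ and the simulated ideal cipher $\ic$ only at the single anchor point $M^*$; but in the actual game $\sG^{\sprp}_{\DE,1}$, the equality $\procEv(M)=\ic_{K_2}(\ic_{K_1}(M))$ holds for \emph{every} $M$. Since you complete $\ic$ and $P$ with independent randomness away from the anchors, a distinguisher that evaluates a forward-forward chain at a fresh input (e.g.\ queries $\procEv(M)$ for some $M\neq M^*$ and quantumly searches for a key pair whose $\ic$-chain realizes it) observes a different distribution of consistent chains in your simulation than in the real world, so ``marginalizing over the independent completion'' does not make the views identical. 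There are also type and dimension problems that already break the construction as written: $L_0(K)\in[R]$ cannot literally serve as an $n$-bit output $\ic_K(M^*)$ (and $R\geq 2^k$ can exceed $2^n$); each of $L_0,L_1$ has domain $[2^{k-1}]$, so neither can index all $2^k$ keys, leaving $\ic$ undefined on half the key space; and for a worst-case instance $L$ the overlap position is fixed, so the implied key pair $(K_1^*,K_2^*)$ is not uniform absent an explicit re-randomization step, which you never introduce.

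The paper's reduction avoids all of this by embedding the construction permutation into the ideal cipher rather than embedding list values into one column of $\ic$. It samples a random bijection $\rho$ from the $2^k$ keys onto the $2^k$ index positions of $L_0$ and $L_1$, a random permutation $\pi$ (which answers $\procEv,\procInv$), and an auxiliary cipher $F\in\Icm(\lceil\lg R\rceil,n)$ keyed by list \emph{values}. A key $K$ with $\rho(K)=(0,j)$ is assigned $\ic_K=F_{L_0(j)}$; one with $\rho(K)=(1,j)$ is assigned $\ic_K=\pi\circ F^{-1}_{L_1(j)}$. If $L_0(j)=L_1(j')$, the corresponding keys $K=\rho^{-1}(0,j)$ and $K'=\rho^{-1}(1,j')$ satisfy $\ic_{K'}\circ\ic_K=\pi$ \emph{identically}, giving full consistency between the construction and primitive oracles, and $\rho$ simultaneously makes this ordered key pair uniform over distinct pairs. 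The injectivity promise on $L_0,L_1$ ensures all other $F$-keys are distinct, so the $\ic_K$ jointly form a genuine ideal cipher; the $1/2^k$ slack comes from a preliminary hybrid that conditions the real game on $K_1\neq K_2$ so that it matches the ``distinct ordered pair'' distribution the reduction produces.
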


We state and prove a bound on $\Adv^{\ld\dc}$ in \secref{sec:hard-ld}.
Our proof applies the same reduction technique as Tessaro and Thiruvengadam~\cite{TCC:TesThi18}, we are verifying that it works quantumly as well.

\begin{proof}
For $b\in\bits$, let $\sH_{b}$ be defined to be identical to $\sG^{\sprp}_{\DE,b}$ except that $K_2$ is chosen uniformly from $\bits^{k}\setminus \sett{K_1}$ rather than from $\bits^{k}$.
This has no effect when $b=0$ because the keys are not used, so $\mathsf{Pr}[\sG^{\sprp}_{\DE,0}(\advA)]=\Pr[\sH_0]$.
When $b=1$ there was only a $1/2^k$ chance that $K_2$ would have equalled $K_1$ so $\mathsf{Pr}[\sG^{\sprp}_{\DE,1}(\advA)]\leq\Pr[\sH_1]+1/2^k$.

Now we define a decision algorithm $\advA'$ for $\ld$ which uses its input lists to simulate a view for $\advA$.
When the lists are disjoint $\advA$'s view will perfectly match that of $\sH_0$ and when the lists have exactly one element in common $\advA$'s view will perfectly match that of $\sH_1$.
Hence we have $\Pr[\sH_1] = \min_{(L,L')\in\mathcal{L}^{\kappa,k'}_1} \Pr[\sG^{\ldis}_{L,L'}(\advA)]$ and $\Pr[\sH_0] = \max_{(L,L')\in\mathcal{L}^{\kappa,k'}_0} \Pr[\sG^{\ldis}_{L,L'}(\advA)]$, so $\Pr[\sH_1] -\Pr[\sH_0] =  \Adv^{\ldis}_{2^k,k'}(\advA')$ which gives the claimed bound.

\begin{figure}[t]
\ifnum\eprintversion=0
        \renewcommand{\lwid}{0.28}
        \renewcommand{\mwid}{0.31}
        \renewcommand{\rwid}{0.32}
	\else
        \renewcommand{\lwid}{0.21}
        \renewcommand{\mwid}{0.23}
        \renewcommand{\rwid}{0.24}
	\fi
\threeColsNoDivide{\lwid}{\mwid}{\rwid}
{
\underline{Adversary $\advA'[L]$}\\
$\rho\getsr\Icm(0,k)$\\
$\pi\getsr\Icm(0,n)$\\
$F\getsr\Icm(\lceil\lg R\rceil,n)$\\
$b'\getsr\advA[{\pi,\pi^{-1},\procIc,\procIcInv}]$\\
Return $b'$\smallskip
}
{
\underline{$\procIc(K,X,Y : \rho, \pi, F)$}\\
$(i,j)\gets\rho(K)$\\
If $i=0$ then\\
\ind $Y \gets Y \xor F_{L_0(j)}(X)$\\
Else\\
\ind $Y \gets Y \xor \pi(F^{-1}_{L_1(j)}(X))$\\
Return $(K,X,Y : \rho, \pi, F)$\smallskip
}{
\underline{$\procIcInv(K,X,Y : \rho, \pi, F)$}\\
$(i,j)\gets\rho(K)$\\
If $i=0$ then\\
\ind $Y \gets Y \xor F^{-1}_{L_0(j)}(X)$\\
Else\\
\ind $Y \gets Y \xor F_{L_1(j)}(\pi^{-1}(X))$\\
Return $(K,X,Y : \rho, \pi, F)$\smallskip
}
\vspace{-2ex}
\caption{Reduction adversary used in proof of \thref{thm:de-reduction}.}
\label{fig:de-games}
\hrulefill
\end{figure}

The adversary $\advA'$ is defined in \figref{fig:de-games}.  It
samples a permutation $\rho$ on $\bits^k$, a permutation $\pi$ on
$\bits^n$, and a cipher $F$.  The permutation $\rho$ is used to
provide a random map from the keys $K\in\bits^k$ to the elements of
the lists $L_0$ and $L_1$.  We will interpret $\rho(K)$ as a tuple
$(i,j)$ with $i\in\{0,1\}$ and $j\in[2^k/2]$.  Then $K$ gets mapped to
$L_i(j)$.  Therefore either none of the keys map to the same element
or a single pair of them maps to the same element.

Adversary $\advA$'s queries to $\procEv$ and $\procInv$ are answered using $\pi$ and $\pi^{-1}$.
Its queries to the ideal cipher are more complicated and are handled by the oracles $\procIc$ and $\procIcInv$.
We can verify that these oracles define permutations on bitstring inputs, so $\advA'$ is a well defined quantum adversary.
Consider a key $K$ and interpret $\rho(K)$ as a tuple $(i,j)$ as described above.
If $i=0$, then ideal cipher queries for it are answered as if $\ic_K(\cdot)=F_{L_0(j)}(\cdot)$.
If $i=1$, then ideal cipher queries for it are answered as if $\ic_K(\cdot)=\pi(F^{-1}_{L_1(j)}(\cdot))$.\footnote{When using output of $L$ as keys for $F$ we are identifying the elements of $[\rng]$ with elements of $\bits^{\lceil\lg\rng\rceil}$ in the standard manner.}
If the list element $K$ is not mapped to by any other keys, then the indexing into $F$ ensures that $\ic_K(\cdot)$ is independent of $\pi$ and $\ic_{K'}$ for all other $K'$.
If $K$ and $K'$ map to the same list element (and $i=0$ for $K$), then $\ic_K(\cdot)$ and $\ic_K'(\cdot)$ are random permutations conditioned on $\ic_K'(\ic_K(\cdot))=\pi(\cdot)$ and independent of all other $\ic_{K''}(\cdot)$.
	
In $\sH_0$, the permutation of $\procEv$ and $\procInv$ is independent of each $\ic_K(\cdot)$ which are themselves independent of each other.
So this perfectly matches the view presented to $\advA$ by $\advA'$ when the lists are disjoint.
In $\sH_1$, each $\ic_K(\cdot)$ is pairwise independent and the permutation of $\procEv$ and $\procInv$ is defined to equal $\EF_{K_2}(\EF_{K_1}(\cdot))$.
This perfectly matches the view presented to $\advA$ by $\advA'$ when the lists have one element in common because we can think of it as just having changed the order in which the permutations $\EF_{K_2}(\EF_{K_1}(\cdot))$, $\EF_{K_1}(\cdot)$, and $\EF_{K_2}(\cdot)$ were sampled.\qed
\end{proof}

\subsection{The Hardness of List Disjointness}\label{sec:hard-ld}
If $\advA$ is an algorithm making at most $q$ classical oracle queries,
then it is not hard to prove that $\Adv^{\ld}_{\dom,\rng}(\advA)\leq q/\dom$.
If, instead, $\advA$ makes as most $q$ quantum oracle queries, the
correct bound is less straightforward.
In this section, we will prove the following result.
\begin{theorem}\label{thm:ld-hard}
	If $\advA$ is a quantum algorithm making at most $q$ queries to its oracle and $\dom\geq 32$ is a power of 2, then
	\begin{displaymath}
		\Adv^{\ld}_{\dom,3\dom^2}(\advA)
		\leq
		11\sqrt[6]{(q\cdot \lg \dom\cdot \lg\lg \dom)^3/\dom^2}.
	\end{displaymath}
\end{theorem}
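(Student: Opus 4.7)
The plan follows the chain of reductions sketched in the introduction. I would start from the known quantum query lower bound for search-element-distinctness of Aaronson--Shi and Zhandry: any $q$-query quantum algorithm locates a collision in a uniformly random function $L:[\dom]\to[\rng]$ drawn from $\mathcal{S}'_1$ (exactly one collision pair) with probability $O(q^3/\dom^2)$. Since this hard distribution is supported on $\mathcal{S}'_0 \cup \mathcal{S}'_1$, the same bound gives a lower bound for search-$\oed$. Next I would reduce search-$\oed$ to search-$\ld$ by a random split: given an $\oed$ instance $L$ on $[\dom]$, choose a uniformly random partition of the domain into halves and define $L_0, L_1$ accordingly. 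On an $\mathcal{S}'_1$ instance, the unique collision pair straddles the partition with probability $\tfrac12$, in which case $(L_0,L_1)\in\mathcal{S}_1$ is a valid search-$\ld$ instance (both halves are injective and share one element). A search-$\ld$ solver, followed by classical verification, then yields a search-$\oed$ solver with only a constant-factor loss in success probability and no query overhead.

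The main work is converting a decision-$\ld$ algorithm $\advA$ with advantage $\delta = \Adv^{\ld\dc}_{\dom,3\dom^2}(\advA)$ into a search-$\ld$ algorithm by binary search. Given $(L_0,L_1)\in\mathcal{S}_1$, I would halve the domain of $L_0$ and use decision-$\ld$ to detect which half still collides with $L_1$, recursing for $O(\lg \dom)$ steps, and then repeating symmetrically on $L_1$. Since the decision bound is stated for a \emph{fixed} domain of size $\dom$, at each recursive level I would pad the shortened list back up to length $\dom/2$ with fresh uniform samples from $[\rng]$. Because $\rng = 3\dom^2$, a union bound shows that with probability $\geq 1 - O(1/\dom)$ the pads introduce no spurious collisions, so the instance remains within the $\ld$ promise at every level. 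Standard majority amplification of $\advA$ drives the error of each decision call to at most $1/(4\lg \dom)$ at a multiplicative cost of $O(\lg\lg \dom/\delta^2)$ repetitions, and a union bound over the $O(\lg \dom)$ recursive calls keeps the overall search-$\ld$ success probability bounded away from zero.

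Combining, the derived search-$\oed$ algorithm uses $Q = O\bigl(q \lg \dom \lg\lg\dom/\delta^2\bigr)$ quantum queries and succeeds with constant probability. Plugging into the $O(Q^3/\dom^2)$ search-$\oed$ upper bound and solving for $\delta$ yields
\[
\delta^{6} \leq O\bigl((q \lg \dom \lg\lg\dom)^{3}/\dom^{2}\bigr),
\]
which, after tracking the hidden constants, is exactly the claimed sixth-root bound with leading constant $11$. The sixth root is thus an immediate consequence of composing the cube in the $\oed$ lower bound with the square in the amplification cost.

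The step I expect to be delicate is the binary-search reduction. I need to verify that the ``shrink-and-pad'' operation produces a genuine size-$\dom$, injective, at-most-one-collision $\ld$ instance whose oracle can be simulated at unit cost per $\advA$ query; that the padding randomness is independent across binary-search levels (or, more carefully, that it can be reused without breaking the promise); and that the composition of amplified per-step errors across $O(\lg \dom)$ levels and two symmetric searches really does yield a constant-probability search-$\ld$ solver. The arithmetic of the amplification overhead is what drives the exact exponents and the factor $\lg \dom \cdot \lg\lg \dom$ appearing inside the sixth root.
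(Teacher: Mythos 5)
Your proposal follows essentially the same four-lemma reduction chain the paper uses: (1) hardness of search-$\oed$ via Zhandry's random-function-vs-random-injection distinguishing bound, (2) a random domain split reducing $\oed\srch$ to $\ld\srch$, (3) a binary-search-with-padding reduction from $\ld\srch$ to $\ld\dc$, and (4) Hoeffding amplification from $\ld\dc$ to $\ld$, with the sixth root coming from composing the cube in the element-distinctness bound with the square in the amplification cost. The only noticeable stylistic departure is in step (3): you propose two sequential one-dimensional binary searches (first shrink $L_0$ with $L_1$ held fixed, then vice versa), while the paper runs a single simultaneous two-dimensional search that halves both lists each round and tests three of the four sublist pairs; both yield the same $\Theta(\lg\dom)$ decision-oracle calls and identical asymptotics, so this is an equivalent variant rather than a different approach.
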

We restrict attention to the case that $\dom$ is a power of 2 only for notational simplicity in the proof.
Essentially the same bound for more general $\dom$ follows from the same techniques.

Ambanis's $\bigoh(N^{2/3})$ query algorithm for element distinctness~\cite{Amb07} can be used to solve list disjointness and hence shows this is tight (up to logarithmic factors) for attackers achieving constant advantage.
The sixth root degrades the quality of the bound for lower parameter regimes.
An interesting question we leave open is whether this could be proven without the sixth root or the logarithmic factors.

\heading{Proof sketch.}  The starting point for our reduction is that
$\Omega(N^{2/3})$ lower bounds are known both the search and decision
versions of $\ed$~\cite{AS04,Zhandry15}.\footnote{In the proof we
  actually work with the advantage upper bounds, rather than the
  corresponding query lower bounds.} 
By slightly modifying Zhandry's~\cite{Zhandry15} technique for proving this, 
we instead get a bound on the hardness of $\oed\srch$.
Next, a simple reduction (split the list in half at random) shows that $\ld\srch$ is as hard as $\oed\srch$.

Then a ``binary search'' style reduction shows that $\ld\dc$ is as hard as $\ld\srch$.
In the reduction, the $\ld\srch$ algorithm repeatedly splits its lists in half and uses the $\ld\dc$ algorithm to determine which pair of lists contains the non-disjoint entries.
However, we need our reduction to work by running the $\ld\dc$ algorithm on a particular fixed size of list (the particular size we showed $\ld\dc$ is hard for) rather than running it on numerous shrinking sizes.
We achieve this by padding the lists with random elements.
The choice of $\rng=3\dom^2$ was made so that with good probability these random elements do not overlap with the actual list.
This padding adds the $\lg D$ term to our bound. 

Finally a generic technique allows us to relate the hardness of $\ld$ and $\ld\dc$.
Given an algorithm with high $\ld$ advantage we can run it multiple times to get a precise estimate of how frequently it is outputting $1$ and use that to determine what we want to output.
This last step is the primary cause of the sixth root in our bound; it required running the $\ld$ algorithm on the order of $1/\delta^2$ times to get a precise enough estimate, where $\delta$ is the advantage of the $\ld$ algorithm.
This squaring of $\delta$ in the query complexity of our $\ld\dc$ algorithm (together with the fact that the query complexity is cubed in our $\oed\srch$ bound) ultimately causes the sixth root.

\ifnum\eprintversion=0
	\ifsubmission
		Our formalization of this proof is given in the full version (it is current in \apref{ap:ld-proof-lems}).
	\else
		Our formalization of this proof is given in the full version~\cite{fullversion}.
	\fi
	In particular, the proof primarily consists of four lemmas which give quantitative statements  capturing the claims:
	\begin{enumerate}
		\item $\oed\srch$ is hard.
		\item If $\oed\srch$ is hard, then $\ld\srch$ is hard.
		\item If $\ld\srch$ is hard, then $\ld\dc$ is hard.
		\item If $\ld\dc$ is hard then $\ld$ is hard.
	\end{enumerate}
	The final theorem follows by combining the quantitative claims.
\else
	\heading{Constituent lemmas.}
	In the rest of the section we will state and prove lemmas corresponding to each of the step of the proof described above.
	In \secref{app:ld-proof}
	we apply them one at a time to obtain the specific
	bound claimed by \thref{thm:ld-hard}.
	
	\begin{lemma}[$\oed\srch$ is hard]\label{lem:oeds-hard}
	  If $\advA_{\oed\srch}$ is a quantum algorithm for $\oed\srch$ making at most $q$
	  queries to its oracle and $\dom\geq 32$, then
	  $\Adv^{\ed\srch}_{\dom,3\dom^2}(\advA_{\oed\srch})\leq 9 \cdot (q+2)^3/\dom^2$.
	\end{lemma}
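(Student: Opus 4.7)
The plan is to upper-bound the worst-case success probability by the average under a suitable distribution on $\ed$-instances and then apply Zhandry's collision-resistance bound for quantumly queried random functions~\cite{Zhandry15}. Since $\Adv^{\ed\srch}_{D,3D^2}(\advA_{\oed\srch})$ is a minimum over $\lang^{\ed}(D,3D^2)$, it is at most the expected success probability under any distribution supported on that language. I would take the distribution obtained by sampling $L : [D] \to [3D^2]$ uniformly and conditioning on $L$ having at least one collision. Because the event ``$\advA[L]$ outputs a valid collision of $L$'' implies $L \in \lang^{\ed}$, the conditional expectation collapses to
\[
\mathbb{E}_L\!\left[\Pr[\advA[L]\text{ outputs a collision}] \mid L \in \lang^{\ed}\right]
= \frac{\Pr_L[\advA[L]\text{ outputs a collision}]}{\Pr_L[L \in \lang^{\ed}]},
\]
with $L$ uniform on all functions $[D]\to[3D^2]$ on the right.

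For the numerator I would invoke Zhandry's theorem: for any $q$-query quantum algorithm and uniformly random $H:[D]\to[M]$, the probability of outputting a collision of $H$ is at most $(q+2)^3/M$; instantiating $M=3D^2$ yields $(q+2)^3/(3D^2)$. For the denominator, I would compute $\Pr_L[L \in \lang^{\ed}] = 1 - \prod_{i=0}^{D-1}(1 - i/(3D^2))$; applying $\prod_i(1 - x_i) \leq \exp(-\sum_i x_i)$ to the complement and noting that $\sum_{i=0}^{D-1} i/(3D^2) = (D-1)/(6D) \geq 31/192$ for $D \geq 32$ shows $\Pr_L[L \in \lang^{\ed}] \geq 1 - e^{-31/192} > 1/27$. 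Combining the two estimates yields $\Adv^{\ed\srch}_{D,3D^2}(\advA_{\oed\srch}) \leq 27 \cdot (q+2)^3/(3D^2) = 9(q+2)^3/D^2$, as claimed.

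The main obstacle is pinning down the exact constant in Zhandry's bound: the statement is often presented in the form $O(q^3/M)$, but the explicit coefficient $(q+2)^3$ must be read off by following the polynomial-method argument (or the equivalent compressed-oracle analysis), with the $+2$ emerging from the auxiliary queries used to reduce collision-finding to a structural property of the oracle. The remaining probability estimate for the denominator is elementary, and the threshold $D \geq 32$ in the hypothesis is calibrated precisely so that the explicit constant exceeds $1/27$.
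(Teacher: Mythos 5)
Your approach is in essence the paper's approach: both reduce to Zhandry's distinguishing bound between a random function and a random injection, and both turn a lower bound on the probability that a uniformly random $L:[D]\to[3D^2]$ falls in the target language into an upper bound on the minimum success probability. The paper implements this by explicitly building a distinguisher $\advD$ that runs $\advA_{\oed\srch}$, verifies the returned pair with two extra queries, and accepts on a collision (always rejecting injections), whereas you invoke a ``collision-finding'' corollary of Zhandry's result directly; these differ only cosmetically, and your phrasing about ``auxiliary queries'' for the $+2$ is in fact exactly the distinguisher construction in disguise.

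There are, however, two genuine gaps. First, you condition on the event that $L$ has \emph{at least} one collision, which bounds $\min_{L\in\lang^{\ed}}\Pr[\advA[L]\ \text{succeeds}]$. But the quantity that the chain of lemmas actually consumes (see \lemref{lem:lds-hard} and the proof of \thref{thm:ld-hard}) is $\Adv^{\oed\srch}$, the minimum over instances with \emph{exactly} one collision. Since $\lang^{\oed}\subset\lang^{\ed}$, the minimum over the smaller set can only be larger, so your bound on $\min_{\lang^{\ed}}$ does not transfer. The paper resolves this by lower-bounding $\Pr[\mathsf{1coll}]$, the probability of exactly one collision, and using that as the conditioning event; you would need to do the same, replacing your $\Pr[L\in\lang^{\ed}]\ge 1-e^{-(D-1)/(6D)}$ estimate with a slightly more involved three-factor product as in the paper. (The lemma statement's superscript $\ed\srch$ is evidently a slip for $\oed\srch$; the proof and all downstream uses treat it as the latter.)

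Second, the constant you attribute to Zhandry is too favorable. The cited result~\cite{Zhandry15} gives a distinguishing advantage at most $(\pi^2/3)\,Q^3/R$, not $Q^3/R$, so your numerator should read $(\pi^2/3)(q+2)^3/(3D^2)\approx 3.29\,(q+2)^3/(3D^2)$. As stated, your calibration ``$\Pr[L\in\lang^{\ed}]>1/27$ exactly offsets $27\cdot\frac{1}{3}=9$'' does not survive the extra factor of $\pi^2/3$. The numbers can be rescued — your own estimate actually gives $\Pr[L\in\lang^{\ed}]\ge 1-e^{-31/192}\approx 0.149$, which is far better than $1/27$ and large enough to absorb the $\pi^2/3$ — but the claim as written relies on a coefficient that Zhandry does not prove. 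Once you (a) switch the conditioning event to ``exactly one collision'' with the paper's three-factor lower bound, and (b) carry the $\pi^2/3$ honestly, the arithmetic lands on the same final constant.
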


	\begin{proof}
	  In~\cite{Zhandry15}, Zhandry shows that no $Q$-query algorithm can
	  distinguish between a random function and a random injective
	  function with domain $[D]$ and codomain $[R]$ with advantage better
	  than $(\pi^2/3)Q^3/R$, as long as $ D \le R$. 
	  We could build a distinguisher $\advD$ that on input $L:[D]\to[R]$ runs $(x,y)\gets\advA[L]$.
	  If $(x,y)$ is a collision (i.e., $x\neq y$ and $L(x)=L(x)$), then $\advD$ outputs 1.
	  It checks this by making two additional $L$ queries, so $Q=q+2$.
	  Otherwise it outputs zero.
	  Clearly, $(x,y)$ cannot be a collision when $L$ is an injection so it will always output zero.
	  Hence, if $\mathsf{1coll}$ denotes the event that $L$ contains exactly one collision we obtain a bound of
	  \begin{equation}\label{eq:one-coll}
	  	\Pr[\mathsf{1coll}]\cdot \Adv^{\ed\srch}_{\dom,3\dom^2}(\advA_{\oed\srch})
	  	\leq
	  	(\pi^2/3)(q+2)^3/R.
	  \end{equation}
	  
	  It remains to lower bound $\Pr[\mathsf{1coll}]$.
	  Note there are $\binom{D}{2}$ possible pairs of inputs that could be collisions.
	  Each pair has a $1/R$ chance of colliding.
	  By a union bound, the probability any other input has the same output as the collision is at most $(D-2)/R$, so there is at least a $1-(D-2)/R$ probability this does not occur.
	  Given the above there are $(D-2)$ inputs sampled from $R-1$ possible values, so by a union bound the probability none of them collide is at least $1-\binom{D-2}{2}/(R-1)$.
	  This gives
	  \begin{displaymath}
	  	\Pr[\mathsf{1coll}]\geq\frac{D(D-1)}{2R}\cdot\left(1-\frac{D-2}{R}\right)\cdot\left(1-\frac{(D-2)(D-3)}{2(R-1)}\right).
	  \end{displaymath} 
	  Now setting $R=3D^2$ and applying simple bounds (e.g. $D-2<D-1<D$ and $(D-3)/(R-1)<(D-2)/R$) gives,
	  \begin{displaymath}
	  	\Pr[\mathsf{1coll}]
	  	\geq\frac{(1-1/D)}{6}\cdot\left(1-\frac{1}{3D}\right)\cdot\left(1-\frac{1}{6}\right)
	  \end{displaymath}
	  Plugging this lower bound into equation \ref{eq:one-coll}, re-arranging, applying the bound $D\geq 32$, and rounding up to the nearest whole number gives the claimed bound $\Adv^{\ed\srch}_{\dom,3\dom^2}(\advA_{\oed\srch})
	  	\leq 9 (q+2)^3/D^3$.\qed
	\end{proof}

	\begin{lemma}[$\oed\srch$ hard $\Rightarrow$ $\ld\srch$ hard]\label{lem:lds-hard}
		Let $\dom$ be even.
		If $\advA_{\ld\srch}$ is a quantum algorithm for $\ld\srch$ making at most $q$ queries to its oracle,  then there is an algorithm $\advA_{\oed\srch}$ (described in the proof) such that $\Adv^{\ld\srch}_{\dom,\rng}(\advA_{\ld\srch})\leq 2\Adv^{\oed\srch}_{\dom,\rng}(\advA_{\oed\srch})$.
		Algorithm $\advA_{\oed\srch}$ makes at most $q$ queries to its oracle.
	\end{lemma}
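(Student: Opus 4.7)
The plan is to reduce $\oed\srch$ to $\ld\srch$ via a random split of the domain. Given an $\oed\srch$ instance $L\colon[\dom]\to[\rng]$ (which, by promise, has either zero or exactly one colliding pair), I sample a uniformly random permutation $\pi$ of $[\dom]$ and interpret $L\circ\pi$ as two half-lists $L'_0,L'_1$ of length $\dom/2$, where $L'_0(i)=L(\pi(i))$ and $L'_1(i)=L(\pi(i+\dom/2))$. The reduction $\advA_{\oed\srch}$ simulates $\advA_{\ld\srch}$ on the implicit oracle for $L\circ\pi$, obtains an output $(x,y)$, and returns $(\pi(x),\pi(y+\dom/2))$ as its proposed collision in $L$. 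Each $\advA_{\ld\srch}$ oracle call is simulated by applying the unitary $U_\pi$ (and the obvious remapping between the two half-oracles), making one call to $L$, and uncomputing $U_\pi$; hence $\advA_{\oed\srch}$ uses the same $q$ oracle queries.

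For the analysis I fix an arbitrary instance $L$ in the language of $\oed\srch$, so $L$ has a unique colliding pair $\{a,b\}$ with $L(a)=L(b)$. Let $E$ be the event that $\pi$ sends $a$ and $b$ to opposite halves of $[\dom]$. Conditioned on $E$, both half-lists $L'_0$ and $L'_1$ are injective (since $\{a,b\}$ was the only collision in $L$ and has now been split) and they share exactly the single value $L(a)=L(b)$; therefore $L\circ\pi$ lies in the language of $\ld\srch$. By definition of the search advantage, $\advA_{\ld\srch}$ then returns a witness $(x,y)$ with $L'_0(x)=L'_1(y)$ with probability at least $\Adv^{\ld\srch}_{\dom,\rng}(\advA_{\ld\srch})$, and any such witness translates back to a collision pair $(\pi(x),\pi(y+\dom/2))$ in $L$. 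A short counting argument gives $\Pr[E]=\dom/(2(\dom-1))\geq 1/2$ for even $\dom$. Chaining these,
\[
\Adv^{\oed\srch}_{\dom,\rng}(\advA_{\oed\srch})\ \geq\ \Pr[E]\cdot\Adv^{\ld\srch}_{\dom,\rng}(\advA_{\ld\srch})\ \geq\ \tfrac{1}{2}\Adv^{\ld\srch}_{\dom,\rng}(\advA_{\ld\srch}),
\]
which is the claimed inequality.

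There is no genuine obstacle here; the one point worth verifying carefully is that on the event $E$ both parts of the $\ld$ promise really hold (each half injective and at most one claw). Injectivity of the halves uses that the unique collision of $L$ has been split; the ``at most one claw'' part uses that any other value shared between the two halves would yield a second colliding pair in $L$, contradicting $L\in\mathcal{S}'_1$. The reduction is classical in how it handles randomness, and quantumly only uses the standard reversible implementation of $U_\pi$ around each oracle call, so the query count is preserved exactly.
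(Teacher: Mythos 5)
Your proof is correct and uses the same reduction as the paper: pick a uniformly random permutation $\pi$ of $[\dom]$, run $\advA_{\ld\srch}$ on the randomly split list $L\circ\pi$, and observe that the unique collision of $L$ lands in opposite halves with probability at least $1/2$ (which you compute exactly as $\dom/(2(\dom-1))$), in which case the $\ld$ promise holds and a correct witness translates back to a collision in $L$. Your version is actually a bit more careful than the paper's in spelling out that both halves are injective and share only one value under the event $E$, and your translation $(\pi(x),\pi(y+\dom/2))$ is the consistent one for an oracle $L\circ\pi$ (the paper's text writes $\pi^{-1}$, a small notational slip).
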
 
	
	\begin{proof}
		On input a list $L:[\dom]\to[\rng]$, the algorithm $\advA_{\oed\srch}$ will pick a random permutation $\pi:[\dom]\to[\dom]$.
		It runs $\advA_{\ld\srch}[L\circ\pi]$ and then, on receiving output $(x,y)\in[\dom/2]^2$, returns $(\pi^{-1}(x),\pi^{-1}(y+\dom/2))$.
		The permutation $\pi$ serves the role of splitting $L$ into two sublists for $\advA_{\ld\srch}$ at random.
		As long as the original collision of $L$ doesn't end up being put into the same sublist (which has probability less than $1/2$), $\advA_{\ld\srch}$ will be run on a valid $\ld\srch$ instance and $\advA_{\oed\srch}$ will succeed whenever $\advA_{\ld\srch}$ does.
		The claim follows.\qed
	\end{proof}

		\begin{figure}[t]
		\ifnum\eprintversion=0
	        \renewcommand{\lwid}{0.6}
	        \renewcommand{\mwid}{}
	        \renewcommand{\rwid}{}
		\else
	        \renewcommand{\lwid}{0.42}
	        \renewcommand{\mwid}{}
	        \renewcommand{\rwid}{}
		\fi
		\oneCol{\lwid}
		{
		\underline{Algorithm $\advA_{\ld\srch}[L_0,L_1]$}\\
		$L'\getsr\Inj(D,R)$; $L'_0\concat L'_1\gets L$\\
		$i\gets 0$; $L_0^0\gets L_0$; $L_1^0\gets L_1$\\
		Repeat\\
		\ind $i\gets i+1$\\
		\ind $L_{0,l}\concat L_{0,r}\gets L_0^{i-1}$\\
		\ind $L_{1,l}\concat L_{1,r}\gets L_1^{i-1}$\\
		\ind $(j^\ast,k^\ast)\gets(r,r)$\\
		\ind For $(j,k)\in\sett{(l,l),(l,r),(r,l)}$ do\\
		\ind\ind\commentt{$f\square g(x) = f(x)$ for $x\in\Dom{f}$ else $g(x)$}
		\ind \ind $b\getsr\advA_{\ld\dc}[L_{0,j}\square L'_0,L_{1,k}\square L'_1]$\\
		\ind \ind If $b=1$ then $(j^\ast,k^\ast)\gets(j,k)$\\
		\ind $L^i_0\gets L_{0,j^\ast}$; $L^i_1\gets L_{1,k^\ast}$\\
		Until $|\Dom{L^i_0}|=|\Dom{L^i_1}|=1$\\
		Pick $(x,y)\in \Dom{L^i_0}\times\Dom{L^i_1}$\\
		Return $(x,y)$
		\smallskip
		}
		\vspace{-2ex}
		\caption{Reduction algorithm $\advA_{\ld\srch}$ for \lemref{lem:ldd-hard}. For notational convenience we write the two lists as separate input, rather than combined into a single list.}
		\label{fig:ldd-hard}
		\hrulefill
		\end{figure}

	\begin{lemma}[$\ld\srch$ hard $\Rightarrow$ $\ld\dc$ hard]\label{lem:ldd-hard}
		Let $\dom$ be a power of two.
		If $\advA_{\ld}$ is a quantum algorithm for $\ld$ making at most $q$ queries to its oracle, then there is an $\ld\srch$ algorithm $\advA_{\ld\srch}$ (described in the proof) such that 
		\begin{align*}
			\Adv^{\ld\srch}_{\dom,\rng}(\advA_{{\ld\srch}}) 
			\geq
			1 - D^2/R - 1.5(\lg D-2)(1-\Adv^{\ld\dc}_{\dom,\rng}(\advA_{\ld\dc})).
		\end{align*}
		Algorithm $\advA_{{\ld\srch}}$ makes at most $3q\lg D$ queries to its oracle.
	\end{lemma}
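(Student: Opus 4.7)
The plan is to analyze the binary-search reduction specified in the algorithm above. It maintains sublists $L_0^i, L_1^i$ that contain the unique claw $(x^*, y^*)$ of the input, halves them at each iteration, and uses three calls to $\advA_{\ld\dc}$ to identify which of the four possible sub-pairs preserves the claw (the default $(r, r)$ covers the case that all three explicit tests return $0$). The random injection $L'$ serves as padding to re-inflate each candidate sub-pair to a full-size $\ld$ instance over $[D]$, which is the size at which $\advA_{\ld\dc}$ is assumed to have advantage $\Adv^{\ld\dc}_{D, R}$.

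First I would bound the probability that the padding is ``clean.'' Let $\mathsf{padOK}$ denote the event that the range of $L'$ is disjoint from the ranges of $L_0$ and $L_1$. Because $L' \getsr \Inj(D, R)$ outputs $D$ distinct values, and $L_0, L_1$ together cover at most $D$ values in $[R]$, a union bound yields $\Pr[\mathsf{padOK}] \geq 1 - D^2 / R$. Conditioned on $\mathsf{padOK}$, every subproblem $(L_{0, j} \square L'_0, L_{1, k} \square L'_1)$ queried to $\advA_{\ld\dc}$ is a valid $\ld$ instance: each side is injective (a union of two injections with disjoint ranges), and the two sides share exactly one element when $(j, k)$ is the sub-pair that covers $(x^*, y^*)$ and share no elements otherwise.

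Next I would analyze the binary search itself. The outer loop runs roughly $\lg D$ times (halving sublists of size $D/2$ down to size $1$), each time making three calls to $\advA_{\ld\dc}$. Call an iteration \emph{correct} if all three of its calls return the right $\ld$-decision; since $\Adv^{\ld\dc} = 1 - 2E$, each call errs with probability at most $(1 - \Adv^{\ld\dc})/2$, so a union bound gives per-iteration error at most $\tfrac{3}{2}(1 - \Adv^{\ld\dc})$, and a further union bound over iterations contributes the $\tfrac{3}{2}(\lg D - 2)(1 - \Adv^{\ld\dc})$ term in the claimed bound. Whenever every iteration is correct, exactly the sub-pair containing the claw is selected at each level---either one of the three tested pairs is the unique one returning $1$, or all three return $0$ and the default $(r, r)$ is the right choice---so the final output $(x, y)$ equals $(x^*, y^*)$. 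Combining with $\Pr[\mathsf{padOK}]$ produces the stated advantage bound, while $3$ calls per iteration times $q$ queries per call times at most $\lg D$ iterations yields the $3 q \lg D$ query bound.

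The main point requiring care is verifying that every padded instance genuinely lies in the promise set of $\ld$: that both padded lists are injective and that they share at most one element. Both reduce to disjointness of the range of $L'$ from the ranges of $L_0$ and $L_1$, which is exactly $\mathsf{padOK}$. Everything else is routine: two applications of the union bound and the observation that the default $(r, r)$ branch is selected exactly when needed.
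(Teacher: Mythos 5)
Your proposal is correct and follows essentially the same approach as the paper: the binary-search-with-padding reduction, the $D^2/R$ union bound on padding cleanliness, and the union bound over the $3(\lg D - 2)$ invocations of $\advA_{\ld\dc}$ each erring with probability at most $(1-\Adv^{\ld\dc})/2$. Your verification that each padded subproblem lies in the $\ld$ promise set (both lists injective, at most one common element) conditioned on $\mathsf{padOK}$ is exactly the point the paper also relies on, just stated a bit more explicitly.
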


	This theorem's bound is vacuous if $\Adv^{\ld\dc}_{\dom,\rng}(\advA_{\ld\dc})$ is too small.
	When applying the result we will have obtained $\advA_{\ld\dc}$ by amplifying the advantage of another adversary to ensure it is sufficiently large.

	\begin{proof}
		The algorithm $\advA_{\ld\srch}$ is given in \figref{fig:ldd-hard}.
		The intuition behind this algorithm is as follows.
		It wants to use the decision algorithm $\advA_{\ld\dc}$ to perform a binary search to find the overlap between $L_0$ and $L_0$.
		It runs for $\lg D/2 - 1$ rounds.
		In the $i$-th round, it splits the current left list $L^{i-1}_0$ into two sublists $L_{0,l}$, $L_{0,r}$ and the current right lists $L^{i-1}_1$ into two sublists $L_{1,l}$, $L_{1,r}$.\footnote{In code, $f\concat g \gets h$ for $h$ with domain $\Dom{h}=\sett{n,n+1,\dots,m}$ defines $f$ to be the restriction of $h$ to domain $\Dom{f}=\sett{n,n+1,\dots,\lfloor(n+m)/2\rfloor}$ and $g$ to be the restriction of $h$ to domain $\Dom{g}=\Dom{h}\setminus\Dom{f}$.}
		If $L^{i-1}_0$ and $L^{i-1}_1$ have an element in common, then one of the pairs of sublists $L_{0,j}$ and $L_{1,k}$ for $j,k\in\sett{l,r}$ must have an element in common.
		The decision algorithm $\advA_{\ld\dc}$ is run on different pairs to determine which contains the overlap.
		We recurse with chosen pair, until we are left with lists that have singleton domains at which point we presume the entries therein give the overlap. 
		
		Because we need to run the decision algorithm $\advA_{\ld\dc}$ on fixed size inputs we pad the sublists to be of a fixed size using lists $L'_0$ and $L'_1$ (the two halves of an injection $L'$ that we sampled locally).
		As long as the image of $L'$ does not overlap with the images of $L_0$ and $L_1$, this padding does not introduce any additional elements repetitions between or within the list input to $\advA_{\ld\dc}$.
		By a union bound, overlaps occurs with probability at most $D^2/R$.
	
		Conditioned on such overlaps not occurring, $\advA_{\ld\srch}$ will output the correct result if $\advA'_{\ld\dc}$ always answers correctly.
		To bound the probability of $\advA'_{\ld\dc}$ erring we can
	        note that it is run $3\cdot(\lg D-2)$ times and, each time,
	        has at most a
	        $(1-\Adv^{\ld\dc}_{\dom,\rng}(\advA'_{\ld\dc}))/2$ chance of error and
	        apply a union bound.
	        
		Put together we have
		\begin{align*}
			\Adv^{\ld\srch}_{\dom,\rng}(\advA_{{\ld\srch}}) 
			\geq
			1 - D^2/R - 1.5(\lg D-2)(1-\Adv^{\ld\dc}_{\dom,\rng}(\advA_{\ld\dc}).
		\end{align*}
		That $\advA_{{\ld\srch}}$ makes at most $3q(\lg D-2)$ oracle queries is clear.\qed
	\end{proof}

	\begin{lemma}[$\pp\dc$ hard $\Rightarrow$ $\pp$ hard]\label{lem:dc-to-cry}
		Let $\pp$ be any problem.
		Suppose $\advA_{\pp}$ is a quantum algorithm for $\pp$ making at most $q$ queries to its oracle, with $\Adv^{\pp}_{\dom,\rng}(\advA_{\pp})=\delta>0$.
		Then for any $t\in\N$ there is an algorithm $\advA_{\pp\dc}$ (described in the proof) such that $\Adv^{\pp\dc}_{\dom,\rng}(\advA_{\pp\dc})>1-2/2^t$.
		Algorithm $\advA_{\pp\dc}$ makes $q\cdot\lceil 4.5(t+1)\ln 2/\delta^2 \rceil$ queries to its oracle.
	\end{lemma}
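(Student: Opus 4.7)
The plan is to amplify $\advA_{\pp}$ by independent repetition in the standard Chernoff-style way: define $\advA_{\pp\dc}$ to run $\advA_{\pp}[L]$ independently $N := \lceil 4.5(t+1)\ln 2/\delta^2 \rceil$ times on its input $L$, compute the empirical fraction $\hat p$ of $1$-outputs, and accept iff $\hat p \geq \tau$. Here I would take $\tau := (P^1_{\dom,\rng}(\advA_{\pp}) + P^0_{\dom,\rng}(\advA_{\pp}))/2$ as a threshold that depends only on $\advA_{\pp}$ and so can be hard-coded into the description of $\advA_{\pp\dc}$. This is exactly the classical amplification argument used to reduce error in BPP-type algorithms, applied to the Bernoulli outputs produced by $\advA_{\pp}$ on a fixed input.

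For correctness I would apply Hoeffding's inequality to the $N$ independent Bernoulli samples. For any $L \in \lang$, the true acceptance probability $p_L$ satisfies $p_L - \tau \geq (P^1 - P^0)/2 \geq \delta/2 > \delta/3$, so $\Pr[\hat p < \tau] \leq \Pr[|\hat p - p_L| \geq \delta/3] \leq 2\exp(-2N\delta^2/9)$, and the choice of $N$ makes this strictly less than $1/2^t$. A symmetric calculation for $L \notin \lang$ gives $\Pr[\hat p \geq \tau] < 1/2^t$. Combining, the new algorithm satisfies $1 - P^1_{\dom,\rng}(\advA_{\pp\dc}) < 1/2^t$ and $P^0_{\dom,\rng}(\advA_{\pp\dc}) < 1/2^t$, hence $\Adv^{\pp\dc}_{\dom,\rng}(\advA_{\pp\dc}) = 1 - 2E_{\dom,\rng}(\advA_{\pp\dc}) > 1 - 2/2^t$, as required.

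The query bound is immediate, since each of the $N$ invocations costs at most $q$ queries to the oracle, giving the stated total. The quantum nature of $\advA_{\pp}$ plays essentially no role: the analysis treats each invocation as a black box producing a classical bit and uses only the standard classical Chernoff bound, so nothing quantum-specific is needed. The only conceptual subtlety — which I do not expect to be a real obstacle — is that the threshold $\tau$ depends on $P^0,P^1$, which are not a priori known; this is permissible because the algorithm $\advA_{\pp\dc}$ is defined in terms of the fixed, given algorithm $\advA_{\pp}$ and so may depend on its associated quantities. Working with the slightly loose precision $\delta/3$ rather than $\delta/2$ in the Hoeffding step is what produces the explicit constant $4.5$ in the sample count.
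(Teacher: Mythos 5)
Your proposal is correct and essentially identical to the paper's proof: same repetition count $\lceil 4.5(t+1)\ln 2/\delta^2\rceil$, same hard-coded threshold (your $\tau=(P^1+P^0)/2$ equals the paper's $p^0+\delta/2$), same Hoeffding application with slack $\delta/3$, and the same observation that the threshold may depend on the fixed algorithm $\advA_{\pp}$.
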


	\begin{proof}
	For compactness, let $p^1=P_{\dom,\rng}^1(\advA_{\pp})$ denote the minimum probability that $\advA_{\pp}$ outputs $1$ on instances in the language, $p^0=P_{\dom,\rng}^0(\advA_{\pp})$ denote the maximum probability that $\advA_{\pp}$ outputs $1$ on instances not in the language, and $\delta=\Adv^{\pp}_{\dom,\rng}(\advA_{\pp})=p^1-p^0$.
	We define $\advA_{\pp\dc}$ to be the algorithm that, on input $L$, runs $n$ independent copies of $\advA_{\pp}[L]$ (with $n = \lceil 4.5(t+1)*\ln 2/\delta^2 \rceil$) and calculates the average $p$ of all the values output by $\advA_{\pp}[L]$. (Think of this as an estimate of the probability $\advA_{\pp}$ outputs 1 on this input.) If $p < p^0 + \delta/2$, $\advA_{\pp\dc}$ outputs 0, otherwise it outputs 1.
	
	Let $X_i$ denote the output of the $i$-th execution of $\advA_{\pp\dc}$.
	An inequality of Hoeffding~\cite{hoeffding} bounds how far the average of independent random variables $0\leq X_i\leq 1$ can differ from the expectation by 
	\begin{align*}
		\Pr[\left|\sum_{i=1}^n X_i/n - \Exp{\sum_{i=1}^n X_i/n}\right| > \epsilon]<2e^{-2\epsilon^2n}
	\end{align*}
	for any $\epsilon>0$.
	Let $p'$ denote the expected value of $p$ (which is also the expected value of $X_i$). 
	If $L$ is in the language, then $p^1\leq p$.
	Otherwise $p^0\geq p$.
	In either case, $\advA_{\pp\dc}$ will output the correct answer if $p$ does not differ from $p'$ by more than $\delta/3$ (because this is strictly less than $\delta/2$).
	Applying the above inequality tells us that $\Pr[|p - p'|>\delta/3]<2e^{-2\delta^2n/9}\leq 2^{-t}$.
	(The value of $n$ was chosen to make this last inequality hold.)
	
	Hence $P_{\dom,\rng}^1(\advA_{\pp\dc})>1-2^{-t}$, $P_{\dom,\rng}^0(\advA_{\pp\dc})<2^{-t}$, and $\Adv^{\pp\dc}_{\dom,\rng}(\advA_{\pp\dc})>1-2/2^{t}$.
	The bound on $\advA_{\pp\dc}$'s number of queries is clear.\qed
	\end{proof}
\fi

\ifnum\eprintversion=0
\else
\ifnum\eprintversion=0
	\section{Proof of \thref{thm:ld-hard}}\label{app:ld-proof}
\else
	\subsection{Proof of \thref{thm:ld-hard}}\label{app:ld-proof}
\fi
Let $\advA_{\ld}$ be our given list disjointness adversary which makes $q$ oracle queries and has advantage $\delta=\Adv^{\ld}_{\dom,\rng}(\advA_{\ld})>0$.
If we apply \lemref{lem:dc-to-cry} with $t=\lg(12\lg D)$, then we get an adversary $\advA_{\ld\dc}$ which makes $q_{\ld\dc}=q\cdot\lceil 4.5(t+1)\ln 2/\delta^2 \rceil < (10q\lg\lg D)/\delta^2$ oracle queries. (We used here the assumption $D\geq 32$ to simplify constants.) This adversary has advantage
	\begin{align*}
		\Adv^{\ld\dc}_{\dom,3\dom^2}(\advA_{\ld\dc})>1-2/2^t=1-1/(6\lg(D)).
	\end{align*}
Next applying \lemref{lem:ldd-hard}, gives us $\advA_{{\ld\srch}}$ which makes fewer than $q_{\ld\srch}=30q(\lg D)(\lg\lg D)/\delta^2$ oracle queries and has advantage 
	\begin{align*}
		\Adv^{\ld\srch}_{\dom,\rng}(\advA_{{\ld\srch}}) 
		&\geq
		1 - D^2/R - 1.5(\lg D-2)(1-\Adv^{\ld\dc}_{\dom,\rng}(\advA_{\ld\dc}))\\
		&> 1 - 1/3 - 1.5(\lg D-2)(6\lg D)^{-1} > 1-1/3-1/4 = 5/12.
	\end{align*}
Lemmas \ref{lem:oeds-hard} and \ref{lem:lds-hard} together bound this advantage from the other direction.
In particular, we get that
	\begin{align*}
		5/12
		<
		\Adv^{\ld\srch}_{\dom,\rng}(\advA_{{\ld\srch}})
		\leq
		18 \cdot (q_{\ld\srch}+2)^3/\dom^2.
	\end{align*}
Using the assumption $D\geq 32$ we can bound $q_{\ld\srch}+2$ by $31q(\lg D)(\lg\lg D)/\delta^2$.
Plugging this in and solving for $\delta$ gives our claimed bound of
	\begin{align*}
		\delta < 11\sqrt[6]{(q\cdot \lg D\cdot \lg\lg D)^3/D^2}.	
	\end{align*}

\fi

\section*{Acknowledgements}
\label{sec:acks}

Joseph Jaeger and Stefano Tessaro were partially supported by NSF grants CNS-1930117 (CAREER),
CNS-1926324, CNS-2026774, a Sloan Research Fellowship, and a JP Morgan
Faculty Award. 
Joseph Jaeger's work done while at the University of Washington.
Fang Song thanks Robin Kothari for helpful discussion on the
element distinctness problem. Fang Song was supported by NSF grants
CCF-2041841, CCF-2042414, and CCF-2054758 (CAREER).


\appendix
\addcontentsline{toc}{section}{Appendices}
\ifnum\eprintversion=0
	\ifsubmission

\section{Probabilistic Analysis for Proof of \thref{thm:fx-kpa}}\label{app:na-proof-extra}
In this appendix we give a detailed probability analysis of Claim (i) from the proof of \thref{thm:fx-kpa}.

The view of $\advA$ when run by $\advA'$ in $\sG^{\dist}_{\distD,1}$ is determined by $f_1$, $f^{-1}_1$, and $z=(T,T^{-1})$. These are chosen such that $T[M_i]=f_1(K_1,M_i\xor K_2)\xor K_2$ for $i=1,\dots,q'$ and $T^{-1}[Y_i]=f^{-1}_1(K_1,Y_i\xor K_2)\xor K_2$ for $i=q'+1,\dots,q$. Consequently to ensure this matches the view from $\sG^{\sprp}_{\FF,1}$ we need to show that $\distD$ choses $(f_1,K_1,K_2)$ uniformly from $\Icm(k,n)\times\bits^k\times\bits^n$. It is clear that $K_1$ and $K_2$ are uniform. Furthermore, $f_1(K,\cdot)=f_0(K,\cdot)$ for $K\neq K_1$ so these are sampled correctly. Hence we can think of these values as fixed and argue that the distribution induced over $f_1(K_1,\cdot)$ is uniform over all permutations on $\bits^n$.

\newcommand{\ff}{\mathbf{f}}
Let $\ff\in\Icm(k,n)$, $N=2^n$, and $E$ denote the event that $f_1(K_1,\cdot)=\ff(\cdot)$.
Let $E_1$ denote the event that $T[M_i]=\ff(M_i\xor K_2)\xor K_2$ for $i=1,\dots,q$ after Step 1. 
The second for loop of Step 3 programs $f_1$ to satisfy $f_1(K_1,M_i\xor K_2)=T[M_i]\xor K_2$ so $E_1$ is a necessary condition for $E$.
Note that $E_1$ requires $Q$ values to have been sampled correctly in Step 1's for loops where $Q=|\sett{\ff(M_i\xor K_2)\xor K_2:1\leq i \leq q'}\cup \sett{\ff(Y_i\xor K_2)\xor K_2:q'\leq i \leq q}|$.
Hence,
$$\Pr[E]= \Pr[E|E_1]\cdot\frac{(N-Q)!}{N!}.$$
Let $E_2$ denote the event that Step 2 samples an $f_0$ which is consistent with $\ff$.
That is to say, that $f_0$ is chosen such that $\ff(\setI\cup\setI')=\setO\cup\setO'$ and $\ff(x)=f_0(K_1,x)$ for $x\not\in\setI\cup\setI'$.
The first for loop in Step 3 programs $f_1(K_1,x)=f_0(K_1,x)$ for $x\not\in\setI\cup\setI'$.
The second and third for loop in Step 3 programs $f_1(K_1,\setI\cup\setI')$ to have values in $\setO\cup\setO'$ so $E_2$ is a necessary condition for $E$.
Let $M(f_0)=|\setI\setminus\setI'|=|\setO\setminus \setO'|$ and let $E^m_2$ denote the event that $M(f_0)=m$.
$$\Pr[E|E_1]=\sum_m \Pr[E^m_2|E_1]\cdot\Pr[E_2|E_1,E^m_2]\cdot\Pr[E|E_1,E^m_2,E_2].$$
We can think of Step 2 as lazily sampling $f_0(K_1,\cdot)$ by first sampling $f_0(K_1,x)$ for $x\in\setI$, then sampling $f^{-1}_0(K_1,y)$ for $y\in\setO\setminus \setO'$, and then  sampling $f_0(K_1,x)$ for $x\not\in\setI\cup\setI'$.
The event $E_2$ requires that the $m$ values sampled in this second step are the elements of $\ff^{-1}(\setO'\setminus\setO)$ and that on the third step $\ff(x)$ is sampled for $f_0(K_1,x)$ for each $x\not\in\setI\cup\setI'$. Hence,
$$\Pr[E_2|E_1,E^m_2]=\frac{1}{\binom{N-Q}{m}}\cdot\frac{1}{(N-Q-m)!}.$$
For $E$ to occur (conditioned on $E_1$, $E^m_2$, and $E_2$) the third for loop in Step 3 must sample $m$ values correctly, so $\Pr[E|E_1,E^m_2,E_2]=1/m!$.
Putting everything together we have
\begin{align*}
	\Pr[E]=\frac{(N-Q)!}{N!}\sum_m \Pr[E^m_2|E_1]\cdot\frac{m!\cdot(N-Q-m)!}{(N-Q)!}\cdot\frac{1}{(N-Q-m)!}\cdot\frac{1}{m!}=\frac{1}{N!}.
\end{align*}
So $f_1$ is uniformly distributed, as desired.


\ifsubmission

\section{Lemmas for \thref{thm:ld-hard}}\label{ap:ld-proof-lems}
In this section we will state and prove lemmas corresponding to each of the step of the proof of \thref{thm:ld-hard}.
In \apref{app:ld-proof} we apply them one at a time to obtain the specific
bound claimed by \thref{thm:ld-hard}.

\begin{lemma}[$\oed\srch$ is hard]\label{lem:oeds-hard}
  If $\advA_{\oed\srch}$ is a quantum algorithm for $\oed\srch$ making at most $q$
  queries to its oracle and $\dom\geq 32$, then
  $\Adv^{\ed\srch}_{\dom,3\dom^2}(\advA_{\oed\srch})\leq 9 \cdot (q+2)^3/\dom^2$.
\end{lemma}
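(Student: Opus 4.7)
The plan is to reduce the $\oed\srch$ advantage to Zhandry's known bound that no $Q$-query quantum algorithm can distinguish a random function from $[D]\to[R]$ from a random injection with advantage better than $(\pi^2/3)Q^3/R$ when $D \le R$. Since $\oed$ carries a promise that there is at most one collision, but a random function will typically have many collisions, we will have to pay for the probability that a random function actually falls under the promise (i.e.\ has \emph{exactly} one colliding pair). The parameter choice $R = 3D^2$ is tuned so that this probability stays a positive constant.

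Concretely, first I would construct a distinguisher $\advD$ that, on oracle input $L:[D]\to[R]$, runs $\advA_{\oed\srch}[L]$ to obtain a candidate pair $(x,y)$, makes two extra classical queries to check whether $x \neq y$ and $L(x)=L(y)$, and outputs $1$ iff this check passes. On any injection $\advD$ outputs $0$ deterministically, so its distinguishing advantage equals the probability of outputting $1$ on a random function. Since $\advA_{\oed\srch}$'s success guarantee only kicks in on $L$ in the $\oed$ language, this probability is at least $\Pr[\mathsf{1coll}] \cdot \Adv^{\ed\srch}_{D,3D^2}(\advA_{\oed\srch})$, where $\mathsf{1coll}$ is the event that a uniformly random $L:[D]\to[R]$ has exactly one colliding pair. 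Zhandry's bound then yields
\begin{equation*}
  \Pr[\mathsf{1coll}]\cdot \Adv^{\ed\srch}_{D,3D^2}(\advA_{\oed\srch}) \;\leq\; \tfrac{\pi^2}{3}\,(q+2)^3/R.
\end{equation*}

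The main obstacle is then to lower bound $\Pr[\mathsf{1coll}]$ by a concrete constant independent of $D$. I would do this by a direct elementary counting argument: pick one of the $\binom{D}{2}$ candidate pairs to collide (probability $1/R$ each), demand that no other input maps into the collision value (union bound loss $\le (D-2)/R$), and demand that the remaining $D-2$ inputs produce no further collision among themselves (union bound loss $\le \binom{D-2}{2}/(R-1)$). Substituting $R = 3D^2$ and using $D \ge 32$ to absorb constants, each of the three factors is bounded below by a concrete constant close to its limit, yielding $\Pr[\mathsf{1coll}] \ge c$ for an explicit $c > 0$.

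Finally, rearranging the displayed inequality and plugging in the lower bound on $\Pr[\mathsf{1coll}]$ together with $R=3D^2$ converts $\tfrac{\pi^2}{3}(q+2)^3/R$ into a bound of the form $C\,(q+2)^3/D^2$ for an explicit constant $C$; working out $C$ under the $D\ge 32$ assumption and rounding up gives the stated $9(q+2)^3/D^2$. No step other than the $\Pr[\mathsf{1coll}]$ estimate is substantive, and that step is pure combinatorics once the reduction is fixed.
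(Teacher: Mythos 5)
Your proposal is correct and takes essentially the same route as the paper: the identical distinguisher $\advD$ (run $\advA_{\oed\srch}$, verify with two extra queries, giving $Q=q+2$), invocation of Zhandry's $(\pi^2/3)Q^3/R$ bound, the inequality $\Pr[\mathsf{1coll}]\cdot\Adv \le (\pi^2/3)(q+2)^3/R$, and the same three-factor union-bound estimate $\Pr[\mathsf{1coll}]\ge \frac{D(D-1)}{2R}\bigl(1-\frac{D-2}{R}\bigr)\bigl(1-\frac{(D-2)(D-3)}{2(R-1)}\bigr)$ followed by the $R=3D^2$, $D\ge 32$ cleanup. No substantive differences.
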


\begin{proof}
  In~\cite{Zhandry15}, Zhandry shows that no $Q$-query algorithm can
  distinguish between a random function and a random injective
  function with domain $[D]$ and codomain $[R]$ with advantage better
  than $(\pi^2/3)Q^3/R$, as long as $ D \le R$. 
  We could build a distinguisher $\advD$ that on input $L:[D]\to[R]$ runs $(x,y)\gets\advA[L]$.
  If $(x,y)$ is a collision (i.e., $x\neq y$ and $L(x)=L(x)$), then $\advD$ outputs 1.
  It checks this by making two additional $L$ queries, so $Q=q+2$.
  Otherwise it outputs zero.
  Clearly, $(x,y)$ cannot be a collision when $L$ is an injection so it will always output zero.
  Hence, if $\mathsf{1coll}$ denotes the event that $L$ contains exactly one collision we obtain a bound of
  \begin{equation}\label{eq:one-coll}
  	\Pr[\mathsf{1coll}]\cdot \Adv^{\ed\srch}_{\dom,3\dom^2}(\advA_{\oed\srch})
  	\leq
  	(\pi^2/3)(q+2)^3/R.
  \end{equation}
  
  It remains to lower bound $\Pr[\mathsf{1coll}]$.
  Note there are $\binom{D}{2}$ possible pairs of inputs that could be collisions.
  Each pair has a $1/R$ chance of colliding.
  By a union bound, the probability any other input has the same output as the collision is at most $(D-2)/R$, so there is at least a $1-(D-2)/R$ probability this does not occur.
  Given the above there are $(D-2)$ inputs sampled from $R-1$ possible values, so by a union bound the probability none of them collide is at least $1-\binom{D-2}{2}/(R-1)$.
  This gives
  \begin{displaymath}
  	\Pr[\mathsf{1coll}]\geq\frac{D(D-1)}{2R}\cdot\left(1-\frac{D-2}{R}\right)\cdot\left(1-\frac{(D-2)(D-3)}{2(R-1)}\right).
  \end{displaymath} 
  Now setting $R=3D^2$ and applying simple bounds (e.g. $D-2<D-1<D$ and $(D-3)/(R-1)<(D-2)/R$) gives,
  \begin{displaymath}
  	\Pr[\mathsf{1coll}]
  	\geq\frac{(1-1/D)}{6}\cdot\left(1-\frac{1}{3D}\right)\cdot\left(1-\frac{1}{6}\right)
  \end{displaymath}
  Plugging this lower bound into equation \ref{eq:one-coll}, re-arranging, applying the bound $D\geq 32$, and rounding up to the nearest whole number gives the claimed bound $\Adv^{\ed\srch}_{\dom,3\dom^2}(\advA_{\oed\srch})
  	\leq 9 (q+2)^3/D^3$.\qed
\end{proof}

\begin{lemma}[$\oed\srch$ hard $\Rightarrow$ $\ld\srch$ hard]\label{lem:lds-hard}
	Let $\dom$ be even.
	If $\advA_{\ld\srch}$ is a quantum algorithm for $\ld\srch$ making at most $q$ queries to its oracle,  then there is an algorithm $\advA_{\oed\srch}$ (described in the proof) such that $\Adv^{\ld\srch}_{\dom,\rng}(\advA_{\ld\srch})\leq 2\Adv^{\oed\srch}_{\dom,\rng}(\advA_{\oed\srch})$.
	Algorithm $\advA_{\oed\srch}$ makes at most $q$ queries to its oracle.
\end{lemma}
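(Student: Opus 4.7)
The plan is to construct $\advA_{\oed\srch}$ by using $\advA_{\ld\srch}$ as a subroutine after a random split of the $\oed$ instance. Concretely, on input oracle $L:[\dom]\to[\rng]$, algorithm $\advA_{\oed\srch}$ samples a uniformly random permutation $\pi:[\dom]\to[\dom]$, defines $L' = L\circ\pi$ (answering $\advA_{\ld\srch}$'s queries to $L'$ with one $L$ query each), and interprets $L'$ as two sublists $L'_0,L'_1:[\dom/2]\to[\rng]$ via $L'_b(i) = L'(i + b\dom/2)$. It then runs $(x,y)\gets\advA_{\ld\srch}[L']$ and outputs $(\pi^{-1}(x),\pi^{-1}(y+\dom/2))$.

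The analysis splits on the structure of $L$. If $L$ is an injection, there is nothing to prove: $\oed\srch$ imposes no requirement on instances without a collision. If $L$ has a unique collision pair $(a,b)$ with $a\neq b$ and $L(a)=L(b)$, let $E$ be the event that $\pi^{-1}(a)$ and $\pi^{-1}(b)$ fall into different halves (one in $[\dom/2]$ and one in $[\dom/2+1..\dom]$). A straightforward count of ordered pairs of distinct positions gives $\Pr[E] = \dom/(2(\dom-1)) \geq 1/2$ since $\dom$ is even and at least $2$. Conditioned on $E$, the pair $(L'_0,L'_1)$ forms a valid $\ld\srch$ instance: both are injective (as $L$ restricted to a subset of its domain is injective outside the collision pair, which is split), and they have exactly one common element, namely $L(a)=L(b)$.

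Furthermore, conditioned on $E$, the joint distribution of $(L'_0,L'_1)$ is independent of where $(a,b)$ were originally placed (the remaining randomness of $\pi$ is just a uniform pair of bijections onto the halves). So $\advA_{\ld\srch}[L']$ succeeds with probability at least $\Adv^{\ld\srch}_{\dom,\rng}(\advA_{\ld\srch})$ by the worst-case definition of the $\ld\srch$ advantage, and when it does, applying $\pi^{-1}$ recovers the collision $(a,b)$ of $L$. Combining, for every $L\in\lang_{\oed}$,
\[
	\Pr[\advA_{\oed\srch}[L] \text{ succeeds}] \geq \Pr[E]\cdot \Adv^{\ld\srch}_{\dom,\rng}(\advA_{\ld\srch}) \geq \tfrac{1}{2}\Adv^{\ld\srch}_{\dom,\rng}(\advA_{\ld\srch}),
\]
which gives $\Adv^{\ld\srch}_{\dom,\rng}(\advA_{\ld\srch}) \leq 2\Adv^{\oed\srch}_{\dom,\rng}(\advA_{\oed\srch})$. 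The query count is preserved because each oracle call of $\advA_{\ld\srch}$ on $L'$ is simulated by a single quantum query to $L$ composed with the classically-computable $\pi$.

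The only subtle point is confirming that the conditional distribution of $(L'_0,L'_1)$ given $E$ is a ``generic'' $\ld\srch$ instance, so that the worst-case $\ld\srch$ advantage lower-bounds $\advA_{\ld\srch}$'s success here — this is immediate because $\Adv^{\ld\srch}$ is defined as a minimum over all valid instances. No other obstacle arises.
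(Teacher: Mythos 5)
Your proof follows the paper's argument exactly: the same construction (sample a random permutation $\pi$ to split the $\oed$ instance, run $\advA_{\ld\srch}$ on $L\circ\pi$), the same bound $\Pr[E]\geq 1/2$ on the collision being split across halves, and the same appeal to the worst-case (min-over-instances) definition of $\Adv^{\ld\srch}$, with yours spelling out the details somewhat more explicitly. One minor slip (which is in fact also present in the paper's prose): with $L' = L\circ\pi$, i.e.\ $L'(i)=L(\pi(i))$, a collision of $L'$ at $(x,y+\dom/2)$ maps to a collision of $L$ at $(\pi(x),\pi(y+\dom/2))$, so the output step should apply $\pi$ rather than $\pi^{-1}$.
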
 

\begin{proof}
	On input a list $L:[\dom]\to[\rng]$, the algorithm $\advA_{\oed\srch}$ will pick a random permutation $\pi:[\dom]\to[\dom]$.
	It runs $\advA_{\ld\srch}[L\circ\pi]$ and then, on receiving output $(x,y)\in[\dom/2]^2$, returns $(\pi^{-1}(x),\pi^{-1}(y+\dom/2))$.
	The permutation $\pi$ serves the role of splitting $L$ into two sublists for $\advA_{\ld\srch}$ at random.
	As long as the original collision of $L$ doesn't end up being put into the same sublist (which has probability less than $1/2$), $\advA_{\ld\srch}$ will be run on a valid $\ld\srch$ instance and $\advA_{\oed\srch}$ will succeed whenever $\advA_{\ld\srch}$ does.
	The claim follows.\qed
\end{proof}

	\begin{figure}[t]
	\ifnum\eprintversion=0
        \renewcommand{\lwid}{0.6}
        \renewcommand{\mwid}{}
        \renewcommand{\rwid}{}
	\else
        \renewcommand{\lwid}{0.42}
        \renewcommand{\mwid}{}
        \renewcommand{\rwid}{}
	\fi
	\oneCol{\lwid}
	{
	\underline{Algorithm $\advA_{\ld\srch}[L_0,L_1]$}\\
	$L'\getsr\Inj(D,R)$; $L'_0\concat L'_1\gets L$\\
	$i\gets 0$; $L_0^0\gets L_0$; $L_1^0\gets L_1$\\
	Repeat\\
	\ind $i\gets i+1$\\
	\ind $L_{0,l}\concat L_{0,r}\gets L_0^{i-1}$\\
	\ind $L_{1,l}\concat L_{1,r}\gets L_1^{i-1}$\\
	\ind $(j^\ast,k^\ast)\gets(r,r)$\\
	\ind For $(j,k)\in\sett{(l,l),(l,r),(r,l)}$ do\\
	\ind\ind\commentt{$f\square g(x) = f(x)$ for $x\in\Dom{f}$ else $g(x)$}
	\ind \ind $b\getsr\advA_{\ld\dc}[L_{0,j}\square L'_0,L_{1,k}\square L'_1]$\\
	\ind \ind If $b=1$ then $(j^\ast,k^\ast)\gets(j,k)$\\
	\ind $L^i_0\gets L_{0,j^\ast}$; $L^i_1\gets L_{1,k^\ast}$\\
	Until $|\Dom{L^i_0}|=|\Dom{L^i_1}|=1$\\
	Pick $(x,y)\in \Dom{L^i_0}\times\Dom{L^i_1}$\\
	Return $(x,y)$
	\smallskip
	}
	\vspace{-2ex}
	\caption{Reduction algorithm $\advA_{\ld\srch}$ for \lemref{lem:ldd-hard}. For notational convenience we write the two lists as separate input, rather than combined into a single list.}
	\label{fig:ldd-hard}
	\hrulefill
	\end{figure}

\begin{lemma}[$\ld\srch$ hard $\Rightarrow$ $\ld\dc$ hard]\label{lem:ldd-hard}
	Let $\dom$ be a power of two.
	If $\advA_{\ld}$ is a quantum algorithm for $\ld$ making at most $q$ queries to its oracle, then there is an $\ld\srch$ algorithm $\advA_{\ld\srch}$ (described in the proof) such that 
	\begin{align*}
		\Adv^{\ld\srch}_{\dom,\rng}(\advA_{{\ld\srch}}) 
		\geq
		1 - D^2/R - 1.5(\lg D-2)(1-\Adv^{\ld\dc}_{\dom,\rng}(\advA_{\ld\dc})).
	\end{align*}
	Algorithm $\advA_{{\ld\srch}}$ makes at most $3q\lg D$ queries to its oracle.
\end{lemma}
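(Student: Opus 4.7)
The plan is to verify that the binary-search reduction $\advA_{\ld\srch}$ displayed in Figure~\ref{fig:ldd-hard} succeeds whenever two good events both hold: (i) the random padding injection $L'$ introduces no spurious collisions with the input lists $L_0, L_1$, and (ii) every call to $\advA_{\ld\dc}$ along the execution path returns the correct decision. Under both events, the unique claw $(x^*, y^*)$ of $(L_0, L_1)$ is confined to exactly one of the four sub-pairs $(L_{0,j}, L_{1,k})$ for $(j,k) \in \{l,r\}^2$ at each iteration; testing three of them with $\advA_{\ld\dc}$ either finds the one containing the claw or, by elimination, confirms that it must be $(r,r)$. Because each iteration halves the sublist domains, after $\lg(D/2)$ rounds the domains collapse to singletons, and the returned pair is the correct witness.

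The probability analysis then decomposes cleanly. For event (i), the images of $L_0$ and $L_1$ together contain at most $D$ range values, while $L'$ contributes $D$ independently sampled injective range values; a union bound over the $D \cdot D$ pairs that could collide gives a failure probability of at most $D^2/R$. For event (ii), the outer loop performs at most $\lg D - 2$ iterations and each iteration invokes $\advA_{\ld\dc}$ three times, and a decision algorithm with advantage $\alpha$ errs on any fixed instance with probability at most $(1-\alpha)/2$; a union bound over all $3(\lg D - 2)$ invocations yields total error at most $1.5(\lg D - 2)(1 - \Adv^{\ld\dc}_{\dom,\rng}(\advA_{\ld\dc}))$. A final union bound combining these two failure probabilities gives the claimed lower bound on $\Adv^{\ld\srch}_{\dom,\rng}(\advA_{\ld\srch})$.

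The main subtle point, and the step most in need of care, is checking that the padded lists handed to $\advA_{\ld\dc}$ are valid $\ld$-instances in $\bigi$: both must be injective and together they must contain at most one common element. This is precisely what event (i) guarantees. Conditional on the image of $L'$ being disjoint from both $L_0$'s and $L_1$'s images, and since $L'$ is itself an injection (so $L'_0$ and $L'_1$ are injective with disjoint images), each concatenated list remains injective, and the two concatenated lists share common elements only on positions where $L_{0,j}$ and $L_{1,k}$ themselves share range values. By the promise on $(L_0, L_1)$ this intersection is contained in $\{(x^*, y^*)\}$, so the padded instance satisfies the at-most-one-claw promise required for $\advA_{\ld\dc}$'s advantage guarantee to apply.

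Finally, the query-complexity bound follows immediately: each of the at most $3 \lg D$ invocations of $\advA_{\ld\dc}$ costs $q$ oracle queries, and implementing the padding $L_{b,j} \square L'_b$ only adds a constant overhead per oracle query (a comparison of the query index against $\Dom{L_{b,j}}$ and a lookup in the locally stored $L'$), which does not increase the oracle-query count in $L$.
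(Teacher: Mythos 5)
Your proposal is correct and follows essentially the same approach as the paper's proof: the binary-search reduction with random-injection padding, a union bound of $D^2/R$ for padding collisions, and a union bound of $1.5(\lg D - 2)(1-\Adv^{\ld\dc})$ over the $3(\lg D-2)$ decision calls each erring with probability at most $(1-\Adv^{\ld\dc})/2$. Your third paragraph, checking that the padded sublists remain valid instances satisfying the $\ld$ promise (injective, at most one common element), spells out a validity point that the paper states only in passing.
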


This theorem's bound is vacuous if $\Adv^{\ld\dc}_{\dom,\rng}(\advA_{\ld\dc})$ is too small.
When applying the result we will have obtained $\advA_{\ld\dc}$ by amplifying the advantage of another adversary to ensure it is sufficiently large.

\begin{proof}
	The algorithm $\advA_{\ld\srch}$ is given in \figref{fig:ldd-hard}.
	The intuition behind this algorithm is as follows.
	It wants to use the decision algorithm $\advA_{\ld\dc}$ to perform a binary search to find the overlap between $L_0$ and $L_0$.
	It runs for $\lg D/2 - 1$ rounds.
	In the $i$-th round, it splits the current left list $L^{i-1}_0$ into two sublists $L_{0,l}$, $L_{0,r}$ and the current right lists $L^{i-1}_1$ into two sublists $L_{1,l}$, $L_{1,r}$.\footnote{In code, $f\concat g \gets h$ for $h$ with domain $\Dom{h}=\sett{n,n+1,\dots,m}$ defines $f$ to be the restriction of $h$ to domain $\Dom{f}=\sett{n,n+1,\dots,\lfloor(n+m)/2\rfloor}$ and $g$ to be the restriction of $h$ to domain $\Dom{g}=\Dom{h}\setminus\Dom{f}$.}
	If $L^{i-1}_0$ and $L^{i-1}_1$ have an element in common, then on of the pairs of sublists $L_{0,j}$ and $L_{1,k}$ for $j,k\in\sett{l,r}$ must have an element in common.
	The decision algorithm $\advA_{\ld\dc}$ is run on different pairs to determine which contains the overlap.
	We recurse with chosen pair, until we are left with lists that have singleton domains at which point we presume the entries therein give the overlap. 
	
	Because we need to run the decision algorithm $\advA_{\ld\dc}$ on fixed size inputs we pad the sublists to be of a fixed size using lists $L'_0$ and $L'_1$ (the two halves of an injection $L'$ that we sampled locally).
	As long as the image of $L'$ does not overlap with the images of $L_0$ and $L_1$, this padding does not introduce any additional elements repetitions between or within the list input to $\advA_{\ld\dc}$.
	By a union bound, overlaps occurs with probability at most $D^2/R$.

	Conditioned on such overlaps not occurring, $\advA_{\ld\srch}$ will output the correct result if $\advA'_{\ld\dc}$ always answers correctly.
	To bound the probability of $\advA'_{\ld\dc}$ erring we can
        note that it is run $3\cdot(\lg D-2)$ times and, each time,
        has at most a
        $(1-\Adv^{\ld\dc}_{\dom,\rng}(\advA'_{\ld\dc}))/2$ chance of error and
        apply a union bound.
        
	Put together we have
	\begin{align*}
		\Adv^{\ld\srch}_{\dom,\rng}(\advA_{{\ld\srch}}) 
		\geq
		1 - D^2/R - 1.5(\lg D-2)(1-\Adv^{\ld\dc}_{\dom,\rng}(\advA_{\ld\dc}).
	\end{align*}
	That $\advA_{{\ld\srch}}$ makes at most $3q(\lg D-2)$ oracle queries is clear.\qed
\end{proof}

\begin{lemma}[$\pp\dc$ hard $\Rightarrow$ $\pp$ hard]\label{lem:dc-to-cry}
	Let $\pp$ be any problem.
	Suppose $\advA_{\pp}$ is a quantum algorithm for $\pp$ making at most $q$ queries to its oracle, with $\Adv^{\pp}_{\dom,\rng}(\advA_{\pp})=\delta>0$.
	Then for any $t\in\N$ there is an algorithm $\advA_{\pp\dc}$ (described in the proof) such that $\Adv^{\pp\dc}_{\dom,\rng}(\advA_{\pp\dc})>1-2/2^t$.
	Algorithm $\advA_{\pp\dc}$ makes $q\cdot\lceil 4.5(t+1)\ln 2/\delta^2 \rceil$ queries to its oracle.
\end{lemma}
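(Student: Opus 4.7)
The plan is to construct $\advA_{\pp\dc}$ as a standard amplification of $\advA_\pp$. For compactness write $p^1 = P^1_{\dom,\rng}(\advA_\pp)$ and $p^0 = P^0_{\dom,\rng}(\advA_\pp)$, so that $p^1 - p^0 = \delta$ by hypothesis. The crucial observation is that, on any instance $L$, the true expected output $\mu_L \eqdef \Pr[\advA_\pp[L] = 1]$ satisfies $\mu_L \ge p^1$ when $L \in \lang$ and $\mu_L \le p^0$ when $L \notin \lang$. Thus a sufficiently accurate estimate of $\mu_L$, compared against the midpoint $p^0 + \delta/2$, correctly decides membership. Note that $p^0$ and $\delta$ are fixed numbers determined by $\advA_\pp$ (non-uniform advice baked into $\advA_{\pp\dc}$), so using them as part of the threshold is legitimate.

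Concretely, set $n = \lceil 4.5(t+1)\ln 2 / \delta^2 \rceil$. On input $L$, algorithm $\advA_{\pp\dc}$ runs $n$ independent copies of $\advA_\pp[L]$ (each using $q$ oracle queries, for a total of $nq$ queries), measures the output bit of each copy, forms the empirical mean $\bar p$ of these $n$ classical bits, and outputs $1$ if $\bar p \ge p^0 + \delta/2$ and $0$ otherwise. The copies are independent quantum executions whose outputs are independent $\{0,1\}$-valued random variables, so Hoeffding's inequality gives
\begin{displaymath}
  \Pr\bigl[\,|\bar p - \mu_L| > \epsilon\,\bigr] < 2 e^{-2n\epsilon^2}
\end{displaymath}
for every $\epsilon > 0$.

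The final step is a choice of $\epsilon$. Take $\epsilon = \delta/3$, which is strictly less than the distance $\delta/2$ from the threshold $p^0 + \delta/2$ to either of $p^1$ and $p^0$. Then whenever $|\bar p - \mu_L| \le \delta/3$, the algorithm outputs the correct bit: for $L \in \lang$ we have $\bar p \ge \mu_L - \delta/3 \ge p^1 - \delta/3 > p^0 + \delta/2$, and for $L \notin \lang$ we have $\bar p \le \mu_L + \delta/3 \le p^0 + \delta/3 < p^0 + \delta/2$. Hence the error probability on every instance is at most $2 e^{-2n\delta^2/9}$, and the choice of $n$ makes this $\le 2^{-t}$. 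Converting to advantage, $E_{\dom,\rng}(\advA_{\pp\dc}) < 2^{-t}$ and so $\Adv^{\pp\dc}_{\dom,\rng}(\advA_{\pp\dc}) = 1 - 2 E_{\dom,\rng}(\advA_{\pp\dc}) > 1 - 2/2^t$, as desired.

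There is no real obstacle here: the reduction is purely classical post-processing of independent quantum executions, so nothing quantum-specific needs to be handled. The only care required is (i) using the worst-case values $p^1, p^0$ (not the instance-specific $\mu_L$) to define the threshold, and (ii) tracking constants so that the slack $\delta/3 < \delta/2$ is preserved, which pins down the constant $4.5$ in the query bound.
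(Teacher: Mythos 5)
Your proof is correct and follows essentially the same approach as the paper: run $n = \lceil 4.5(t+1)\ln 2/\delta^2\rceil$ independent copies, take the empirical mean, compare against the threshold $p^0 + \delta/2$, and apply Hoeffding's inequality with deviation parameter $\delta/3$. If anything you state the argument slightly more carefully than the paper (which writes $p^1 \leq p$ where it means $p^1 \leq p'$, the expectation), but the substance is identical.
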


\begin{proof}
For compactness, let $p^1=P_{\dom,\rng}^1(\advA_{\pp})$ denote the minimum probability that $\advA_{\pp}$ outputs $1$ on instances in the language, $p^0=P_{\dom,\rng}^0(\advA_{\pp})$ denote the maximum probability that $\advA_{\pp}$ outputs $1$ on instances not in the language, and $\delta=\Adv^{\pp}_{\dom,\rng}(\advA_{\pp})=p^1-p^0$.
We define $\advA_{\pp\dc}$ to be the algorithm that, on input $L$, runs $n$ independent copies of $\advA_{\pp}[L]$ (with $n = \lceil 4.5(t+1)*\ln 2/\delta^2 \rceil$) and calculates the average $p$ of all the values output by $\advA_{\pp}[L]$. (Think of this as an estimate of the probability $\advA_{\pp}$ outputs 1 on this input.) If $p < p^0 + \delta/2$, $\advA_{\pp\dc}$ outputs 0, otherwise it outputs 1.

Let $X_i$ denote the output of the $i$-th execution of $\advA_{\pp\dc}$.
An inequality of Hoeffding~\cite{hoeffding} bounds how far the average of independent random variables $0\leq X_i\leq 1$ can differ from the expectation by 
\begin{align*}
	\Pr[\left|\sum_{i=1}^n X_i/n - \Exp{\sum_{i=1}^n X_i/n}\right| > \epsilon]<2e^{-2\epsilon^2n}
\end{align*}
for any $\epsilon>0$.
Let $p'$ denote the expected value of $p$ (which is also the expected value of $X_i$). 
If $L$ is in the language, then $p^1\leq p$.
Otherwise $p^0\geq p$.
In either case, $\advA_{\pp\dc}$ will output the correct answer if $p$ does not differ from $p'$ by more than $\delta/3$ (because this is strictly less than $\delta/2$).
Applying the above inequality tells us that $\Pr[|p - p'|>\delta/3]<2e^{-2\delta^2n/9}\leq 2^{-t}$.
(The value of $n$ was chosen to make this last inequality hold.)

Hence $P_{\dom,\rng}^1(\advA_{\pp\dc})>1-2^{-t}$, $P_{\dom,\rng}^0(\advA_{\pp\dc})<2^{-t}$, and $\Adv^{\pp\dc}_{\dom,\rng}(\advA_{\pp\dc})>1-2/2^{t}$.
The bound on $\advA_{\pp\dc}$'s number of queries is clear.\qed
\end{proof}
\fi

	\fi
\else
	
\fi

\end{document}